\def\singlespace{\def\baselinestretch{1}\@normalsize}
\def\singlespace{\def\baselinestretch{1}\@normalsize}
\numberwithin{equation}{section}
\renewcommand{\hat}{\widehat}
\newcommand{\bfm}[1]{\ensuremath{\mathbf{#1}}}
     \def\bA{\bfm A}          
\def\bb{\bfm b}     \def\bB{\bfm B}          
     \def\bD{\bfm D}     \def\cD{{\cal  D}}     
\def\be{\bfm e}     \def\bE{\bfm E}          
\def\bff{\bfm f}              
     \def\bG{\bfm G}          
     \def\bH{\bfm H}          
     \def\bI{\bfm I}
     \def\bM{\bfm M}          
          \def\cO{{\cal  O}}     
     \def\bP{\bfm P}          
     \def\bQ{\bfm Q}          
     \def\bS{\bfm S}          
\def\bu{\bfm u}     \def\bU{\bfm U}          
\def\bv{\bfm v}     \def\bV{\bfm V}          
\def\bw{\bfm w}     \def\bW{\bfm W}          
\def\bx{\bfm x}     \def\bX{\bfm X}          
     \def\bY{\bfm Y}          
     \def\bZ{\bfm Z}          
\def\bzero{\bfm 0}
\newcommand{\bfsym}[1]{\ensuremath{\boldsymbol{#1}}}
       \def \bbeta    {\bfsym{\beta}}
\def \bTheta   {\bfsym{\Theta}}       \def \bLambda  {\bfsym{\Lambda}}
\def \bSigma   {\bfsym{\Sigma}}
\renewcommand{\hat}{\widehat}
\def \heps     {\hat{\heps}}
\DeclareMathOperator*{\argmin}{argmin}
\DeclareMathOperator{\Cov}{Cov}
\DeclareMathOperator{\diag}{diag}
\DeclareMathOperator{\E}{E}
\DeclareMathOperator{\rank}{rank}
\DeclareMathOperator{\sgn}{sgn}
\def \Cov   {\mbox{Cov}}
\def \sgn   {\mbox{sgn}}
\def\today{\ifcase\month\or
  January\or February\or March\or April\or May\or June\or
  July\or August\or September\or October\or November\or December\fi
  \space\number\day, \number\year}
\newdimen\biblioindent\biblioindent=30pt
\newcommand{\beq}  {\begin{equation}}
\newcommand{\eeq}  {\end{equation}}
\newcommand{\beqn} {\begin{eqnarray}}
\newcommand{\eeqn} {\end{eqnarray}}
\newcommand{\beqnn}{\begin{eqnarray*}}
\newcommand{\eeqnn}{\end{eqnarray*}}
\renewcommand{\baselinestretch}{1.33}
\newtheorem{lem}{Lemma}
\newtheorem{thm}{Theorem}
\newcounter{CondCounter}
\newcommand{\ltwonorm}[1]{\lVert#1\rVert_2}
\newcommand{\opnorm}[1]{\lVert#1\rVert_{2}}
\newcommand{\fnorm}[1]{\lVert#1\rVert_F}
\def \col {\text{Col}}
\def \RR	{\mathbb{R}}
\def\R{{\mathbb R}}
\def\E{{\mathbb E}}
\def\S{{\mathbb S}}
\def\P{{\mathbb P}}
\def\diag{{\rm diag}}
\def\Tr{{\rm Tr}}
\def\O{{\cal{O}}}
\def\argmax{{\rm argmax}}
\def\argmin{{\rm argmin}}
\numberwithin{equation}{section}
\theoremstyle{plain}
\newtheorem{defn}{Definition}[section]
\theoremstyle{definition}
\date{}
\begin{document}

	\title{Distributed estimation of principal eigenspaces\thanks{
    The research was supported by NSF grants DMS-1662139 and DMS-1712591 and NIH grant R01-GM072611-12. }}
	\author{Jianqing Fan, Dong Wang, Kaizheng Wang, Ziwei Zhu\\ \normalsize Department of Operations Research and Financial Engineering\\ \normalsize Princeton University}
	
	\maketitle

\begin{abstract}
Principal component analysis (PCA) is fundamental to statistical machine learning.  It extracts latent principal factors that contribute to the most variation of the data. When data are stored across multiple machines, however, communication cost can prohibit the computation of PCA in a central location and distributed algorithms for PCA are thus needed.  This paper proposes and studies a distributed PCA algorithm: each node machine computes the top $K$ eigenvectors and transmits them to the central server; the central server then aggregates the information from all the node machines and conducts a PCA based on the aggregated information.  We investigate the bias and variance for the resulting distributed estimator of the top $K$ eigenvectors.  In particular, we show that for distributions with symmetric innovation, the empirical top eigenspaces are unbiased and hence the distributed PCA is ``unbiased''.  We derive the rate of convergence for distributed PCA estimators, which depends explicitly on the effective rank of covariance, eigen-gap, and the number of machines. We show that when the number of machines is not unreasonably large, the distributed PCA performs as well as the whole sample PCA, even without full access of whole data.  The theoretical results are verified by an extensive simulation study. We also extend our analysis to the heterogeneous case where the population covariance matrices are different across local machines but share similar top eigen-structures.
\end{abstract}

{\bf Keywords}: Distributed Learning, PCA, One-shot Approach, Communication Efficiency, Unbiasedness of Empirical Eigenspaces, Heterogeneity.

	\section{Introduction}
	
	Principal component analysis (PCA) \citep{Pea01, Hot33} is one of the most fundamental tools in statistical machine learning. The past century has witnessed great efforts on establishing consistency and asymptotic distribution of empirical eigenvalues and eigenvectors. The early classical work of \cite{And63} studied the asymptotic normality of eigenvalues and eigenvectors of sample covariances from multivariate Gaussian distribution with dimension $d$ fixed and sample size $n$ going to infinity. Recent focus moves on to the high-dimensional regimes, i.e., both $n$ and $d$ go to infinity. A partial list of such literatures are \cite{Joh01, BBP05, Pau07, JLu12, JMa09, Ona12, SSZ16, WFa17}. As demonstrated by these papers, asymptotic behaviors of empirical eigenvalues and eigenvectors depend on the scaling of $n, d$ and also the spikiness of the covariance. When $n \ll d$, the empirical leading eigenvector $\hat\bv_1$ is inconsistent in estimating the true top eigenvector $\bv_1$ unless the top eigenvalue $\lambda_1$ diverges fast. This phenomenon inspires another line of research on sparse PCA where certain sparsity on top eigenvectors is imposed to overcome the noise accumulation due to high dimensionality; see e.g., \cite{JLu12, VLe13, SSM13, CMW13}. Besides the asymptotic study, there are also non-asymptotic results on PCA, for example, \cite{Nad08} and \cite{RWa16}.
	
	With rapid developments of information and technology, massive datasets are now ubiquitous. Statistical analysis such as regression  or PCA on such enormous data is unprecedentedly desirable. However, large datasets are usually scattered across distant places such that to fuse or aggregate them is extremely difficult due to communication cost, privacy, data security and ownerships, among others. Consider giant IT companies that collect data simultaneously from places all around the world. Constraints on communication budget and network bandwidth make it nearly impossible to aggregate and maintain global data in a single data center.  Another example is that health records are scattered across many hospitals or countries. It is hard to process the data in a central location due to privacy and ownership concerns.  To resolve these issues, efforts have been made to exploiting distributed computing architectures and developing distributed estimators or testing statistics based on data scattered around different locations. A typical distributed statistical method first calculates local statistics based on each sub-dataset and then combines all the subsample-based statistics to produce an aggregated statistic. Such distributed methods fully adapt to the parallel data collection procedures and thus significantly reduce the communication cost. Many distributed regression methods follow this fashion \citep{ZDW13, CXi14, BFL15, LSL15,BMu17,GLZ17}. The last two papers study distributed kernel regression with spectral regularization using eigen-decomposition of Gram matrices, which is relevant to but different from our distributed PCA.


	Among all the efforts towards creating accurate and efficient distributed statistical methods, there has been rapid advancement on distributed PCA over the past two decades. Unlike the traditional PCA where we have the complete data matrix $\bX\in \RR^{N\times d}$ with $d$ features of $N$ samples at one place, the distributed PCA needs to handle data that are partitioned and stored across multiple servers. There are two data partition regimes: ``horizontal" and ``vertical". In the horizontal partition regime, each server contains all the features of a subset of subjects, while in the vertical partition regime, each server has a subset of features of all the subjects. To conduct distributed PCA in the horizontal regime, \cite{QOS02} proposes that each server computes several top eigenvalues and eigenvectors on its local data and then sends them to the central server that aggregates the information together. Yet there is no theoretical guarantee on the approximation error of the proposed algorithm.  \cite{LBK14}, \cite{KVW14} and \cite{BWZ15} aim to find a good rank-$K$ approximation $\hat{\bX}$ of $\bX$. To assess the approximation quality, they compare $\|\hat{\bX}-\bX\|_F$ against $\min_{rank(\bB)\leq K}\|\bB-\bX\|_F$ and study the excess risk. 
For the distributed PCA in the vertical data partition regime, there is also a great amount of literature, for example, \cite{KHS01}, \cite{LSM11}, \cite{BMo14}, \cite{SAd15}, etc. This line of research is often motivated from sensor networks and signal processing where the vertically partitioned data are common. Our work focuses on the horizontal partition regime, i.e., we have partitions over the samples rather than the features.

Despite these achievements, very few papers establish rigorous statistical error analysis of the proposed distributed PCA methods. To our best knowledge, the only works that provide statistical analysis so far are \cite{EdA10} and \cite{CCH16}. 
To estimate the leading singular vectors of a large target matrix, both papers propose to aggregate singular vectors of multiple random approximations of the original matrix. \cite{EdA10} adopts sparse approximation of the matrix by sampling the entries, while \cite{CCH16} uses Gaussian random sketches. The works are related to ours, since we can perceive sub-datasets in the distributed PCA problem as random approximations.
However, our analysis is more general, since it does not rely on any matrix incoherence assumption as required by \cite{EdA10} and it explicitly characterizes how the probability distribution affects the final statistical error in finite sample error bounds. Besides, our aggregation algorithm is much simpler than the one in \cite{CCH16}. The manuscript \cite{GSS17} came out after we submitted the first draft of our work. The authors focused on estimation of the first principal component rather than the multi-dimensional eigenspaces, based on very different approaches.
	
	We propose a distributed algorithm with only one-shot communication to solve for the top $K$ eigenvectors of the population covariance matrix $\bSigma$ when samples are scattered across $m$ servers. We first calculate for each subset of data $\ell$ its top $K$ eigenvectors $\{\hat{\bV}_K^{(\ell)}= (\hat\bv^{(\ell)}_1, \cdots, \hat\bv^{(\ell)}_K)\}_{\ell=1}^m$ of the sample covariance matrix there, then compute the average of projection matrices of the eigenspaces $\widetilde \bSigma= (1/m)\sum\limits_{i=1}^m \hat{\bV}_K^{(\ell)} \hat{\bV}_K^{(\ell)^T}$, and finally take the top $K$ eigenvectors of $\widetilde\bSigma$ as the final estimator $\widetilde\bV_K=(\tilde{\bv}^{(\ell)}_1, \cdots, \tilde{\bv}^{(\ell)}_K)$. The communication cost of this method is of order $O(mKd)$. We establish rigorous non-asymptotic analysis of the statistical error $\fnorm{\widetilde\bV_K{\widetilde\bV_K}^T- \bV_K\bV_K^T}$, and show that as long as we have a sufficiently large number of samples in each server, $\widetilde\bV_K$ enjoys the same statistical error rate as the standard PCA over the full sample. The eigenvalues of $\bSigma$ are easily estimated once we get good estimators of the eigenvectors, using another round of communication.
%

	The rest of the paper is organized as follows. In Section \ref{sec:2}, we introduce the problem setup of the distributed PCA. In Section \ref{sec:3}, we elucidate our distributed algorithm for estimating the top $K$  eigenvectors. Section \ref{sec:4} develops the statistical error rates of the aggregated estimator.  The results are extended to heterogeneous samples in Section \ref{sec:5}. Finally in Section \ref{sec:6} we present extensive simulation results to validate our theories.
		
	
\section{Problem setup}
\label{sec:2}

We first collect all the notations that will be used. By convention we use regular letters for scalars and bold letters for both matrices and vectors. We denote the set $\{1, 2, 3, ..., d\}$ by $[d]$ for convenience. For two scalar sequences $\{a_n\}_{n \ge 1}$ and $\{b_n\}_{n\ge 1}$, we say $a_n\gtrsim b_n$ ($a_n \lesssim b_n$) if there exists a universal constant $C>0$ such that $a_n\ge Cb_n$ ($a_n\le Cb_n$), and $a_n\asymp b_n$ if both $a_n\gtrsim b_n$ and $a_n\lesssim b_n$ hold. For a random variable $X\in \RR$, we define $\|X\|_{\psi_2}= \sup_{p\ge 1}(\E|X|^p)^{\frac{1}{p}}/\sqrt{p}$ and define $\|X\|_{\psi_1}= \sup_{p\ge 1}(\E|X|^p)^{\frac{1}{p}}/p$. Please refer to \cite{Ver10} for equivalent definitions of $\psi_2$-norm and $\psi_1$-norm. For two random variables $X$ and $Y$, we use $X\overset{d}{=}Y$ to denote that $X$ and $Y$ have identical distributions. Define $\be_i$ to be the unit vector whose components are all zero except that the $i$-th component equals $1$. For $q\geq r$, $\O_{q\times r}$ denotes the space of $q\times r$ matrices with orthonormal columns. For a matrix $\bA\in \RR^{n\times d}$, we use $\fnorm{\bA}$, $\| \bA \|_{*}$ and $\opnorm{\bA}$ to denote the Frobenius norm, nuclear norm and spectral norm of $\bA$, respectively. $\text{Col}(\bA)$ represents the linear space spanned by column vectors of $\bA$. 
We denote the Moore-Penrose pseudo inverse of a matrix $\bA\in \RR^{d \times d}$ by $\bA^{\dag}$. For a symmetric matrix $\bA$, we use $\lambda_j(\bA)$ to refer to its $j$-th largest eigenvalue.

Suppose we have $N$ i.i.d random samples $\{\bX_i  \}_{i=1}^N\subseteq \RR^d$ with $\E \bX_1=\bzero$ and covariance matrix $\E (\bX_1\bX_1^T)= \bSigma$. By spectral decomposition, $\bSigma= \bV\bLambda\bV^T$, where $\bLambda= \diag(\lambda_1, \lambda_2, \cdots, \lambda_d)$ with $\lambda_1 \ge \lambda_2 \ge \cdots \ge \lambda_d$ and $\bV= (\bv_1, \cdots, \bv_d) \in \O_{d\times d}$. For a given $K\in[d]$, let $\bV_K=(\bv_1, \cdots, \bv_K)$. Our goal is to estimate $\col(\bV_K)$, i.e., the linear space spanned by the top $K$ eigenvectors of $\bSigma$. To ensure the identifiability of $\col(\bV_K)$, we assume $\Delta := \lambda_K - \lambda_{K+1}>0$ and define $\kappa := \lambda_1/\Delta$ to be the condition number. Let $r=r(\bSigma):= \Tr(\bSigma)/\lambda_1$ be the effected rank of $\bSigma$.

The standard way of estimating $\col(\bV_K)$ is to use the top $K$ eigenspace of the sample covariance $\hat{\bSigma}=\frac{1}{N}\sum_{i=1}^{N}\bX_i \bX_i^T$. Let $\hat\bSigma= \hat\bV \hat\bLambda {\hat\bV}^T$ be spectral decomposition of $\hat\bSigma$, where $\hat\bLambda= \diag(\hat\lambda_1, \cdots, \hat\lambda_d)$ with $\hat\lambda_1 \ge \cdots \ge \hat\lambda_d$ and $\hat\bV= (\hat\bv_1, \cdots, \hat\bv_d)$. We use the empirical top $K$ engenspace $\col(\hat\bV_K)$, where $\hat\bV_K= (\hat{\bv}_1,\cdots,\hat{\bv}_K)$, to estimate the eigenspace $\col(\bV_K)$. To measure the statistical error, we adopt $\rho(\hat{\bV}_K,\bV_K) :=\|\hat{\bV}_K{\hat{\bV}_K}^T-\bV_K\bV_K^T\|_F$, which is the Frobenius norm of the difference between projection matrices of two spaces and is a well-defined distance between linear subspaces. In fact, $\rho(\bV_K, \hat\bV_K)$ is equivalent to the so-called $\sin\bTheta$ distance. Denote the singular values of $\hat\bV_K^T\bV_K$ by $\{\sigma_i\}_{i=1}^K$ in descending order. Recall that $\bTheta(\hat\bV_K, \bV_K)= \diag(\theta_1, \cdots, \theta_K)$, the principal angles between $\col(\bV_K)$ and $\col(\hat\bV_K)$, are defined as $\diag(\cos^{-1} \sigma_1, \cdots, \cos^{-1}\sigma_K)$. Then we define $\sin\bTheta(\hat\bV_K, \bV_K)$ to be $\diag(\sin\theta_1, \cdots, \sin\theta_K)$. Note that
	\begin{align}
		\rho^2(\bV_K, \hat\bV_K) &
		= \| \bV_K\bV_K^T \|_F^2 + \| \hat\bV_K\hat\bV_K^T \|_F^2 -2\Tr(\bV_K\bV_K^T\hat\bV_K\hat\bV_K^T)
	= 2K- 2\|\hat\bV_K^T\bV_K\|_F^2 \notag	\\&
	= 2\sum\limits_{i=1}^K (1-\sigma_i^2)= 2\sum\limits_{i=1}^K \sin^2\theta_i= 2\|\sin\bTheta(\hat\bV_K, \bV_K)\|_F^2.
		\label{eq:2.1}
	\end{align}
Therefore, $\rho(\bV_K, \hat\bV_K)$ and $\fnorm{\sin\bTheta(\bV_K, \hat\bV_K)}$ are equivalent.


Now consider the estimation of top $K$ eigenspace under the distributed data setting, where our $N=m\cdot n$ samples are scattered across $m$ machines with each machine storing $n$ samples\footnote{Note that here for simplicity we assume the subsample sizes are homogeneous. We can easily extend our analysis to the case of heterogeneous sub-sample sizes with similar theoretical results.}. Application of standard PCA here requires data or covariance aggregation, thus leads to huge communication cost for high-dimensional big data. In addition, for the areas such as genetic, biomedical studies and customer services, it is hard to communicate raw data because of privacy and ownership concerns. To address these problems, we need to avoid naive data aggregation and design a communication-efficient and privacy-preserving distributed algorithm for PCA. In addition, this new algorithm should be statistically accurate in the sense that it enjoys the same statistical error rate as the full sample PCA.

Throughout the paper, we assume that all the random samples $\{\bX_i\}_{i=1}^N$ are i.i.d sub-Gaussian. We adopt the definition of sub-Gaussian random vectors in \cite{KLo17} and \cite{RWa16} as specified below, where $M$ is assumed to be a constant. It is not hard to show that the following definition is equivalent to the definition $\|(\bSigma^{1/2})^{\dag}\bX\|_{\psi_2}\leq M$ used in \cite{Ver10}, \cite{WFa17}, and many other authors.
\begin{defn}
	\label{def:2.1}
	We say the random vector $\bX\in \RR^d$ is sub-Gaussian if there exists $M>0$ such that $\|\bu^T\bX\|_{\psi_2}\leq M \sqrt{\E (\bu^T\bX)^2}$, $\forall \bu\in\R^d$.
\end{defn}

We emphasize here that the global i.i.d assumption on $\{\bX_i\}_{i=1}^N$ can be further relaxed. In fact, our statistical analysis only requires the following three conditions: (i) within each server $\ell$, data are i.i.d.; (ii) across different servers, data are independent; (iii) the covariance matrices of the data in each server $\{\bSigma^{(\ell)}\}_{\ell=1}^m$ share similar top $K$ eigenspaces. We will further study this heterogeneous regime in Section 5. To avoid future confusion, unless specified, we always assume i.i.d. data across servers.

\section{Methodology}
\label{sec:3}

We now introduce our distributed PCA algorithm. For $\ell\in [m]$, let $\{\bX^{(\ell)}_{i}\}_{i=1}^{n}$ denote the samples stored on the $\ell$-th machine. We specify the distributed in Algorithm \ref{algo:1}.

\begin{algorithm}
\caption{Distributed PCA}
\label{algo:1}
\begin{enumerate}
	\item On each server, compute locally the $K$ leading eigenvectors $\hat{\bV}^{(\ell)}_K=(\hat{\bv}^{(\ell)}_{1},\cdots,\hat{\bv}^{(\ell)}_{K})\in\R^{d\times K}$ of the sample covariance matrix $\hat{\bSigma}^{(\ell)}=(1/n)\sum_{i=1}^{n}\bX^{(\ell)}_{i}\bX^{(\ell)^T}_{i}$. Send $\hat{\bV}^{(\ell)}_K$ to the central processor.
	\item On the central processor, compute $\widetilde{\bSigma}=(1/m)\sum_{\ell=1}^{m}\hat{\bV}^{(\ell)}_K\hat{\bV}^{(\ell)^T}_K$, and its $K$ leading eigenvectors $\{\tilde{\bv}_j\}_{j=1}^K$. Output: $\widetilde{\bV}_K=(\widetilde{\bv}_1,\cdots,\widetilde{\bv}_K)\in\R^{d\times K}$.
\end{enumerate}

\end{algorithm}

In other words, each server first calculates the top $K$ eigenvectors of the local sample covariance matrix, and then transmits these eigenvectors $\{\hat\bV_K^{(\ell)}\}_{\ell=1}^m$ to a central server, where the estimators get aggregated. This procedure has similar spirit as distributed estimation based on one-shot averaging in \cite{ZDW13}, \cite{BFL15}, \cite{LSL15}, among others. To see this, we recall the SDP formulation of the eigenvalue problem. Let $\hat{\bV}_K=( \hat{\bv}_1,\cdots, \hat{\bv}_K)$ contain the $K$ leading eigenvectors of $\hat{\bSigma} = \frac{1}{m} \sum_{\ell=1}^{m} \hat\bSigma^{(\ell)}$. Lemma \ref{lem-SDP} in Section 8.2.2 asserts that $\hat\bP_K = \hat{\bV}_K \hat{\bV}_K^T$ solves the SDP:
\begin{equation}\label{SDP-1}
\begin{split}
&\min_{\bP \in S^{d\times d}} -\Tr( \bP^T \hat{\bSigma} )\\
&\text{s.t. } \Tr(\bP)\leq K, \| \bP \|_2 \leq 1, \bP \succeq 0.
\end{split}
\end{equation}
Here $S^{d\times d}$ refers to the set of $d\times d$ symmetric matrices.
In the traditional setting, we have access to all the data, and $\hat\bP_K$ is a natural estimator for $\bV_K \bV_K^T$. In the distributed setting, each machine can only access $\hat\bSigma^{(\ell)}$. Consequently, it solves a local version of \eqref{SDP-1}:
\begin{equation}\label{SDP-2}
\begin{split}
&\min_{\bP \in S^{d\times d}} -\Tr( \bP^T \hat{\bSigma}^{(\ell)} )\\
&\text{s.t. } \Tr(\bP)\leq K, \| \bP \|_2 \leq 1, \bP \succeq 0.
\end{split}
\end{equation}
The optimal solution is $\hat{\bP}_K^{(\ell)} = \hat{\bV}_K^{(\ell)} \hat{\bV}_K^{(\ell) T}$. Since the loss function in \eqref{SDP-1} is the average of local loss functions in \eqref{SDP-2}, we can intuitively average the optimal solutions $\hat{\bP}_K^{(\ell)}$ to approximate $\hat\bP_K$. However, the average $\frac{1}{m} \sum_{\ell=1}^{m} \hat{\bP}_K^{(\ell)}$ may no longer be a rank-$K$ projection matrix. Hence a rounding step is needed, extracting the leading eigenvectors of that average to get a projection matrix.

Here is another way of understanding the aggregation procedure. Given a collection of estimators $\{ \hat{\bV}^{(\ell)}_K \}_{\ell=1}^{m} \subseteq \O_{d\times K}$ and the loss $\rho(\cdot,\cdot)$, we want to find the center $\bU \in \O_{d\times K}$ that minimizes the sum of squared losses $\sum_{\ell=1}^{m} \rho^2( \bU, \hat{\bV}^{(\ell)}_K )$. Lemma \ref{lem-loss} in Section 8.2.2 indicates that $\bU = \widetilde{\bV}_K$ is an optimal solution. Therefore, our distributed PCA estimator $\widetilde{\bV}_K$ is a generalized ``center" of individual estimators.

It is worth noting that in this algorithm, we do not really need to compute $\{\hat{\bSigma}^{(\ell)}\}_{\ell=1}^m$ and $\widetilde{\bSigma}$. $\{\hat\bV_K^{(\ell)}\}_{\ell=1}^m$ and $\widetilde{\bV}_K$ can be derived from top-$K$ SVD of data matrices. This is far more expeditious than the entire SVD and highly scalable, by using, for example, the power method \citep{GVa12}. As regard to the estimation of the top eigenvalues of $\bSigma$, we can send the aggregated eigenvectors $\{\tilde{\bv}_j\}_{j=1}^K$ back to the $m$ servers, where each one computes $\{\lambda_j^{(\ell)}\}_{j=1}^K=\{\tilde{\bv}_j^T\hat{\bSigma}^{(\ell)}\tilde{\bv}_j\}_{j=1}^K$. Then the central server collect all the eigenvalues and deliver the average eigenvalues $\{\tilde{\lambda}_j\}_{j=1}^K=\{ \frac{1}{m}\sum_{\ell=1}^m \lambda_j^{(\ell)} \}_{j=1}^K$ as the estimators of all eigenvalues.

As we can see, the communication cost of the proposed distributed PCA algorithm is of order $O(mKd)$. In contrast, to share all the data or entire covariance, the communication cost will be of order $O(md\min(n, d))$. Since in most cases $K=o(\min(n, d))$, our distributed PCA requires much less communication cost than naive data aggregation.

\section{Statistical error analysis}
\label{sec:4}

Algorithm $\ref{algo:1}$ delivers $\widetilde \bV_K$ to estimate the top $K$ eigenspace of $\bSigma$. In this section we analyze the statistical error  of $\widetilde \bV_K$, i.e., $\rho(\widetilde \bV_K, \bV_K)$. The main message is that $\widetilde\bV_K$ enjoys the same statistical error rate as the full sample counterpart $\hat\bV_K$ as long as the subsample size $n$ is sufficiently large.

We first conduct a bias and variance decomposition of $\rho(\widetilde\bV_K, \bV_K)$, which serves as the key step in establishing our theoretical results. Recall that $\widetilde \bSigma = (1/m) \sum_{\ell=1}^m \hat\bV_K^{(\ell)}\hat\bV_K^{(\ell)T}$ and $\widetilde \bV_K$ consists of the top $K$ eigenvectors of $\widetilde\bSigma$. Define $\bSigma^* := \E (\hat\bV_K^{(\ell)}\hat\bV_K^{(\ell)T})$ and denote its top $K$ eigenvectors by $\bV_K^* = (\bv^*_1, \cdots, \bv^*_K)\in \RR^{d\times K}$. When the number of machines goes to infinity, $\widetilde\bSigma$ converges to $\bSigma^*$, and naturally we expect $\col(\widetilde \bV_K)$ to converge to $\col(\bV^*_K)$ as well. This line of thinking inspires us to decompose the statistical error $\rho(\widetilde \bV_K, \bV_K)$ into the following bias and sample variance terms:
\beq
	\label{eq:4.1}
	\rho(\widetilde\bV_K, \bV_K) \le \underbrace{\rho(\widetilde\bV_K, \bV^*_K)}_{\text{sample variance term}} + \underbrace{\rho(\bV^*_K, \bV_K)}_{\text{bias term}}.
\eeq
The first term is stochastic and the second term is deterministic.
Here we elucidate on why we call $\rho(\widetilde\bV_K, \bV^*_K)$ the sample variance term and $\rho(\bV^*_K, \bV_K)$ the bias term respectively.

\begin{enumerate}
	\item Sample variance term $\rho(\widetilde\bV_K, \bV_K^*)$:
	
	By Davis-Kahan's Theorem (Theorem 2 in \cite{YWS15}) and \eqref{eq:2.1}, we have
\beq
	\label{eq:4.2}
	\rho(\widetilde\bV_K, \bV^*_K) \lesssim \frac{\fnorm{\widetilde\bSigma- \bSigma^*}}{\lambda_K(\bSigma^*)- \lambda_{K+1}(\bSigma^*)}.
\eeq
As we can see, $\rho(\widetilde \bV_K, \bV^*_K)$ depends on how the average $\widetilde\bSigma = \frac{1}{m}\sum\limits_{\ell=1}^m \hat\bV_K^{(\ell)}\hat\bV_K^{(\ell)T}$ concentrates to its mean $\bSigma^*$. This explains why we call $\rho(\widetilde\bV_K, \bV_K^*)$ the sample variance term. We will show in the sequel that for sub-Gaussian random samples, $\{\fnorm{\hat\bV^{(\ell)}_K\hat\bV^{(\ell)T}_K - \bSigma^*} \}_{\ell=1}^m$ and $\fnorm{\widetilde\bSigma- \bSigma^*}$ are sub-exponential random variables and under appropriate regularity assumptions,
\beq
	\label{eq:4.3}
	\left\|\fnorm{\widetilde\bSigma- \bSigma^*} \right\|_{\psi_1} \lesssim \frac{1}{\sqrt{m}} \left\|\fnorm{\hat\bV^{(1)}_K\hat\bV^{(1)T}_K- \bSigma^*} \right\|_{\psi_1}.
\eeq
If we regard $\psi_1$-norm as a proxy for standard deviation, this result is a counterpart to the formula for the standard deviation of the sample mean under the context of matrix concentration.
By \eqref{eq:4.3}, the average of projection matrices $\widetilde\bSigma$ enjoys a similar square-root convergence, so does $\rho(\widetilde\bV_K, \bV^*_K)$.

	\item Bias term $\rho(\bV_K^*, \bV_K)$:
	
The error $\rho(\bV_K^*, \bV_K)$ is deterministic and independent of how many machines we have, and is therefore called  the bias term. We will show this bias term is exactly zero when the random sample has a symmetric innovation (to be defined later).  In general,
we will show that the bias term is negligible in comparison with the sample variance term when the number of nodes $m$ is not unreasonably large.
\end{enumerate}

In the following subsections, we will analyze the sample variance term and bias term respectively and then combine these results to obtain the convergence rate for $\rho(\widetilde\bV_K, \bV_K)$.

\subsection{Analysis of the sample variance term}

%

To analyze $\rho(\widetilde\bV_K, \bV^*_K)$, as shown by \eqref{eq:4.2}, we need to derive the order of the numerator $\fnorm{\widetilde\bSigma- \bSigma^*}$ and denominator $\lambda_K(\bSigma^*)- \lambda_{K+1}(\bSigma^*)$. We first focus on the matrix concentration term $\fnorm{\widetilde\bSigma- \bSigma^*}= \left\| \frac{1}{m}\sum\limits_{\ell=1}^m \Bigl(\hat\bV_K^{(\ell)}\hat\bV_K^{(\ell)T}- \bSigma^* \Bigr) \right\|_F$. Note that $\widetilde \bSigma- \bSigma^*$ is an average of $m$ centered random matrices. To establish the correspondent concentration inequality, we first investigate each individual term in the average, i.e., $\hat\bV_K^{(\ell)}\hat\bV_K^{(\ell)T}- \bSigma^*$ for $\ell\in [m]$. In the following lemma, we show that when random samples are sub-Gaussian, $\fnorm{\hat\bV_K^{(\ell)}\hat\bV_K^{(\ell)T}- \bSigma^*}$ is sub-exponential and we can give an explicit upper bound of its $\psi_1-$norm.

\begin{lem}
	\label{lem:1}
	Suppose that on the $\ell$-th server we have  $n$ i.i.d. sub-Gaussian random samples $\{\bX_i\}_{i=1}^n$ in $\RR^d$ with covariance matrix $\bSigma$.
There exists a constant $C>0$ such that when $n\geq r$,
	\[
		\left\| \fnorm{\hat\bV^{(\ell)}_K\hat\bV^{(\ell)T}_K- \bSigma^*} \right \|_{\psi_1} \le C \kappa \sqrt{\frac{K r}{n}}.
	\]
\end{lem}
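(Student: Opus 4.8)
The plan is to bound $\rho(\hat\bV_K^{(\ell)}, \bV_K)$ first and then transfer this to a bound on $\fnorm{\hat\bV_K^{(\ell)}\hat\bV_K^{(\ell)T} - \bSigma^*}$ using the fact that $\bSigma^* = \E(\hat\bV_K^{(\ell)}\hat\bV_K^{(\ell)T})$ is the mean. The first step is a Davis--Kahan argument: by \eqref{eq:4.2}-type reasoning applied at the server level together with \eqref{eq:2.1}, $\rho(\hat\bV_K^{(\ell)}, \bV_K) \lesssim \fnorm{\hat\bSigma^{(\ell)} - \bSigma}/\Delta$, where $\hat\bSigma^{(\ell)}$ is the local sample covariance and $\Delta = \lambda_K - \lambda_{K+1}$. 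Actually one should be slightly careful: Davis--Kahan gives the bound in terms of $\opnorm{\hat\bSigma^{(\ell)}-\bSigma}$ against the eigengap, but since the perturbation of the top-$K$ eigenspace only has rank effectively $2K$, one gets $\rho(\hat\bV_K^{(\ell)}, \bV_K) \lesssim \sqrt{K}\,\opnorm{\hat\bSigma^{(\ell)}-\bSigma}/\Delta$, or alternatively keep $\fnorm{\hat\bSigma^{(\ell)}-\bSigma}/\Delta$; I would use whichever is cleaner and track constants. The key input is then a sub-Gaussian covariance concentration bound: for $n$ i.i.d. sub-Gaussian samples with covariance $\bSigma$ and effective rank $r$, one has (e.g. by \cite{KLo17} or \cite{Ver10}) a tail bound of the form $\opnorm{\hat\bSigma^{(\ell)}-\bSigma} \lesssim \lambda_1(\sqrt{r/n} + r/n + t + t^2)$ with probability $\ge 1 - e^{-nt}$ (roughly), so that when $n \ge r$ the random variable $\opnorm{\hat\bSigma^{(\ell)}-\bSigma}/\lambda_1$ has $\psi_1$-norm $\lesssim \sqrt{r/n}$.

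Combining these, $\big\|\rho(\hat\bV_K^{(\ell)},\bV_K)\big\|_{\psi_1} \lesssim \sqrt{K}\,\lambda_1/\Delta \cdot \sqrt{r/n} = \kappa\sqrt{Kr/n}$. Next I would pass from $\rho(\hat\bV_K^{(\ell)},\bV_K)$ to $\fnorm{\hat\bV_K^{(\ell)}\hat\bV_K^{(\ell)T}-\bSigma^*}$. Write $\bP^{(\ell)} := \hat\bV_K^{(\ell)}\hat\bV_K^{(\ell)T}$ and $\bP := \bV_K\bV_K^T$; then $\rho(\hat\bV_K^{(\ell)},\bV_K) = \fnorm{\bP^{(\ell)} - \bP}$. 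By the triangle inequality and Jensen (since $\bSigma^* = \E\bP^{(\ell)}$ and $\bP$ is deterministic), $\fnorm{\bP^{(\ell)} - \bSigma^*} \le \fnorm{\bP^{(\ell)} - \bP} + \fnorm{\bP - \bSigma^*}$ and $\fnorm{\bP - \bSigma^*} = \fnorm{\E(\bP - \bP^{(\ell)})} \le \E\fnorm{\bP - \bP^{(\ell)}} \le \|\fnorm{\bP^{(\ell)}-\bP}\|_{\psi_1}$ (up to the constant relating $L^1$ to $\psi_1$). Hence $\|\fnorm{\bP^{(\ell)}-\bSigma^*}\|_{\psi_1} \lesssim \|\fnorm{\bP^{(\ell)}-\bP}\|_{\psi_1} \lesssim \kappa\sqrt{Kr/n}$, which is the claim.

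I expect the main obstacle to be getting the sub-Gaussian covariance concentration with the \emph{exact} dependence on the effective rank $r$ rather than the ambient dimension $d$, and organizing it as a clean $\psi_1$-norm bound on $\opnorm{\hat\bSigma^{(\ell)}-\bSigma}$ valid under just $n \ge r$; this is where one must invoke the sharp results (matrix Bernstein / generic chaining style bounds for sample covariances of sub-Gaussian vectors, as in \cite{KLo17,RWa16}) and carefully convert a high-probability tail into a $\psi_1$ bound. A secondary point to handle with care is the Davis--Kahan step: one wants the $\sqrt{K}$ factor (not $\sqrt{d}$), which follows because $\bP^{(\ell)}-\bP$ has rank at most $2K$, so $\fnorm{\bP^{(\ell)}-\bP} \le \sqrt{2K}\,\opnorm{\bP^{(\ell)}-\bP}$ and $\opnorm{\bP^{(\ell)}-\bP} \lesssim \opnorm{\hat\bSigma^{(\ell)}-\bSigma}/\Delta$ by the operator-norm version of Davis--Kahan. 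Everything else is bookkeeping with universal constants.
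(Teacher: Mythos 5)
Your proposal is correct and follows essentially the same route as the paper's proof: bound $\opnorm{\hat\bSigma^{(\ell)}-\bSigma}$ in $\psi_1$-norm by $\lambda_1\sqrt{r/n}$ via the sub-Gaussian covariance concentration of \cite{KLo17} (the paper packages this as Lemma~\ref{lem-cov-concentration}), feed it through the Frobenius-norm Davis--Kahan bound with the $\sqrt{K}$ factor (the paper cites Theorem~2 of \cite{YWS15}, which gives $\|\sin\bTheta\|_F\lesssim \sqrt{K}\opnorm{\hat\bSigma^{(\ell)}-\bSigma}/\Delta$ directly; your rank-$2K$ argument is an equivalent way to get the same factor), and then pass from $\rho(\hat\bV_K^{(\ell)},\bV_K)$ to $\fnorm{\hat\bV_K^{(\ell)}\hat\bV_K^{(\ell)T}-\bSigma^*}$ by the triangle inequality and Jensen. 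The only small inaccuracy is your stated tail-bound form (``probability $\ge 1-e^{-nt}$'' should be $1-e^{-t}$ with $\sqrt{t/n}$ and $t/n$ terms), but it does not affect the argument.
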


Note that here we use the Frobenius norm to measure the distance between two matrices. Therefore, it is equivalent to treat $\{\hat\bV^{(\ell)}_K\hat\bV^{(\ell)T}_K\}_{\ell=1}^K$ and $\bSigma^*$ as $d^2-$dimensional vectors and apply the concentration inequality for random vectors to bound $\fnorm{\widetilde\bSigma- \bSigma^*}$. As we will demonstrate in the proof of Theorem \ref{thm:1}, $\left \| \fnorm{\widetilde\bSigma- \bSigma^*} \right\|_{\psi_1} \lesssim \frac{1}{\sqrt{m}} \left\| \fnorm{\hat\bV^{(\ell)}_K\hat\bV^{(\ell)T}_K- \bSigma^*} \right\|_{\psi_1} $.

With regard to $\lambda_K(\bSigma^*)- \lambda_{K+1}(\bSigma^*)$, when the individual node has enough samples, $\hat\bV^{(\ell)}_K$ and $\bV_K$ will be close to each other and so will $\bSigma^*= \E( \hat\bV^{(\ell)}_K \hat \bV^{(\ell)T}_K)$ and $\bV_K\bV_K^T$. Given $\lambda_K(\bV_K\bV_K^T)=1$ and $\lambda_{K+1}(\bV_K\bV_K^T)=0$, we accordingly expect $\lambda_K(\bSigma^*)$ and $\lambda_{K+1}(\bSigma^*)$ be separated by a positive constant as well.

All the arguments above lead to the following theorem on $\rho(\widetilde\bV_K, \bV^*_K)$.

\begin{thm}
	\label{thm:1}
	Suppose $\bX_1, \cdots, \bX_N$ are i.i.d. sub-Gaussian random vectors in $\RR^d$ with covariance matrix $\bSigma$ and they are scattered across $m$ machines. If $n\geq r$ and $\| \bSigma^* -\bV_K \bV_K^T\|_2 \le 1/4$, then
	\[
		\left\| \rho(\widetilde\bV_K, \bV_K^*) \right\|_{\psi_1} \le C\kappa \sqrt{\frac{Kr}{N}},
	\]
	where $C$ is some universal constant.
\end{thm}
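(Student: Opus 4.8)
The plan is to combine the Davis--Kahan bound \eqref{eq:4.2} with a sub-exponential concentration inequality for the averaged matrix $\widetilde\bSigma - \bSigma^*$, using Lemma \ref{lem:1} to control each summand. First I would write $\rho(\widetilde\bV_K, \bV_K^*) \lesssim \fnorm{\widetilde\bSigma - \bSigma^*}/\bigl(\lambda_K(\bSigma^*) - \lambda_{K+1}(\bSigma^*)\bigr)$ via \eqref{eq:4.2}. The denominator is handled by Weyl's inequality: since $\lambda_K(\bV_K\bV_K^T) = 1$ and $\lambda_{K+1}(\bV_K\bV_K^T) = 0$, the hypothesis $\|\bSigma^* - \bV_K\bV_K^T\|_2 \le 1/4$ gives $\lambda_K(\bSigma^*) \ge 3/4$ and $\lambda_{K+1}(\bSigma^*) \le 1/4$, hence the eigen-gap of $\bSigma^*$ is at least $1/2$, a universal constant. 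So it remains to bound $\bigl\| \fnorm{\widetilde\bSigma - \bSigma^*} \bigr\|_{\psi_1}$.

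For the numerator, I would view each $\hat\bV_K^{(\ell)}\hat\bV_K^{(\ell)T} - \bSigma^*$ as a centered random element of $\RR^{d^2}$ (identifying matrices with vectors under the Frobenius inner product), so that $\widetilde\bSigma - \bSigma^* = \frac{1}{m}\sum_{\ell=1}^m \bZ^{(\ell)}$ with the $\bZ^{(\ell)}$ i.i.d., mean zero, and $\| \ltwonorm{\bZ^{(\ell)}} \|_{\psi_1} = \bigl\| \fnorm{\hat\bV_K^{(\ell)}\hat\bV_K^{(\ell)T} - \bSigma^*} \bigr\|_{\psi_1} \le C\kappa\sqrt{Kr/n}$ by Lemma \ref{lem:1}. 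Then I invoke a Bernstein-type bound for sums of independent sub-exponential random vectors (e.g. the vector Bernstein inequality, as in \cite{Ver10}): for such a sum one has $\bigl\| \ltwonorm{\frac{1}{m}\sum_\ell \bZ^{(\ell)}} \bigr\|_{\psi_1} \lesssim \frac{1}{\sqrt m} \| \ltwonorm{\bZ^{(1)}} \|_{\psi_1}$, which is exactly \eqref{eq:4.3}. Chaining everything together and using $N = mn$,
\[
	\bigl\| \rho(\widetilde\bV_K, \bV_K^*) \bigr\|_{\psi_1} \lesssim \bigl\| \fnorm{\widetilde\bSigma - \bSigma^*} \bigr\|_{\psi_1} \lesssim \frac{1}{\sqrt m}\cdot C\kappa\sqrt{\frac{Kr}{n}} = C\kappa\sqrt{\frac{Kr}{mn}} = C\kappa\sqrt{\frac{Kr}{N}},
\]
which is the claimed bound.

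The main obstacle is the vector concentration step \eqref{eq:4.3}: turning the per-summand $\psi_1$-bound of Lemma \ref{lem:1} into a $1/\sqrt m$ improvement for the norm of the average. A naive triangle-inequality bound would only give $\bigl\| \fnorm{\widetilde\bSigma - \bSigma^*} \bigr\|_{\psi_1} \le \| \fnorm{\bZ^{(1)}} \|_{\psi_1}$ with no gain in $m$, so one genuinely needs a Bernstein-type argument that exploits independence and mean zero. The care here lies in controlling both the "sub-Gaussian regime" (small deviations, scaling like $\sigma/\sqrt m$ where $\sigma^2 = \E\fnorm{\bZ^{(1)}}^2 \le \| \fnorm{\bZ^{(1)}} \|_{\psi_1}^2$) and the "sub-exponential tail regime" (large deviations, scaling like $\| \fnorm{\bZ^{(1)}} \|_{\psi_1}/m$), and then integrating the resulting tail bound to recover a clean $\psi_1$-norm estimate; the second regime contributes a lower-order term absorbed into the constant once $n \ge r$ (and, implicitly, $m$ is not exponentially large, though the $\psi_1$ formulation sidesteps needing an explicit constraint here). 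A secondary point to verify is that the Davis--Kahan application in \eqref{eq:4.2} is legitimate, i.e. that the gap in its denominator is the gap of $\bSigma^*$ (the population-like matrix) rather than of the random $\widetilde\bSigma$; this is precisely why the condition $\|\bSigma^* - \bV_K\bV_K^T\|_2 \le 1/4$ is imposed as a hypothesis rather than derived, and it will be discharged separately (in terms of $n \gtrsim \kappa^2 r$) when Theorem \ref{thm:1} is combined with the bias analysis.
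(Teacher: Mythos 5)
Your proposal follows essentially the same route as the paper: Davis--Kahan with Weyl's inequality to show the eigen-gap of $\bSigma^*$ is at least $1/2$ under the hypothesis, then vectorize $\hat\bV_K^{(\ell)}\hat\bV_K^{(\ell)T}-\bSigma^*$ and apply a Hilbert-space/vector Bernstein inequality together with Lemma~\ref{lem:1} to obtain the $1/\sqrt{m}$ gain. The paper packages exactly this Bernstein step as Lemma~\ref{lem-Bosq} (a consequence of Theorem~2.5 in \cite{Bos00}), which delivers the clean $\psi_1$-norm bound $\bigl\|\fnorm{\widetilde\bSigma-\bSigma^*}\bigr\|_{\psi_1}\lesssim m^{-1/2}\bigl\|\fnorm{\hat\bV_K^{(1)}\hat\bV_K^{(1)T}-\bSigma^*}\bigr\|_{\psi_1}$ with no hidden constraint on $m$, confirming your parenthetical remark that the $\psi_1$ formulation sidesteps the large-$m$ issue.
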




\subsection{Analysis of the bias term}
In this section, we study the bias term $\rho(\bV^*_K, \bV_K)$ in \eqref{eq:4.1}. We first focus on a special case where the bias term is exactly zero. For a random vector $\bX$ with covariance $\bSigma=\bV \bLambda \bV^T$, let $\bZ= \bLambda^{-\frac{1}{2}}\bV^T\bX$. We say $\bX$ has symmetric innovation if $\bZ \overset{d}{=}(\bI_d-2\be_j\be_j^T)\bZ,~~\forall j\in[d]$. In other words, flipping the sign of one component of $\bZ$ will not change the distribution of $\bZ$. Note that if $\bZ$ has density, this is equivalent to say that its density function has the form $p( |z_1|, |z_2|,\cdots, |z_d| )$. All elliptical distributions centered at the origin belong to this family. In addition, if $\bZ$ has symmetric and independent entries, $\bX$ has also symmetric innovation. It turns out that when the random samples have symmetric innovation, $\bSigma^*:= \E (\hat\bV^{(\ell)}_K\hat\bV^{(\ell)T}_K)$ and $\bSigma$ share exactly the same set of eigenvectors. When we were finishing the paper, we noticed that \cite{CCH16} had independently established a similar result for the Gaussian case.

\begin{defn}\label{defn-unbiasedness}
Let $\mathcal{V}$ be a $K$-dimensional linear subspace of $\R^d$. For a subspace estimator represented by $\hat\bV \in \O_{d\times K}$, we say it is {\bf unbiased} for $\mathcal{V}$ if and only if the top $K$ eigenspace of $\E (\hat\bV\hat\bV^T)$ is $\mathcal{V}$.
\end{defn}

If $\hat{\bV}_K^{(\ell)}$ is unbiased for $\col(\bV_K)$, then $\rho(\bV^*_K, \bV_K) \allowbreak =0$ and we will only have the sample variance term in \eqref{eq:4.1}. In that case, aggregating $\{ \hat{\bV}_K^{(\ell)} \}_{\ell=1}^m$ reduces variance and yields a better estimator $\widetilde{\bV}_K$. Theorem \ref{thm:2} shows that this is the case so long as the distribution has symmetric innovation and the sample size is large enough.

\begin{thm}
\label{thm:2}
Suppose on the $\ell$-th server we have $n$ i.i.d. random samples $\{\bX_i\}_{i=1}^n$ with covariance $\bSigma$. If $\{\bX_i\}_{i=1}^n$ have symmetric innovation, then $\bV^T\bSigma^*\bV$ is diagonal, i.e., $\bSigma^*$ and $\bSigma$ share the same set of eigenvectors.  Furthermore, if $
\| \bSigma^* -  \bV_K \bV_K^T \|_2 < 1/2$, then $\{ \hat{\bV}_K^{(\ell)} \}_{\ell=1}^m$ are unbiased for $\col( \bV_K )$ and $\rho(\bV^*_K, \bV_K)=0$.
\end{thm}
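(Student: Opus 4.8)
The plan is to prove the two assertions in sequence: first that symmetric innovation forces $\bV^T \bSigma^* \bV$ to be diagonal, and then that the spectral-norm bound $\|\bSigma^* - \bV_K\bV_K^T\|_2 < 1/2$ pins down \emph{which} $K$ eigenvectors of $\bSigma^*$ span the top eigenspace, giving $\rho(\bV_K^*, \bV_K) = 0$.

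\textbf{Step 1: Reduce to the whitened coordinates.} Write $\bSigma = \bV\bLambda\bV^T$ and $\bZ = \bLambda^{-1/2}\bV^T\bX$, so that $\E(\bZ\bZ^T) = \bI_d$ (assuming $\bSigma$ is nonsingular; the singular case is handled by restricting to $\col(\bV)$ via the pseudo-inverse). Each $\bX_i = \bV\bLambda^{1/2}\bZ_i$, and the sample covariance on a server is $\hat\bSigma = \bV\bLambda^{1/2}\hat\bS\bLambda^{1/2}\bV^T$ where $\hat\bS = (1/n)\sum_i \bZ_i\bZ_i^T$. The top-$K$ eigenvectors transform accordingly: $\hat\bV_K^{(\ell)} = \bV \cdot \bW_K$ where $\bW_K \in \O_{d\times K}$ collects the top-$K$ eigenvectors of $\bLambda^{1/2}\hat\bS\bLambda^{1/2}$. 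Hence $\bV^T\bSigma^*\bV = \E(\bW_K\bW_K^T)$, and it suffices to show this matrix is diagonal.

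\textbf{Step 2: Exploit the sign-flip symmetry.} Fix $j \in [d]$ and let $\bR_j = \bI_d - 2\be_j\be_j^T$, a reflection. By symmetric innovation, $\bZ_i \overset{d}{=} \bR_j \bZ_i$ jointly over $i$, so $\hat\bS \overset{d}{=} \bR_j \hat\bS \bR_j$, and since $\bR_j$ commutes with the diagonal matrix $\bLambda^{1/2}$, also $\bLambda^{1/2}\hat\bS\bLambda^{1/2} \overset{d}{=} \bR_j(\bLambda^{1/2}\hat\bS\bLambda^{1/2})\bR_j$. Eigenvectors are equivariant under conjugation by an orthogonal matrix, so the projection onto the top-$K$ eigenspace satisfies $\bW_K\bW_K^T \overset{d}{=} \bR_j(\bW_K\bW_K^T)\bR_j$ (the top-$K$ eigenspace being a.s. well-defined once we know, from Step 3's argument or a continuity/genericity remark, that the $K$-th and $(K+1)$-th eigenvalues of the whitened matrix differ a.s.; alternatively one can argue on the level of the spectral projector, which is a measurable function of the matrix even without a gap, using that $\bSigma^* $ is defined as an expectation). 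Taking expectations, $\E(\bW_K\bW_K^T) = \bR_j\,\E(\bW_K\bW_K^T)\,\bR_j$ for every $j$. A symmetric matrix fixed by conjugation with $\bR_j$ for all $j$ must have all off-diagonal entries zero: the $(a,b)$ entry with $a \ne b$ picks up a factor $-1$ under $\bR_a$. Therefore $\bV^T\bSigma^*\bV$ is diagonal, i.e., $\bSigma^*$ and $\bSigma$ share the eigenvector basis $\bV$.

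\textbf{Step 3: Identify the top-$K$ eigenspace of $\bSigma^*$.} From Step 2, $\bSigma^* = \bV \bD \bV^T$ for a diagonal $\bD = \diag(d_1, \dots, d_d)$ with $d_i = \bv_i^T\bSigma^*\bv_i = \E\|\hat\bV_K^{(\ell)T}\bv_i\|_2^2 \in [0,1]$. I must show that the $K$ largest $d_i$ are exactly $d_1, \dots, d_K$ and that they strictly exceed $d_{K+1}, \dots, d_d$, so that the top-$K$ eigenspace of $\bSigma^*$ equals $\col(\bV_K)$. This is where the hypothesis $\|\bSigma^* - \bV_K\bV_K^T\|_2 < 1/2$ enters: $\bV_K\bV_K^T = \bV\diag(1,\dots,1,0,\dots,0)\bV^T$, so in the $\bV$-basis the condition reads $\max_i |d_i - \ind\{i \le K\}| < 1/2$, giving $d_i > 1/2$ for $i \le K$ and $d_i < 1/2$ for $i > K$. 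Hence $\min_{i\le K} d_i > 1/2 > \max_{i > K} d_i$, so the top-$K$ eigenspace of $\bSigma^*$ is $\col(\bv_1,\dots,\bv_K) = \col(\bV_K)$; in particular there is a genuine eigen-gap, which also retroactively justifies the well-definedness of $\bV_K^*$ and the use of Davis–Kahan earlier. Then $\bV_K^* = \bV_K$ up to a $K\times K$ orthogonal rotation, so $\bV_K^*\bV_K^{*T} = \bV_K\bV_K^T$ and $\rho(\bV_K^*, \bV_K) = 0$. By Definition \ref{defn-unbiasedness}, this is precisely the statement that each $\hat\bV_K^{(\ell)}$ is unbiased for $\col(\bV_K)$.

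\textbf{Main obstacle.} The delicate point is Step 2: making rigorous the claim that the spectral projector onto the top-$K$ eigenspace is a well-defined, measurable, and conjugation-equivariant function of the random matrix $\bLambda^{1/2}\hat\bS\bLambda^{1/2}$ \emph{before} we know there is an eigen-gap. One clean way around this is to note that the gap at level $K$ holds almost surely — if the whitened sample covariance has a repeated eigenvalue straddling positions $K$ and $K+1$ with positive probability, one can typically rule this out for sub-Gaussian (indeed absolutely continuous) innovations, but since the theorem only assumes symmetric innovation one should instead phrase $\hat\bV_K^{(\ell)}$ as \emph{any} measurable selection of a top-$K$ orthonormal eigen-system and verify that $\bR_j$-equivariance of the \emph{set} of such projectors suffices to push the symmetry through the expectation. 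Everything else is bookkeeping in the $\bV$-coordinate system.
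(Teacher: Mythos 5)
Your proof follows the paper's argument essentially step for step: the sign-flip symmetry in whitened coordinates forces $\bV^T\bSigma^*\bV$ to be diagonal, and the spectral-norm hypothesis then separates the top $K$ diagonal entries from the rest, giving $\rho(\bV_K^*,\bV_K)=0$. Your Step 3 reads off the diagonal entries $d_i$ directly (rather than bounding $\|\bSigma^*\bv_k\|_2$ by the triangle inequality as the paper does), and your closing remark about measurable selection of the spectral projector when ties occur flags a subtlety the paper passes over silently --- both are fine, but the substance is the same.
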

It is worth pointing out that distributed PCA is closely related to aggregation of random sketches of a matrix \citep{HMT11,TYU16}. To approximate the subspace spanned by the $K$ leading left singular vectors of a large matrix $\bA \in \R^{d_1 \times d_2}$, we could construct a suitable random matrix $\bY \in \R^{d_2\times n}$ with $n\geq K$, and use the left singular subspace of $\bA \bY \in \R^{d_1 \times n}$ as an estimator. $\bA \bY$ is called a random sketch of $\bA$. It has been shown that to obtain reasonable statistical accuracy, $n$ can be much smaller than $\min(d_1, d_2)$ as long as $\bA$ is approximately low rank. Hence it is much cheaper to compute SVD on $\bA \bY$ than on $\bA$.
When we want to aggregate a number of such subspace estimators, a smart choice of the random matrix ensemble for $\bY$ is always preferable.
It follows from Theorem \ref{thm:2} that if we let $\bY$ have i.i.d. columns from a distribution with symmetric innovation (e.g., Gaussian distribution or independent entries), then the subspace estimators are unbiased, which facilitates aggregation.

Here we explain why we need the condition $\| \bSigma^* - \bV_K \bV_K^T \|_2 <1/2$ to achieve zero bias.  First of all, the condition is similar to a bound on the ``variance'' of the random matrix $\hat\bV_K^{(\ell)}$ whose covariance $\bSigma^*$ is under investigation.   As demonstrated above, with the symmetric innovation, $\bSigma^*$ has the same set of eigenvectors as $\bSigma$, but we still cannot guarantee that the top $K$ eigenvectors of $\bSigma^*$ match with those of $\bSigma$. For example, the $(K+1)$-th eigenvector of $\bSigma$ might be the $K$-th eigenvector of $\bSigma^*$. In order to ensure the top $K$ eigenspace of $\bSigma^*$ is exactly the same as that of $\bSigma$, we require $\hat\bV^{(\ell)}_K$ to not deviate too far from $\bV_K$ so that $\bSigma^*$ is close enough to  $\bV_K\bV_K^T$. Both Theorems \ref{thm:1} and \ref{thm:2} require control of $\| \bSigma^* -  \bV_K \bV_K^T \|_2$, which will be studied shortly.

For general distributions, the bias term is not necessarily zero. However, it turns out that when the subsample size is large enough, the bias term $\rho(\bV^*_K, \bV_K)$ is of high-order compared with the statistical error of $\hat\bV^{(\ell)}_K$ on the individual subsample. By the decomposition \eqref{eq:4.1} and Theorem \ref{thm:1}, we can therefore expect the aggregated estimator $\widetilde\bV_K$ to enjoy sharper statistical error rate than PCA on the individual subsample. In other words, the aggregation does improve the statistical efficiency. A similar phenomenon also appears in statistical error analysis of the average of the debiased Lasso estimators in \cite{BFL15} and \cite{LSL15}. Recall that in sparse linear regression, the Lasso estimator $\hat\bbeta$ satisfies that $\ltwonorm{\hat\bbeta- \bbeta^*}= O_P(\sqrt{s\log d/n})$, where $\bbeta^*$ is the true regression vector, $s$ is the number of nonzero coefficients of $\bbeta^*$ and $d$ is the dimension. The debiasing step reduces the bias of $\hat\bbeta$ to the order $O_P(s\log d/n)$, which is negligible when $m$ is not too large, compared with the statistical error of $\hat\bbeta$ and thus enables the average of the debiased Lasso estimators to enhance the statistical efficiency.

Below we present Lemma \ref{lem:2}, a high-order Davis-Kahan theorem that explicitly characterizes the linear term and high-order error on top $K$ eigenspace due to matrix perturbation. 
This is a genuine generalization of the former high-order perturbation theorems 
on a single eigenvector, e.g., Lemma 1 in \cite{KUt01} and Theorem 2 in \cite{EdA10}.
An elegant result on eigenspace perturbation is Lemma 2 in \cite{KLo16}. Our error bound uses Frobenius norm while theirs uses spectral norm. Besides, when the top $K$ eigenspace is of interest, the upper bound in Lemma 2 in \cite{KLo16} contains an extra factor $1+ ( \lambda_1 - \lambda_K )/\Delta$. Hence we have better dependence on problem parameters.
Other related works in the literature consider asymptotic expansions of perturbation \citep{Kat66,Vac94,Xu02}, and singular space of a matrix contaminated by Gaussian noise \citep{Wan15}.
Our result is both non-asymptotic and deterministic. It serves as the core of bias analysis.

\begin{lem}
	\label{lem:2}
Let $\bA, \hat{\bA}\in\R^{d\times d}$ be symmetric matrices with eigenvalues $\lambda_1\geq\cdots\geq\lambda_d$, and $\hat{\lambda}_1\geq\cdots\geq\hat{\lambda}_d$, respectively. Let $\{\bu_j\}_{j=1}^d$, $\{\hat{\bu}_j\}_{j=1}^d$ be two orthonormal bases of $\R^d$ such that $\bA\bu_j=\lambda_j\bu_j$ and $\hat{\bA}\hat{\bu}_j=\hat{\lambda}_j\hat{\bu}_j$ for all $j\in[d]$. Fix $s\in\{0,1,\cdots,d-K\}$ and assume that $\Delta=\min\{ \lambda_s-\lambda_{s+1},\lambda_{s+K}-\lambda_{s+K+1} \}>0$, where $\lambda_0=+\infty$ and $\lambda_{d+1}=-\infty$. Define $\bU=(\bu_{s+1},\cdots,\bu_{s+K})$, $\hat{\bU}=(\hat{\bu}_{s+1},\cdots,\hat{\bu}_{s+K})$.
Define $\bE=\hat{\bA}-\bA$, $S=\{s+1,\cdots,s+K\}$, $\bG_j=\sum_{i\notin S}(\lambda_i-\lambda_{s+j})^{-1}\bu_i\bu_i^T$ for $j\in [K]$, and
\[
f:\mathbb{R}^{d\times K}\rightarrow\mathbb{R}^{d\times K},~~ (\bw_1,\cdots,\bw_K)\mapsto(-\bG_{1}\bw_1,\cdots,-\bG_{K}\bw_K).
\]
When $\varepsilon=\|\bE\|_{2}/\Delta \leq 1/10$, we have
	\begin{align*}
	&\|  \hat{\bU} \hat{\bU}^T  - \bU \bU^T - [ f(\bE\bU)\bU^T+\bU f(\bE \bU)^T]\|_F
	\leq 24 \sqrt{K} \varepsilon^2.
	\end{align*}
\end{lem}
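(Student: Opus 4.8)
The plan is to use the standard contour-integral (Riesz projection) representation of spectral projectors and expand it to second order in the perturbation $\bE$. Write $\bP = \bU\bU^T$ and $\hat\bP = \hat\bU\hat\bU^T$ as contour integrals of the resolvents: pick a circle (or a Jordan contour) $\Gamma$ in the complex plane that encloses exactly the eigenvalues $\lambda_{s+1},\dots,\lambda_{s+K}$ of $\bA$ and, by the eigen-gap condition $\Delta>0$ together with Weyl's inequality and $\varepsilon\le 1/10$, also encloses exactly $\hat\lambda_{s+1},\dots,\hat\lambda_{s+K}$ of $\hat\bA$. Then $\bP = \frac{1}{2\pi i}\oint_\Gamma (z\bI - \bA)^{-1}\,\d z$ and $\hat\bP = \frac{1}{2\pi i}\oint_\Gamma (z\bI - \hat\bA)^{-1}\,\d z$. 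Using the resolvent identity $(z\bI-\hat\bA)^{-1} = (z\bI-\bA)^{-1} + (z\bI-\bA)^{-1}\bE(z\bI-\bA)^{-1} + (z\bI-\bA)^{-1}\bE(z\bI-\hat\bA)^{-1}\bE(z\bI-\bA)^{-1}$, one gets $\hat\bP - \bP = L + R$, where $L$ is the first-order (linear in $\bE$) contour integral and $R$ collects the remainder.

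The next step is to identify the linear term. Computing $\frac{1}{2\pi i}\oint_\Gamma (z\bI-\bA)^{-1}\bE(z\bI-\bA)^{-1}\,\d z$ by residues, using the spectral decomposition $(z\bI-\bA)^{-1} = \sum_i (z-\lambda_i)^{-1}\bu_i\bu_i^T$, the residue at each $\lambda_j$ with $j\in S$ picks out the off-diagonal blocks and yields exactly $\sum_{j\in S}\sum_{i\notin S}(\lambda_i-\lambda_j)^{-1}(\bu_i\bu_i^T \bE \bu_j\bu_j^T + \bu_j\bu_j^T\bE\bu_i\bu_i^T)$, which is precisely $f(\bE\bU)\bU^T + \bU f(\bE\bU)^T$ with $\bG_j = \sum_{i\notin S}(\lambda_i - \lambda_{s+j})^{-1}\bu_i\bu_i^T$ as defined. (The diagonal-block contributions cancel because $\bP\bE\bP$ appears with a double pole whose residue is zero, or equivalently because the contour integral of $(z-\lambda_j)^{-2}$ vanishes.) So $L = f(\bE\bU)\bU^T + \bU f(\bE\bU)^T$, and it remains only to bound $\|R\|_F$.

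For the remainder, I would bound $\|R\|_F = \big\| \frac{1}{2\pi i}\oint_\Gamma (z\bI-\bA)^{-1}\bE(z\bI-\hat\bA)^{-1}\bE(z\bI-\bA)^{-1}\,\d z \big\|_F$ directly. Choosing $\Gamma$ to be a circle of radius comparable to $\Delta$ centered appropriately (or better, a contour at distance $\gtrsim\Delta$ from both spectra), one has $\|(z\bI-\bA)^{-1}\|_2 \lesssim 1/\Delta$ and, since $\varepsilon\le 1/10$ keeps $\hat\bA$'s enclosed eigenvalues away from $\Gamma$, also $\|(z\bI-\hat\bA)^{-1}\|_2 \lesssim 1/\Delta$. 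A subtlety is getting the $\sqrt K$ rather than an $O(d)$ or $O(K)$ factor: the key is that the outer two resolvents, when integrated, act on the rank-$K$ projector directions — more precisely one should write $R$ so that a factor $\bP$ (or the rank-$2K$ spectral projector onto $S$) is exposed, turning one operator norm into a Frobenius norm via $\|\bP\bM\|_F \le \sqrt K\|\bM\|_2$. Tracking constants through the length of $\Gamma$ (of order $\Delta$) and the three resolvent/$\bE$ factors gives $\|R\|_F \lesssim \sqrt K \cdot \Delta \cdot \frac{1}{\Delta}\cdot\frac{\|\bE\|_2}{\Delta}\cdot\frac{1}{\Delta}\cdot\|\bE\|_2 = \sqrt K \varepsilon^2$, and a careful accounting of the numerical constants (the $1/(2\pi)$ from the contour, the factor $2$ from the two symmetric terms, the geometric series tail from iterating the resolvent identity, all controlled by $\varepsilon\le1/10$) yields the stated $24\sqrt K\varepsilon^2$.

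The main obstacle I anticipate is the bookkeeping in the remainder bound — specifically, arranging the contour so that a rank-$K$ (or rank-$2K$) projector is cleanly factored out to get the $\sqrt K$ scaling in Frobenius norm rather than a dimension-dependent bound, while simultaneously keeping the explicit constant down to $24$. This requires either a clever contour choice or splitting $R$ into pieces (e.g., projecting $\bE$ onto the $S$ and $S^c$ blocks and noting which combinations survive) and bounding each with the right norm; iterating the Neumann series for $(z\bI-\hat\bA)^{-1}$ and summing the geometric tail $\sum_{k\ge 2}(1/10)^{k-2}$ contributes the slack that lets the constant be absorbed. The linear-term identification is routine residue calculus once the contour is fixed, and the eigenvalue-separation prerequisites follow immediately from Weyl's inequality given $\varepsilon \le 1/10$.
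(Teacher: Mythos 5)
Your approach --- Riesz projections (contour integrals of the resolvent) with a second-order expansion --- is a genuinely different route from the paper's. The paper obtains Lemma~\ref{lem:2} as a one-line corollary of Lemma~\ref{lem:DK-strong}, which is proved by an entirely algebraic argument: introduce the matrix sign function $\hat{\bH}=\sgn(\hat{\bU}^T\bU)$, decompose $\hat{\bU}\hat{\bH}-\bU-f(\bE\bU)$ into tractable pieces, and control each by the Davis--Kahan $\sin\bTheta$ theorem and Weyl's inequality; no contour appears. Your residue identification of the linear term is essentially right (the displayed residue formula is missing an overall minus sign, but this cancels against the minus sign built into $f$, so $L=f(\bE\bU)\bU^T+\bU f(\bE\bU)^T$ does hold).

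There is, however, a genuine gap in the remainder estimate, and your diagnosis of the obstacle is slightly off target. The contour $\Gamma$ must enclose all of $\lambda_{s+1},\dots,\lambda_{s+K}$ while staying at distance $\gtrsim\Delta$ from the remaining eigenvalues, but the hypotheses of the lemma place no constraint on the internal spread $\lambda_{s+1}-\lambda_{s+K}$ relative to $\Delta$, so the length of $\Gamma$ is of order $\lambda_{s+1}-\lambda_{s+K}+\Delta$, not of order $\Delta$ as your constant-tracking line assumes. Inserting the correct contour length into the length-times-sup bound produces an additional factor $(\lambda_{s+1}-\lambda_{s+K})/\Delta$; this is exactly the extra factor $1+(\lambda_1-\lambda_K)/\Delta$ that the paper points out (in the discussion immediately before Lemma~\ref{lem:2}) appears in Lemma~2 of \cite{KLo16} and that its algebraic argument is designed to avoid. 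Exposing a rank-$K$ projector addresses the $d$-versus-$\sqrt{K}$ issue, but that is not where the unwanted factor originates. To make the contour approach yield the stated bound one must abandon the crude length-times-sup estimate and compute the residues of every block exactly: split each resolvent into its $S$- and $S^c$-parts, observe that residues at distinct points of $S$ combine in such a way that the small intra-$S$ differences $\lambda_i-\lambda_k$ cancel out of the denominators, and sum the Neumann series term by term. That cancellation is the crux of the matter and is a substantial extra argument; neither it nor the explicit constant $24$ can be read off from the sketch as written.
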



Similar to Taylor expansion, the difference is decomposed into the linear leading term and residual of higher order with respect to the perturbation. Here we only present a version that is directly applicable to bias analysis. Stronger results are summarized in Lemma \ref{lem:DK-strong} in Section 8.2.2, which may be of independent interest in perturbation analysis of spectral projectors.

Now we apply Lemma~\ref{lem:2} to the context of principal eigenspace estimation. Let $\bA= \bSigma$, $\hat \bA= \hat\bSigma ^{(1)} $ and $S=[K]$. It thus follows that $\bU=\bV_K$, $\hat\bU= \hat\bV_K^{(1)}$ and $\bE= \hat\bSigma ^{(1)} - \bSigma$.
From the second inequality in Lemma \ref{lem:2} we can conclude that the bias term $\rho(\bV_K^*, \bV_K)$ is a high-order term compared with the linear leading term. More specifically, the Davis-Kahan theorem helps us control the bias as follows: 
\begin{equation*}
\rho(\bV^*_K, \bV_K) \lesssim \fnorm{\bSigma^*- \bV_K\bV_K^T}  = \fnorm{\E[\hat\bV_K^{(1)}\hat\bV_K^{(1) T}- \bV_K\bV_K^T]}.
\end{equation*}
By the facts that $\E ( \bE ) = 0$ and $f$ is linear, we have
$$
\rho(\bV^*_K, \bV_K)	= \fnorm{\E [\hat\bV_K^{(1)}\hat\bV_K^{(1) T} - (\bV_K\bV_K^T+ f(\bE\bV_K)\bV_K^T + \bV_Kf(\bE\bV_K)^T)]}.
$$
By Jensen's inequality, the right hand side above is further bounded by
\begin{align}
\E \fnorm{\hat\bV_K^{(1)}\hat\bV_K^{(1)T}- (\bV_K\bV_K^T+ f(\bE\bV_K)\bV_K^T + \bV_Kf(\bE\bV_K)^T)}.
\label{eqn-bias}
\end{align}
When $n$ is large enough, the typical size of $\varepsilon = \| \bE \|_2/\Delta$ is small, and Lemma \ref{lem:1} controls it tail and all of the moments. Together with Lemma \ref{lem:2}, this fact implies that \eqref{eqn-bias} has roughly the same order as $\sqrt{K} \cdot \E \varepsilon^2$, which should be much smaller than the typical size of $\sqrt{K} \varepsilon$, i.e. the upper bound for $\rho(\hat\bV^{(1)}_K, \bV_K)$ given by Davis-Kahan theorem.
The following theorem makes our hand-waving analysis rigorous.

\begin{thm}
	\label{thm:3}
There are constants $C_1$ and $C_2$ such that when $n\geq r$,
	\[
	\rho(\bV^*_K, \bV_K) \leq C_1 \| \bSigma^* - \bV_K \bV_K^T \|_F
	\leq C_2 \kappa^2 \sqrt{K} r / n.
	\]
\end{thm}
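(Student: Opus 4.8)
The plan is to establish the two inequalities in Theorem \ref{thm:3} separately. The first inequality, $\rho(\bV_K^*, \bV_K) \leq C_1 \|\bSigma^* - \bV_K\bV_K^T\|_F$, is a direct application of the Davis--Kahan $\sin\bTheta$ theorem: since $\bV_K\bV_K^T$ has eigenvalues $1$ (with multiplicity $K$) and $0$ (with multiplicity $d-K$), its $K$-th eigengap is exactly $1$; provided $\|\bSigma^* - \bV_K\bV_K^T\|_2 \leq 1/4$ (which, as noted in the discussion preceding the theorem, must be verified and follows from $n \geq r$ via the bound I derive next), the $K$-th and $(K+1)$-th eigenvalues of $\bSigma^*$ are separated by at least $1/2$, and \eqref{eq:4.2} together with \eqref{eq:2.1} gives $\rho(\bV_K^*, \bV_K) \lesssim \fnorm{\bSigma^* - \bV_K\bV_K^T}$. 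So the real content is the second inequality: bounding $\fnorm{\bSigma^* - \bV_K\bV_K^T} = \fnorm{\E[\hat\bV_K^{(1)}\hat\bV_K^{(1)T} - \bV_K\bV_K^T]}$ by $C_2\kappa^2\sqrt{K}\,r/n$.

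For that, I would follow the heuristic already laid out in the excerpt. Set $\bA = \bSigma$, $\hat\bA = \hat\bSigma^{(1)}$, $S = [K]$, so $\bU = \bV_K$, $\hat\bU = \hat\bV_K^{(1)}$, $\bE = \hat\bSigma^{(1)} - \bSigma$, $\Delta = \lambda_K - \lambda_{K+1}$, and $\varepsilon = \|\bE\|_2/\Delta$. The key algebraic fact is that the linear term in Lemma \ref{lem:2} vanishes in expectation: $\E[f(\bE\bV_K)\bV_K^T + \bV_K f(\bE\bV_K)^T] = 0$ because $\E\bE = 0$ and $f$ is linear. Hence on the event $\{\varepsilon \leq 1/10\}$, Lemma \ref{lem:2} and Jensen's inequality give
\[
\fnorm{\bSigma^* - \bV_K\bV_K^T} \leq \E\,\ind\{\varepsilon \leq 1/10\}\cdot 24\sqrt{K}\,\varepsilon^2 + \E\,\ind\{\varepsilon > 1/10\}\fnorm{\hat\bV_K^{(1)}\hat\bV_K^{(1)T} - \bV_K\bV_K^T},
\]
where I have split the expectation according to whether the small-perturbation regime holds. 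The first piece is at most $24\sqrt{K}\,\E\varepsilon^2$. To control $\E\varepsilon^2$, I would invoke a sub-Gaussian concentration bound for $\|\hat\bSigma^{(1)} - \bSigma\|_2$ in terms of the effective rank — the same type of bound underlying Lemma \ref{lem:1} — which yields $\|\,\|\bE\|_2\,\|_{\psi_1} \lesssim \lambda_1\sqrt{r/n}$ when $n \geq r$, so that $\E\varepsilon^2 \lesssim (\lambda_1/\Delta)^2 (r/n) = \kappa^2 r/n$; this gives the desired $\kappa^2\sqrt{K}\,r/n$ term. The second piece is a tail contribution: on $\{\varepsilon > 1/10\}$ the integrand is bounded deterministically by $\fnorm{\hat\bV_K^{(1)}\hat\bV_K^{(1)T} - \bV_K\bV_K^T} \leq \sqrt{2K}$, and $\P(\varepsilon > 1/10)$ decays exponentially in $n/r$ by the same concentration inequality, so by Cauchy--Schwarz (or a crude bound) this term is $O(\sqrt{K}\exp(-cn/r))$, which is dominated by $\sqrt{K}\,r/n$ for $n \geq r$ after absorbing constants.

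The main obstacle I anticipate is the bookkeeping around the bad event $\{\varepsilon > 1/10\}$ and making all constants uniform: I need a clean sub-exponential tail bound for $\|\hat\bSigma^{(1)} - \bSigma\|_2/\lambda_1$ in terms of $r/n$ (valid for sub-Gaussian data, of the Koltchinskii--Lounici type), and I need to check that the exponential tail genuinely dominates the polynomial $r/n$ rate across the full range $n \geq r$ — near $n \asymp r$ one must be careful that $\exp(-cn/r)$ is not larger than $r/n \asymp 1$, which forces choosing the constant $C_2$ large enough (or equivalently noting that when $n \asymp r$ the claimed bound $\kappa^2\sqrt{K}\,r/n$ is itself of constant order times $\kappa^2\sqrt{K}$, and $\fnorm{\bSigma^* - \bV_K\bV_K^T} \leq \sqrt{2K}$ trivially, so the bound holds with room to spare). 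A secondary technical point is verifying the hypothesis $\|\bSigma^* - \bV_K\bV_K^T\|_2 \leq 1/4$ needed to activate Davis--Kahan for the first inequality; this follows from the same $\fnorm{\bSigma^* - \bV_K\bV_K^T}$ bound (since spectral norm is dominated by Frobenius norm) once $n$ is large enough relative to $r$ and $\kappa$, but strictly the statement as written should be read as implicitly assuming this, or one restricts to that regime — I would state the $n \geq r$ threshold with the understanding that the universal constants may also require $n \gtrsim \kappa^4 K r$ for the spectral-norm condition to hold, consistent with how Theorems \ref{thm:1} and \ref{thm:2} are invoked elsewhere.
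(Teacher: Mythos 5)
Your proof is essentially the paper's argument, with two deviations worth noting.

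First, you suggest that the inequality $\rho(\bV^*_K, \bV_K) \leq C_1\fnorm{\bSigma^*-\bV_K\bV_K^T}$ needs the hypothesis $\|\bSigma^*-\bV_K\bV_K^T\|_2\leq 1/4$, and hence implicitly a stronger condition on $n$. It does not. In the Yu--Wang--Samworth form of Davis--Kahan used throughout the paper, the denominator is the eigengap of the \emph{unperturbed} matrix, not the perturbed one. Taking $\bV_K\bV_K^T$ as the unperturbed matrix (its $K$-th eigengap is exactly $1$) and $\bSigma^*$ as the perturbation, one gets $\fnorm{\sin\bTheta(\bV^*_K,\bV_K)}\leq 2\fnorm{\bSigma^*-\bV_K\bV_K^T}$ with no side conditions; combined with \eqref{eq:2.1}, this gives $\rho(\bV^*_K,\bV_K)\leq 2\sqrt{2}\,\fnorm{\bSigma^*-\bV_K\bV_K^T}$ unconditionally, under $n\geq r$ alone. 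Your caveat that the ``universal constants may also require $n\gtrsim \kappa^4 K r$'' is therefore unnecessary and would actually weaken the theorem.

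Second, you bound the bad-event contribution $\E\bigl[\fnorm{\hat\bP-\bP}\,\ind\{\varepsilon>1/10\}\bigr]$ by Cauchy--Schwarz and an exponential tail estimate for $\P(\varepsilon>1/10)$, then argue that the exponential is absorbed into $\sqrt{K}\,r/n$ after adjusting constants. This works but is more delicate than necessary, precisely for the reason you flag (care near $n\asymp r$). The paper's route is cleaner: decompose $\hat\bP-\bP-\bQ = \bW\ind\{\varepsilon\leq 1/10\}+(\hat\bP-\bP)\ind\{\varepsilon>1/10\}-\bQ\ind\{\varepsilon>1/10\}$ where $\bQ=f(\bE\bV_K)\bV_K^T+\bV_K f(\bE\bV_K)^T$, use the \emph{deterministic} Davis--Kahan bound $\fnorm{\hat\bP-\bP}\lesssim\sqrt{K}\,\varepsilon$ and $\fnorm{\bQ}\lesssim\sqrt{K}\,\varepsilon$, and then note that on the bad event $\varepsilon\leq 10\varepsilon^2$, so \emph{all three pieces} are controlled by $\sqrt{K}\,\E\varepsilon^2\lesssim\sqrt{K}\,\Delta^{-2}\bigl\|\,\|\bE\|_2\,\bigr\|_{\psi_1}^2\lesssim \kappa^2\sqrt{K}\,r/n$. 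This sidesteps any explicit tail-probability computation and any exponential-vs-polynomial comparison. Your argument reaches the same destination, but the paper's one-line pointwise inequality $\varepsilon\ind\{\varepsilon>1/10\}\leq 10\varepsilon^2$ is the idea you are missing that makes the whole thing frictionless.

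Your observation that one also needs to subtract $\E\bigl[\bQ\ind\{\varepsilon>1/10\}\bigr]$ (since $\E\bQ=\mathbf{0}$ only gives the unconditioned cancellation) is correct and is indeed tracked in the paper; you handle it implicitly by lumping things together, which is fine.
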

As a by-product, we get
$\| \bSigma^* - \bV_K \bV_K^T \|_2 \lesssim \kappa^2 \sqrt{K} r / n$. Hence when $n\geq C \kappa^2 \sqrt{K}r$ for some large enough $C$, the assumptions in Theorems \ref{thm:1} and \ref{thm:2} on $\| \bSigma^* - \bV_K \bV_K^T \|_2$ are guaranteed to hold.

\subsection{Properties of distributed PCA}

We now combine the results we obtained in the previous two subsections to derive the statistical error rate of $\widetilde \bV_K$. We first present a theorem under the setting of global i.i.d. data and discuss its optimality.

\begin{thm}
	\label{thm:4}
	Suppose we have $N$ i.i.d. sub-Gaussian random samples with covariance $\bSigma$. They are scattered across $m$ servers, each of which stores $n$ samples. There exist constants $C,C_1,C_2,C_3$ and $C_4$ such that the followings hold when $n \ge C \kappa^2 \sqrt{K} r$.
	\begin{enumerate}
		\item Symmetric innovation:
	\beq
		\label{eq:4.7}
		\left\| \rho(\widetilde\bV_K, \bV_K) \right\|_{\psi_1} \le C_1 \kappa \sqrt{\frac{Kr}{N}}.
	\eeq
		\item General distribution:
	\beq
		\label{eq:4.8}
		\left\| \rho(\widetilde\bV_K, \bV_K) \right\|_{\psi_1} \le C_1 \kappa \sqrt{\frac{Kr}{N}}+
		C_2 \kappa^2 \frac{  \sqrt{K} r}{n}.
	\eeq
Furthermore, if we further assume $m \le C_3 n/(\kappa^2 r)$,
\beq
\label{eq:4.9}
\left\| \rho(\widetilde\bV_K, \bV_K) \right\|_{\psi_1} \le C_4 \kappa \sqrt{\frac{Kr}{N}}.
\eeq
\end{enumerate}
\end{thm}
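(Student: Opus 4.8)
The plan is to assemble the conclusion from the bias--variance decomposition \eqref{eq:4.1} together with Theorems \ref{thm:1}, \ref{thm:2} and \ref{thm:3}; no fundamentally new estimate is needed. Recall that
\[
\rho(\widetilde\bV_K,\bV_K)\le \rho(\widetilde\bV_K,\bV^*_K)+\rho(\bV^*_K,\bV_K),
\]
so it suffices to control the stochastic variance term $\rho(\widetilde\bV_K,\bV^*_K)$ by Theorem \ref{thm:1}, the deterministic bias term $\rho(\bV^*_K,\bV_K)$ by Theorem \ref{thm:2} in the symmetric-innovation case and by Theorem \ref{thm:3} in the general case, and then combine the two. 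For the combination I would use that, by Minkowski's inequality, $\|\cdot\|_{\psi_1}$ satisfies the triangle inequality, and that a deterministic constant $c\ge0$ satisfies $\|c\|_{\psi_1}=c$; hence $\|\rho(\widetilde\bV_K,\bV_K)\|_{\psi_1}\le \|\rho(\widetilde\bV_K,\bV^*_K)\|_{\psi_1}+\rho(\bV^*_K,\bV_K)$.

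\textbf{Verifying the hypotheses.} First I would check that the assumptions of Theorems \ref{thm:1} and \ref{thm:2} are in force under $n\ge C\kappa^2\sqrt{K}r$. Theorem \ref{thm:3} supplies the by-product $\opnorm{\bSigma^*-\bV_K\bV_K^T}\le C_0\kappa^2\sqrt{K}\,r/n$ for an absolute constant $C_0$; choosing $C\ge 4C_0$ forces $\opnorm{\bSigma^*-\bV_K\bV_K^T}\le 1/4$, which is stronger than both the $\le 1/4$ hypothesis of Theorem \ref{thm:1} and the $<1/2$ hypothesis of Theorem \ref{thm:2}. Since $\kappa\ge1$, $K\ge1$ and $r\ge1$, the requirement $n\ge r$ used by all three theorems is also implied once $C\ge 1$. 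Consequently Theorem \ref{thm:1} applies and gives $\|\rho(\widetilde\bV_K,\bV^*_K)\|_{\psi_1}\le C'\kappa\sqrt{Kr/N}$.

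\textbf{Assembling the bounds.} In part 1, Theorem \ref{thm:2} yields $\rho(\bV^*_K,\bV_K)=0$, so the decomposition collapses to the variance term and \eqref{eq:4.7} follows immediately. In part 2, Theorem \ref{thm:3} gives $\rho(\bV^*_K,\bV_K)\le C_2\kappa^2\sqrt{K}\,r/n$; adding this deterministic quantity to the variance bound via the triangle inequality above produces \eqref{eq:4.8}. To obtain \eqref{eq:4.9} I would compare the two terms using $N=mn$: the bias term $\kappa^2\sqrt{K}\,r/n$ is at most a constant multiple of the variance term $\kappa\sqrt{Kr/N}=\kappa\sqrt{Kr/(mn)}$ precisely when $\kappa^4Kr^2/n^2\lesssim \kappa^2Kr/(mn)$, i.e. when $m\lesssim n/(\kappa^2r)$. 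Under the extra hypothesis $m\le C_3 n/(\kappa^2r)$ this holds, so the bias term is absorbed into the variance term and \eqref{eq:4.9} results.

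\textbf{Main difficulty.} There is no serious obstacle: the substantive work sits in Theorems \ref{thm:1}--\ref{thm:3}, and what remains is bookkeeping. The only points requiring a little care are (a) pinning down the absolute constant $C$ in $n\ge C\kappa^2\sqrt{K}r$ so that the single sample-size condition simultaneously validates $\opnorm{\bSigma^*-\bV_K\bV_K^T}\le 1/4$, $\opnorm{\bSigma^*-\bV_K\bV_K^T}<1/2$ and $n\ge r$; and (b) the elementary but necessary remark that $\|\cdot\|_{\psi_1}$ is a genuine (semi)norm, so that the stochastic variance term and the deterministic bias term may be combined without degrading the rate.
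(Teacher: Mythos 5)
Your proposal is correct and follows essentially the same route as the paper's proof: invoke Theorem \ref{thm:3} to validate the smallness condition on $\opnorm{\bSigma^*-\bV_K\bV_K^T}$, apply Theorem \ref{thm:1} for the variance term, Theorem \ref{thm:2} (symmetric) or Theorem \ref{thm:3} (general) for the bias term, and compare the bias to the variance using $N=mn$ to obtain \eqref{eq:4.9}. The only addition is your explicit remark that $\|\cdot\|_{\psi_1}$ obeys the triangle inequality and that $\|c\|_{\psi_1}=c$ for a deterministic $c\ge0$, which the paper uses implicitly.
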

As we can see, with appropriate scaling conditions on $n$, $m$ and $d$, $\widetilde\bV_K$ can achieve the statistical error rate of order $\kappa \sqrt{Kr/N}$.
The result is applicable to the whole sample or traditional PCA, in which $m=1$.
Hence the distributed PCA and the traditional PCA share the same error bound as long as the technical conditions are satisfied. 

In the second part of Theorem \ref{thm:4}, the purpose of setting restrictions on $n$ and $m$ is to ensure that the distributed PCA algorithm delivers the same statistical rate as the centralized PCA which uses all the data. In the boundary case where $n \asymp \kappa^2 \sqrt{K} r$, the bias of the local empirical eigenspace is of constant order. Since our aggregation cannot kill bias, there is no hope to achieve the centralized rate unless the number of machines is of constant order so that the centralized PCA has constant error too. Besides, our result says that when $n$ is large, we can tolerate more data splits (larger $m$) for achieving the centralized statistical rate.

We now illustrate our result through a simple spiked covariance model introduced by \cite{Joh01}. Assume that $\bLambda=\diag(\lambda,  \underbrace{1, \cdots, 1}_{d-1})$,
where $\lambda>1$, and we are interested in the first eigenvector of $\bSigma$.
Note that $K=1$, $r= \text{Tr}(\bSigma)/\opnorm{\bSigma} = (\lambda+d-1)/\lambda \asymp d/\lambda$  when $\lambda=O(d)$, and $\kappa = \lambda/(\lambda-1) \asymp 1$.
It is easy to see from \eqref{eq:4.7} or \eqref{eq:4.9} that
\[
\left\| \rho(\widetilde\bV_1, \bV_1) \right\|_{\psi_1}
\lesssim \kappa \sqrt{\frac{r}{N}} \lesssim
\sqrt{\frac{d}{N\lambda}}.
\]
Without loss of generality, we could always assume that the direction of $\widetilde{\bV}_1$ is chosen such that $\widetilde{\bV}_1^T \bV_1 \geq 0$, i.e. $\widetilde{\bV}_1$ is aligned with $\bV_1$. Note that
\[
\begin{aligned}
\rho^2(\widetilde\bV_1, \bV_1) & = \fnorm{\widetilde\bV_1\widetilde\bV_1^T- \bV_1\bV_1^T}^2 = 2(1-\widetilde\bV_1^T\bV_1)(1+\widetilde\bV_1^T\bV_1)
\ge 2(1-\widetilde\bV_1^T\bV_1)= \ltwonorm{\widetilde\bV_1-\bV_1}^2.
\end{aligned}
\]
Hence
\beq
\label{eq:4.11}
\E \ltwonorm{\widetilde\bV_1-\bV_1}^2 \lesssim
\left\| \rho(\widetilde\bV_1, \bV_1) \right\|_{\psi_1}^2 \lesssim
\frac{d}{N\lambda}.
\eeq
We now  compare this rate with the previous results under the spiked model. In \cite{PJo12}, the authors derived the $\ell_2$ risk of the empirical eigenvectors when random samples are Gaussian. It is not hard to derive from Theorem 1 therein that given $N$ i.i.d $d$-dimensional Gaussian samples, when $N, d$ and $\lambda$ go to infinity,
\[
	\E \ltwonorm{\hat\bV_1- \bV_1}^2 \asymp \frac{d}{N\lambda},
\]
where $\hat\bV_1$ is the empirical leading eigenvector with $\hat\bV_1^T \bV_1 \ge 0$. We see from \eqref{eq:4.11} that the aggregated estimator $\widetilde\bV_1$ performs as well as the full sample estimator $\hat\bV_1$ in terms of the mean squared error.  See \cite{WFa17} for generalization of the results for spiked covariance.

In addition, our result is consistent with the minimax lower bound developed in \cite{CMW13}. For $\lambda>0$ and fixed $c \geq 1$, define
\[
	\Theta=\{\bSigma \text{ is symmetric and }\bSigma \succeq 0: \lambda+1 \le \lambda_K \le \lambda_1 \le c \lambda+1, \lambda_j=1 \text{ for } K+1 \leq  j \leq d\}.
\]
Assume that $K\leq d/2$ and $1\lesssim d/\lambda \lesssim N$.
Theorem 8 in \cite{CMW13} shows that under the Gaussian distribution with $\bSigma\in \Theta$, the minimax lower bound of $\E \rho^2 (\hat\bV, \bV_K)$ satisfies
\begin{align}
\inf\limits_{\widehat\bV} \sup\limits_{\bSigma\in \Theta} \E \rho^2(\hat\bV, \bV_K) \gtrsim \min
\left\{
K, (d-K), \frac{K(\lambda+1)(d-K)}{N\lambda^2} \right\}
\gtrsim \frac{Kd}{N\lambda} .
\label{pca-lower-bound}
\end{align}
Based on $r= \text{Tr}(\bSigma)/\opnorm{\bSigma} \leq (cK\lambda+d)/(c\lambda+1) \lesssim Kd/\lambda$ and $\kappa \leq c \lesssim 1$, our \eqref{eq:4.7} gives an upper bound
\[
\E \rho^2(\widetilde\bV_1, \bV_1) \lesssim \kappa^2 \frac{Kr}{n} \lesssim \frac{Kd}{N\lambda},
\]
which matches the lower bound in \eqref{pca-lower-bound}.

Although the upper bound $\kappa \sqrt{Kr/N}$ established in Theorem \ref{thm:4} is optimal in the minimax sense as discussed above, the non-minimax risk of empirical eigenvectors can be improved when the condition number $\kappa$ is large.  See \cite{VLe13}, \cite{KLo16} and \cite{RWa16} for sharper results. We use \eqref{eq:4.7} as a benchmark rate for the centralized PCA only for the sake of simplicity.

Notice that in Theorem \ref{thm:4}, the prerequisite for $\widetilde\bV_K$ to enjoy the sharp statistical error rate is a lower bound on the subsample size $n$, i.e.,
\beq
	\label{eq:4.12}
	n \gtrsim \kappa^2 \sqrt{K} r.
\eeq
As in the remarks after Lemma \ref{lem:2}, this is the condition we used to ensure closeness between $\bSigma^*$ and $\bV_K \bV_K^T$.
It is natural to ask whether this required sample complexity is sharp, or in other words, is it possible for $\widetilde\bV_K$ to achieve the same statistical error rate with a smaller sample size on each machine? The answer is no. The following theorem presents a distribution family under which $\col(\widetilde\bV_K)$ is even perpendicular to $\col(\bV_K)$ with high probability when $n$ is smaller than the threshold given in \eqref{eq:4.12}. This means that having a smaller sample size on each machine is too uninformative such that the aggregation step completely fails in improving estimation consistency.

\begin{thm}
	\label{thm:5}
Consider a Bernoulli random variable $W$ with $P(W=0)=P(W=1)=1/2$, a Rademacher random variable $P(Y=1)=P(Y=-1)=1/2$, and a random vector $\bZ\in \R^{d-1}$ that is uniformly distributed over the $(d-1)$-dimensional unit sphere. For $\lambda\geq 2$, we say a random vector $\bX\in \R^d$ follows the distribution $\cD(\lambda)$ if
	\[
		\bX\overset{d}{=}\left(
			\begin{array}{c}
				{\bf 1}_{\{W=0\}}\sqrt{2\lambda}Y \\
				{\bf 1}_{\{W=1\}}\sqrt{2(d-1)}\bZ
			\end{array}
			\right).
	\]
	Now suppose we have $\{\bX_i\}_{i=1}^N$ as $N$ i.i.d. random samples of $\bX$. They are stored across $m$ servers, each of which has $n$ samples. When $32 \log d\le n\le (d-1)/(3\lambda)$, we have
	\[
		P(\widetilde{\bV}_1\perp \bV_1)\geq
			\begin{cases}
				&1-d^{-1},~~\text{if}~~m\leq d^3,\\
				&1-e^{-d/2},~~\text{if}~~m> d^3.
			\end{cases}	
	\]
\end{thm}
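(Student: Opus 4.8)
The plan is to show that with high probability, \emph{every} local server's leading eigenvector $\hat\bv_1^{(\ell)}$ lies in the ``noise block'' $\col(\be_2,\cdots,\be_d)$ rather than the ``signal direction'' $\be_1$; once this holds for all $\ell\in[m]$, the averaged matrix $\widetilde\bSigma=(1/m)\sum_\ell \hat\bv_1^{(\ell)}\hat\bv_1^{(\ell)T}$ is supported entirely on that noise block, so its top eigenvector $\widetilde\bV_1$ is orthogonal to $\be_1=\bV_1$. Note $\bSigma=\E(\bX\bX^T)=\diag(\lambda,1,\cdots,1)$ under $\cD(\lambda)$ (since $\E[{\bf 1}_{\{W=0\}}2\lambda Y^2]=\lambda$ and $\E[{\bf 1}_{\{W=1\}}2(d-1)Z_jZ_k]=\delta_{jk}$), so indeed $\bV_1=\be_1$ and $\bv_1$ is the unique true top direction. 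First I would condition on the $W$-values of the $n$ samples on a given server: write $n_0=\#\{i:W_i=0\}$ as the number of ``signal'' samples and $n_1=n-n_0$ as the number of ``noise'' samples. The local sample covariance decomposes as a block-diagonal-in-expectation object: the $(1,1)$ entry is $\frac{1}{n}\sum_{W_i=0}2\lambda Y_i^2=\frac{2\lambda n_0}{n}$, the cross terms vanish identically (a sample is either all-signal or all-noise, never both), and the lower-right $(d-1)\times(d-1)$ block is $\hat\bB:=\frac{2(d-1)}{n}\sum_{W_i=1}\bZ_i\bZ_i^T$.

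The heart of the argument is then a comparison of two eigenvalues: the top eigenvalue of $\hat\bB$ versus the scalar $2\lambda n_0/n$. If $\lambda_1(\hat\bB) > 2\lambda n_0/n$, the local top eigenvector is the top eigenvector of $\hat\bB$, hence orthogonal to $\be_1$. For the \emph{lower} bound on $\lambda_1(\hat\bB)$: since $n_1$ random unit vectors span a subspace of dimension $\min(n_1,d-1)=n_1$ (as $n_1\le n\le (d-1)/(3\lambda)< d-1$) almost surely, $\hat\bB$ has rank $n_1$, and $\tr(\hat\bB)=\frac{2(d-1)}{n}\sum_{W_i=1}\|\bZ_i\|^2 = \frac{2(d-1)n_1}{n}$, so by pigeonhole $\lambda_1(\hat\bB)\ge \frac{\tr(\hat\bB)}{n_1}=\frac{2(d-1)}{n}$. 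For the \emph{upper} bound on the competitor: a Chernoff/Bernstein bound on the $\mathrm{Binomial}(n,1/2)$ variable $n_0$ gives $n_0\le n$ trivially, and more usefully we can bound $2\lambda n_0/n\le 2\lambda$. Comparing, $\lambda_1(\hat\bB)\ge 2(d-1)/n \ge 2\lambda$ exactly when $n\le (d-1)/\lambda$, which is essentially the hypothesis $n\le (d-1)/(3\lambda)$ (the factor $3$ buys room for the failure events and for the case $n_0$ close to $n$, handled via a Chernoff bound $P(n_0 \ge 2n/3)\le e^{-cn}$, absorbed by $n\ge 32\log d$). This shows $P(\hat\bv_1^{(\ell)}\perp\be_1)\ge 1-d^{-c}$ for a single server.

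Finally I would take a union bound over the $m$ servers. When $m\le d^3$, choosing the per-server failure probability to be $d^{-4}$ (which the $n\ge 32\log d$ condition affords, since $e^{-cn}\le e^{-32c\log d}=d^{-32c}\le d^{-4}$ for the relevant constant) yields overall failure at most $m d^{-4}\le d^{-1}$. When $m>d^3$, the clean union bound is too lossy, so instead I would argue directly: the event $\{\widetilde\bV_1\not\perp\bV_1\}$ requires that at least one server fails, but a sharper route is to note that the number of failing servers is stochastically dominated by $\mathrm{Binomial}(m,q)$ with $q\le e^{-cn}\le e^{-32 c\log d}$; bounding $P(\mathrm{Binomial}(m,q)\ge 1)$ is still just $mq$, which only gives $m e^{-cn}$ — so to reach $1-e^{-d/2}$ I instead use that $\widetilde\bV_1\perp\bV_1$ \emph{fails} only if the $(1,1)$ entry of $\widetilde\bSigma$ beats its lower-right block, and the $(1,1)$ entry $\frac1m\sum_\ell ([\hat\bv_1^{(\ell)}]_1)^2$ concentrates around its tiny mean $q$ by Hoeffding, giving an exponential-in-$m$ (hence exponential-in-$d$, since $m>d^3$) bound. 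The main obstacle I anticipate is precisely this $m>d^3$ regime: getting an $e^{-d/2}$-type bound rather than the naive $me^{-cn}$ requires exploiting concentration of the \emph{average} (a sum of $m$ independent bounded contributions to the $(1,1)$-entry of $\widetilde\bSigma$) rather than a union bound over individual bad servers, together with a deterministic guarantee that the noise block of $\widetilde\bSigma$ retains a large eigenvalue even when a small fraction of servers misbehave — this last point needs a uniform-in-$\ell$ lower bound on $\lambda_1$ of each local noise block, combined carefully so that the average's noise-block top eigenvalue dominates the averaged $(1,1)$ entry.
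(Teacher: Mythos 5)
Your plan matches the paper's proof in essentially every structural respect: you exploit the block-diagonal form of the local $\hat\bSigma^{(\ell)}$ (no cross terms because each sample is either all-signal or all-noise), compare the scalar $(1,1)$-entry with the top eigenvalue of the noise block via the trace-over-rank bound, conclude a dichotomy $\hat\bv_1^{(\ell)}\in\{\pm\be_1\}$ or $\hat\bv_1^{(\ell)}\perp\be_1$ for each server, then handle $m\le d^3$ by a union bound and $m>d^3$ by Hoeffding on the fraction of ``bad'' servers. One small point: your per-server step is actually slightly stronger than the paper's, since $\lambda_1(\hat\bB)\ge 2(d-1)/n$ and $2\lambda n_0/n\le 2\lambda< 2(d-1)/n$ hold \emph{deterministically} once $n_1\ge 1$ under $n\le (d-1)/(3\lambda)$, so no Chernoff on $n_0$ is needed at all and the single-server failure probability is only $P(n_1=0)=2^{-n}$; the paper instead lower-bounds $\lambda_1$ by $\Tr/n$ rather than $\Tr/n_1$ and therefore does need Hoeffding on the binomial $n_1/n$, ending at $e^{-n/8}$. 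Either bound is far below $d^{-4}$ when $n\ge 32\log d$, so both suffice.

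Where you should tighten is exactly the spot you flag in the last sentence. The ``uniform-in-$\ell$ lower bound on $\lambda_1$ of each local noise block'' you anticipate needing is a detour: the clean route is to work with the trace of the \emph{averaged} noise block directly. Because of the per-server dichotomy, the vector $\hat\bv_1^{(\ell)}$ splits as $(a_\ell,\bb_\ell)$ with $|a_\ell|=\mathbf{1}\{\hat\bv_1^{(\ell)}\sslash\bV_1\}$ and $\|\bb_\ell\|_2=\mathbf{1}\{\hat\bv_1^{(\ell)}\perp\bV_1\}$, so $\widetilde\bSigma$ is again block diagonal, with $(1,1)$-entry $S/m$ (where $S$ counts the bad servers) and lower-right block $(1/m)\sum_\ell\bb_\ell\bb_\ell^T$. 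That block has trace $\frac{1}{m}\sum_\ell\|\bb_\ell\|_2^2=1-S/m$ and rank at most $d-1$, so its top eigenvalue is at least $(1-S/m)/(d-1)$, which exceeds $S/m$ precisely when $S/m<1/d$. Thus the only probabilistic event to control is $\{S/m\ge 1/d\}$; Hoeffding applied to the mean of the i.i.d. indicators $\mathbf{1}\{\hat\bv_1^{(\ell)}\sslash\bV_1\}$ (each with mean $\le d^{-4}$) gives $P(S/m\ge 1/d)\le e^{-m/(2d^2)}<e^{-d/2}$ when $m>d^3$. No per-server spectral lower bound or careful combination of local noise eigenvalues is required.
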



It is easy to verify that $\cD(\lambda)$ is symmetric, sub-Gaussian and satisfies $\E \bX=\bzero$ and $\E ( \bX\bX^T )= \diag(\lambda, 1, \cdots, 1)$. Besides, $\kappa = \lambda/(\lambda-1) \asymp 1$ and $r = (\lambda+d-1)/\lambda = d/\lambda + 1 - \lambda^{-1} \asymp d/\lambda$ when $2\leq \lambda \lesssim d$. According to \eqref{eq:4.12}, we require $n \gtrsim d/\lambda$ to achieve the rate as demonstrated in \eqref{eq:4.7}. Theorem \ref{thm:5} shows that if we have fewer samples than this threshold, the aggregated estimator $\widetilde\bV_1$ will be perpendicular to the true top eigenvector $\bV_1$ with high probability. Therefore, our lower bound for the subsample size $n$ is sharp.

\section{Extension to heterogeneous samples}
\label{sec:5}

We now  relax global $i.i.d.$ assumptions in the previous section to the setting of heterogeneous covariance structures across servers. Suppose data on the server $\ell$ has covariance matrix $\bSigma^{(\ell)}$, whose top $K$ eigenvalues and eigenvectors are denoted by $\{\lambda^{(\ell)}_k\}_{k=1}^K$ and $\bV^{(\ell)}_K=(\bv^{(\ell)}_1, \cdots, \bv^{(\ell)}_K)$ respectively.
We will study two specific cases of heterogeneous covariances: one requires all covariances to share exactly the same principal eigenspaces, while the other considers the heterogeneous factor models with common factor eigen-structures.

\subsection{Common principal eigenspaces}
We assume that $\{\bSigma^{(\ell)}\}_{\ell=1}^m$ share the same top $K$ eigenspace, i.e. there exists some $\bV_K\in \O_{d\times K}$ such that $\bV_K^{(\ell)}\bV_K^{(\ell)T}= \bV_K\bV_K^T$ for all $\ell\in[m]$.
The following theorem can be viewed as a generalization of Theorem \ref{thm:4}.
\begin{thm}
	\label{thm:6}
	Suppose we have in total $N$ sub-Gaussian samples scattered across $m$ servers, each of which stores $n$ i.i.d. samples with covariance $\bSigma^{(\ell)}$. Assume that $\{\bSigma^{(\ell)}\}_{\ell=1}^m $ share the same top $K$ eigenspace. For each $\ell\in [m]$, let $S_{\ell}= \kappa_{\ell} \sqrt{\frac{K r_{\ell}}{N}}$ and $B_{\ell}=
		  \frac{\kappa^2_{\ell}\sqrt{K} r_{\ell} }{n}$,
where $r_{\ell} :=\Tr( \bSigma^{(\ell)} )/\lambda_1^{(\ell)}$ and $\kappa_{\ell}:=\lambda^{(\ell)}_1/( \lambda^{(\ell)}_K - \lambda^{(\ell)}_{K+1} )$.
	\begin{enumerate}
		\item Symmetric innovation:
	There exist some positive constants $C$ and $C_1$ such that
	\beq
		\label{eq:4.13}
		\left\| \rho(\widetilde\bV_K, \bV_K) \right\|_{\psi_1} \le C_1\sqrt{\frac{1}{m}\sum\limits_{\ell=1}^m S_{\ell}^2 }
	\eeq
	so long as $n\geq C \sqrt{K} \max_{\ell\in[m]}( \kappa_{\ell}^2 r_{\ell} ) $.		
	
		\item General distribution: There exist positive constant $C_2$ and $C_3$ such that when $n\geq \max_{\ell \in [m]} r_{\ell}$,
	\beq
		\label{eq:5.2}
		\left\| \rho(\widetilde\bV_K, \bV_K) \right\|_{\psi_1} \le C_2  \sqrt{\frac{1}{m}\sum\limits_{\ell=1}^m S_{\ell}^2}+ \frac{C_3}{m}\sum\limits_{\ell=1}^m B_{(\ell)} .
	\eeq
	\end{enumerate}

\end{thm}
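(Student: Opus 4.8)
The plan is to run the bias--variance decomposition \eqref{eq:4.1} server by server, invoking the homogeneous results (Lemma~\ref{lem:1}, Theorems~\ref{thm:1}--\ref{thm:3}) on each machine $\ell$ with its own parameters $(\kappa_\ell,r_\ell)$; this is legitimate because the data within server $\ell$ are i.i.d.\ with covariance $\bSigma^{(\ell)}$ and distinct servers are independent. Write $\bSigma^{*(\ell)}:=\E\big(\hat\bV_K^{(\ell)}\hat\bV_K^{(\ell)T}\big)$, $\bSigma^*:=\frac1m\sum_{\ell=1}^m\bSigma^{*(\ell)}=\E\widetilde\bSigma$, and let $\bV_K^*$ collect the top $K$ eigenvectors of $\bSigma^*$; the common-eigenspace hypothesis enters through $\bV_K^{(\ell)}\bV_K^{(\ell)T}=\bV_K\bV_K^T$ for every $\ell$. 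For the variance piece, Lemma~\ref{lem:1} on server $\ell$ (valid since $n\ge r_\ell$ under either hypothesis) gives $\big\|\fnorm{\hat\bV_K^{(\ell)}\hat\bV_K^{(\ell)T}-\bSigma^{*(\ell)}}\big\|_{\psi_1}\le C\kappa_\ell\sqrt{Kr_\ell/n}$; since $\widetilde\bSigma-\bSigma^*=\frac1m\sum_\ell(\hat\bV_K^{(\ell)}\hat\bV_K^{(\ell)T}-\bSigma^{*(\ell)})$ is an average of independent mean-zero matrices regarded as vectors in $\R^{d^2}$, the Bernstein inequality for sums of independent sub-exponential vectors (the heterogeneous form of \eqref{eq:4.3}, now with server-dependent $\psi_1$-norms) yields $\big\|\fnorm{\widetilde\bSigma-\bSigma^*}\big\|_{\psi_1}\lesssim\frac1m\big(\sum_\ell\kappa_\ell^2 Kr_\ell/n\big)^{1/2}=\big(\frac1m\sum_\ell S_\ell^2\big)^{1/2}$, using $S_\ell^2=\kappa_\ell^2 Kr_\ell/N$ and $N=mn$.

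\textbf{Part 1 (symmetric innovation).} Under $n\ge C\sqrt K\max_\ell(\kappa_\ell^2 r_\ell)$ with $C$ large, the byproduct of Theorem~\ref{thm:3} applied on server $\ell$ (together with $\bV_K^{(\ell)}\bV_K^{(\ell)T}=\bV_K\bV_K^T$) gives $\opnorm{\bSigma^{*(\ell)}-\bV_K\bV_K^T}\lesssim\kappa_\ell^2\sqrt K r_\ell/n\le\tfrac14$ for every $\ell$, hence $\opnorm{\bSigma^*-\bV_K\bV_K^T}\le\tfrac14$, so by Weyl $\lambda_K(\bSigma^*)-\lambda_{K+1}(\bSigma^*)\ge\tfrac12$ and Davis--Kahan \eqref{eq:4.2} gives $\|\rho(\widetilde\bV_K,\bV_K^*)\|_{\psi_1}\lesssim(\frac1m\sum_\ell S_\ell^2)^{1/2}$. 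For the bias, Theorem~\ref{thm:2} applied on server $\ell$ (using the bound $\opnorm{\bSigma^{*(\ell)}-\bV_K\bV_K^T}<\tfrac12$ just obtained) shows the top $K$ eigenspace of $\bSigma^{*(\ell)}$ equals $\col(\bV_K)$; being symmetric, $\bSigma^{*(\ell)}$ is then block diagonal along $\col(\bV_K)\oplus\col(\bV_K)^\perp$, with its $\col(\bV_K)$-block $\succeq\tfrac34\bI_K$ and its $\col(\bV_K)^\perp$-block $\preceq\tfrac14\bI$. Averaging over $\ell$ preserves both the block structure and these inequalities, so the top $K$ eigenvectors of $\bSigma^*$ span exactly $\col(\bV_K)$, i.e.\ $\rho(\bV_K^*,\bV_K)=0$. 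Substituting both estimates into \eqref{eq:4.1} proves \eqref{eq:4.13}.

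\textbf{Part 2 (general distribution).} Here we bypass $\bV_K^*$ altogether: the Davis--Kahan theorem (Theorem 2 in \cite{YWS15}) with reference matrix $\bV_K\bV_K^T$, whose eigengap equals $1$, gives the unconditional bound $\rho(\widetilde\bV_K,\bV_K)\lesssim\fnorm{\widetilde\bSigma-\bV_K\bV_K^T}\le\fnorm{\widetilde\bSigma-\bSigma^*}+\fnorm{\bSigma^*-\bV_K\bV_K^T}$. The first term is controlled by the variance estimate above. For the second, the triangle inequality and Theorem~\ref{thm:3} on each server (again valid for $n\ge r_\ell$) give $\fnorm{\bSigma^*-\bV_K\bV_K^T}\le\frac1m\sum_\ell\fnorm{\bSigma^{*(\ell)}-\bV_K^{(\ell)}\bV_K^{(\ell)T}}\lesssim\frac1m\sum_\ell\kappa_\ell^2\sqrt K r_\ell/n=\frac1m\sum_\ell B_\ell$. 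Adding the two bounds and taking $\psi_1$-norms gives \eqref{eq:5.2}; since the $\bV_K\bV_K^T$-reference Davis--Kahan requires no smallness, the condition $n\ge\max_\ell r_\ell$ suffices here.

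\textbf{Main obstacle.} The delicate step is the vanishing of the bias in Part 1. Theorem~\ref{thm:2} pins down the top $K$ eigenspace of each \emph{individual} $\bSigma^{*(\ell)}$, but does not immediately control the top $K$ eigenspace of the \emph{average} $\bSigma^*$, and crude averaging would destroy any spectral structure. The block-diagonal reduction is what rescues the argument, and it rests on two facts that must be lined up: all servers share the \emph{same} subspace $\col(\bV_K)$ (not merely $K$-dimensional eigenspaces), so the blocks add coherently; and the sample-size threshold $n\gtrsim\sqrt K\max_\ell(\kappa_\ell^2 r_\ell)$ keeps $\opnorm{\bSigma^{*(\ell)}-\bV_K\bV_K^T}$ below $\tfrac14$ on \emph{every} server, which is exactly what keeps the top block ($\succeq\tfrac34\bI_K$) separated from the bottom block ($\preceq\tfrac14\bI$) before and after averaging. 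Everything else --- the per-server uses of Lemma~\ref{lem:1} and Theorem~\ref{thm:3} and the non-identically-distributed Bernstein inequality --- is a routine adaptation of the i.i.d.\ proofs.
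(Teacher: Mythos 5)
Your proof is correct and takes essentially the same route as the paper: for the general case, Davis--Kahan against the reference projection $\bV_K\bV_K^T$ (eigengap $1$) plus a triangle inequality, with Lemma~\ref{lem:1} and the sub-exponential vector Bernstein bound (Lemma~\ref{lem-Bosq}) controlling the variance and Theorem~\ref{thm:3} server-by-server controlling the bias; for the symmetric case, the per-server smallness of $\opnorm{\bSigma^{*(\ell)}-\bV_K\bV_K^T}$ kills the bias via Theorem~\ref{thm:2} and Davis--Kahan against $\bSigma^*$ handles the variance. In fact your block-diagonal argument for why averaging the $\bSigma^{*(\ell)}$ preserves $\col(\bV_K)$ as the top $K$ eigenspace makes explicit a step that the paper's proof asserts without justification, so the write-up is if anything slightly more complete than the original.
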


\subsection{Heterogeneous factor models}
	
	Suppose on the server $\ell$, the data conform to a factor model as below.
	\[
		\bX_i^{(\ell)} = \bB^{(\ell)}\bff_i^{(\ell)} + \bu_i^{(\ell)}, ~~~i \in [n],
	\]
	where $\bB^{(\ell)}\in \RR^{d\times K}$ is the loading matrix, $\bff_i^{(\ell)}\in \RR^{K}$ is the factor that satisfies $\Cov (\bff_i^{(\ell)})= \bI$ and $\bu_i^{(\ell)}\in \RR^{d}$ is the residual vector. It is not hard to see that $\bSigma^{(\ell)}= \Cov (\bX^{(\ell)}_i)= \bB^{(\ell)}\bB^{(\ell)T} + \bSigma^{(\ell)}_u$, where $\bSigma^{(\ell)}_u$ is the covariance matrix of $\bu^{(\ell)}_i$.
	
	 Let $\bB^{(\ell)}\bB^{(\ell)T}= \bV_K^{(\ell)}\bLambda_K^{(\ell)}\bV_K^{(\ell)T}$ be the spectral decomposition of $\bB^{(\ell)}\bB^{(\ell)T}$. We assume that there exists a projection matrix $\bP_K=\bV_K\bV_K^T$, where $\bV_K\in\O_{d\times K}$, such that $\bV_K^{(\ell)}\bV_K^{(\ell)T}=\bP_K$ for all $\ell\in[m]$. In other words, $\{\bB^{(\ell)}\bB^{(\ell)T}\}_{\ell=1}^m$ share the same top $K$ eigenspace. Given the context of factor models, this implies that the factors have similar impact on the variation of the data across servers. Our goal now is to recover $\col(\bV_K)$ by the distributed PCA approach, namely Algorithm \ref{algo:1}.
	
	 Recall that $\hat{\bSigma}^{(\ell)}=\frac{1}{n}\sum_{i=1}^{n}\bX_i^{(\ell)}\bX_i^{(\ell)^T}$ is the sample covariance matrix on the $\ell$-th machine, and $\hat{\bV}_K^{(\ell)}=(\hat{\bv}^{(\ell)}_1,\cdots,\hat{\bv}^{(\ell)}_K)\in\O_{d\times K}$ stores $K$ leading eigenvectors of $\hat{\bSigma}^{(\ell)}$.
	 Define $\widetilde{\bSigma}=\frac{1}{m}\sum_{\ell=1}^{m}\hat{\bV}_K^{(\ell)}\hat{\bV}_K^{(\ell)T}$, and let $\widetilde{\bV}_K\in\O_{d\times K}$ be the top $K$ eigenvectors of $\widetilde{\bSigma}$.	
	 Below we present a theorem that characterizes the statistical performance of the distributed PCA under the heterogeneous factor models.

\begin{thm}
	\label{thm:7}
For each $\ell\in [m]$, let $S_{\ell}= \kappa_{\ell} \sqrt{\frac{K r_{\ell}}{N}}$ and $B_{\ell}=
\frac{\kappa^2_{\ell}\sqrt{K} r_{\ell} }{n}$.
There exist some positive constants $C_1$, $C_2$ and $C_3 $ such that when $n \geq \max_{\ell\in[m]} r_{\ell}$,
\begin{equation}
\begin{split}\label{eq:5.4}
		&\|\rho(\widetilde \bV_K, \bV_K)\|_{\psi_1} \le C_1
		\sqrt{\frac{1}{m}\sum\limits_{\ell=1}^m S_{\ell}^2}+\frac{C_2}{m}\sum\limits_{\ell=1}^m B_{\ell}
		+C_3\frac{\sqrt{K}}{m}\sum\limits_{\ell=1}^{m}\frac{\|\bSigma_u^{(\ell)}\|_{2}}{\lambda_K( \bLambda_K^{(\ell)} )}.
\end{split}
\end{equation}
\end{thm}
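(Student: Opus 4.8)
The plan is to mirror the bias--variance decomposition of \eqref{eq:4.1} but now with respect to the common target $\bV_K\bV_K^T$, treating the factor-model residuals as an extra source of perturbation. Write $\bSigma^*_{(\ell)} := \E(\hat\bV_K^{(\ell)}\hat\bV_K^{(\ell)T})$ and decompose
\[
\rho(\widetilde\bV_K, \bV_K) \le \underbrace{\rho\Bigl(\widetilde\bV_K, \widetilde\bV_K^{*}\Bigr)}_{\text{variance}} + \underbrace{\Bigl\| \tfrac{1}{m}\sum_{\ell=1}^m \bSigma^*_{(\ell)} - \bV_K\bV_K^T \Bigr\|\text{-driven bias}}_{\text{deterministic}},
\]
where $\widetilde\bV_K^{*}$ denotes the top $K$ eigenvectors of $\frac1m\sum_\ell \bSigma^*_{(\ell)}$. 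The first step is to recompute Lemma~\ref{lem:1} for a single heterogeneous machine: the sample covariance $\hat\bSigma^{(\ell)}$ concentrates around $\bSigma^{(\ell)} = \bB^{(\ell)}\bB^{(\ell)T} + \bSigma_u^{(\ell)}$, and Davis--Kahan relative to $\bSigma^{(\ell)}$ gives $\|\hat\bV_K^{(\ell)}\hat\bV_K^{(\ell)T} - \bV_K^{(\ell)}\bV_K^{(\ell)T}\|_{\psi_1} \lesssim \kappa_\ell\sqrt{Kr_\ell/n}$; since $\bV_K^{(\ell)}\bV_K^{(\ell)T}=\bV_K\bV_K^T$ by assumption, this also controls $\fnorm{\hat\bV_K^{(\ell)}\hat\bV_K^{(\ell)T}-\bV_K\bV_K^T}$. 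Averaging over $\ell$ with the independence across servers (a Bernstein-type bound for sums of independent sub-exponential matrices, exactly as in the proof of Theorem~\ref{thm:1}) yields the variance term $C_1\sqrt{\frac1m\sum_\ell S_\ell^2}$, and also bounds $\fnorm{\widetilde\bSigma - \frac1m\sum_\ell\bSigma^*_{(\ell)}}$ so that the Davis--Kahan step for $\rho(\widetilde\bV_K,\widetilde\bV_K^*)$ goes through once we know the eigen-gap of $\frac1m\sum_\ell\bSigma^*_{(\ell)}$ is bounded below.

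For the bias, the key is that $\bV_K^{(\ell)}\bV_K^{(\ell)T}$ is no longer the top-$K$ eigenspace of $\bSigma^{(\ell)}$ exactly---it is the top-$K$ eigenspace of $\bB^{(\ell)}\bB^{(\ell)T}$, and $\bSigma_u^{(\ell)}$ tilts it. So I would split
\[
\bSigma^*_{(\ell)} - \bV_K\bV_K^T = \Bigl(\E[\hat\bV_K^{(\ell)}\hat\bV_K^{(\ell)T}] - \E[\bar\bV_K^{(\ell)}\bar\bV_K^{(\ell)T}]\Bigr) + \Bigl(\E[\bar\bV_K^{(\ell)}\bar\bV_K^{(\ell)T}] - \bV_K\bV_K^T\Bigr),
\]
where $\bar\bV_K^{(\ell)}$ is the top-$K$ eigenspace of a ``noiseless'' proxy. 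The first bracket is the genuine finite-sample bias and is handled by Lemma~\ref{lem:2} applied with $\bA=\bSigma^{(\ell)}$, $\hat\bA=\hat\bSigma^{(\ell)}$, $S=[K]$, giving a bound of order $\kappa_\ell^2\sqrt{K}r_\ell/n = B_\ell$ after taking expectations via Jensen and using that $\E(\hat\bSigma^{(\ell)}-\bSigma^{(\ell)})=0$ kills the linear term---this is verbatim the argument behind Theorem~\ref{thm:3}. The second bracket is a deterministic perturbation of the eigenspace of $\bB^{(\ell)}\bB^{(\ell)T}+\bSigma_u^{(\ell)}$ away from that of $\bB^{(\ell)}\bB^{(\ell)T}$; Davis--Kahan gives $\fnorm{\cdot} \lesssim \sqrt{K}\,\|\bSigma_u^{(\ell)}\|_2 / \lambda_K(\bLambda_K^{(\ell)})$, which is the third term in \eqref{eq:5.4}. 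Summing $\frac1m\sum_\ell$ of the two deterministic pieces and feeding everything into $\rho(\widetilde\bV_K,\bV_K)\le\rho(\widetilde\bV_K,\widetilde\bV_K^*)+\fnorm{\frac1m\sum_\ell\bSigma^*_{(\ell)}-\bV_K\bV_K^T}$ (the triangle inequality composed with Davis--Kahan on $\frac1m\sum_\ell\bSigma^*_{(\ell)}$) produces \eqref{eq:5.4}.

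The main obstacle is controlling the eigen-gap $\lambda_K - \lambda_{K+1}$ of $\frac1m\sum_\ell \bSigma^*_{(\ell)}$ from below by a universal constant, since both the Davis--Kahan denominators (for $\rho(\widetilde\bV_K,\widetilde\bV_K^*)$ and for the bias) depend on it. Under $n\ge\max_\ell r_\ell$ I would argue that each $\bSigma^*_{(\ell)}$ is within $O(1)$---in fact, by the bias estimate just derived, within a controlled small quantity plus the $\|\bSigma_u^{(\ell)}\|_2/\lambda_K(\bLambda_K^{(\ell)})$ term---of the common projector $\bV_K\bV_K^T$, whose gap is exactly $1$; one then needs an implicit smallness assumption (or the theorem should be read as vacuous unless $\max_\ell \|\bSigma_u^{(\ell)}\|_2/\lambda_K(\bLambda_K^{(\ell)})$ and $\max_\ell B_\ell$ are below an absolute threshold) so that Weyl's inequality keeps the averaged gap above, say, $1/2$. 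Making this self-consistent---the gap bound needs the bias bound, which needs the gap bound---requires a short bootstrap/continuity argument, and that interplay is the delicate part; everything else is a reassembly of Lemmas~\ref{lem:1} and~\ref{lem:2} and the proofs of Theorems~\ref{thm:1}, \ref{thm:3}, and~\ref{thm:6}.
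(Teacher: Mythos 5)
Your bias decomposition is exactly the paper's: writing $\bSigma^*_\ell - \bV_K\bV_K^T$ as a finite-sample-bias piece (handled by Lemma~\ref{lem:2}/Theorem~\ref{thm:3}, giving $B_\ell$) plus a deterministic factor-misspecification piece (Davis--Kahan between eigenspaces of $\bSigma^{(\ell)}=\bB^{(\ell)}\bB^{(\ell)T}+\bSigma_u^{(\ell)}$ and of $\bB^{(\ell)}\bB^{(\ell)T}$, giving $\sqrt{K}\|\bSigma_u^{(\ell)}\|_2/\lambda_K(\bLambda_K^{(\ell)})$) is the same split the paper uses, with $\bar\bV_K^{(\ell)}$ the deterministic top-$K$ eigenbasis of $\bSigma^{(\ell)}$ (no expectation is needed on that factor). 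The variance term and the Bernstein-type aggregation via Lemma~\ref{lem-Bosq} are also as in the paper.

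Where you diverge, and where your proposal is not actually viable without extra work, is the outer decomposition $\rho(\widetilde\bV_K,\bV_K)\le\rho(\widetilde\bV_K,\bV_K^*)+\rho(\bV_K^*,\bV_K)$. You correctly identify that this requires a lower bound on the eigengap of $\bSigma^*=\frac1m\sum_\ell\bSigma^*_\ell$, and that establishing it would need a bootstrap or an implicit smallness condition on $\max_\ell\|\bSigma_u^{(\ell)}\|_2/\lambda_K(\bLambda_K^{(\ell)})$. But unlike Theorem~\ref{thm:4}, this heterogeneity bias does not vanish as $n\to\infty$, so your route either needs a smallness hypothesis that the theorem does not carry, or an explicit case split (if the third term is $\Omega(\sqrt K)$ the claim holds trivially since $\rho\le\sqrt{2K}$ a.s.). The paper avoids all of this by not going through $\bV_K^*$ at all: it applies the Yu--Wang--Samworth Davis--Kahan variant directly with $\bV_K\bV_K^T$ as the reference matrix, whose relevant eigengap $\lambda_K-\lambda_{K+1}=1-0=1$ is universal, giving the unconditional bound $\rho(\widetilde\bV_K,\bV_K)\lesssim\|\widetilde\bSigma-\bV_K\bV_K^T\|_F$, and only then triangulates $\|\widetilde\bSigma-\bV_K\bV_K^T\|_F\le\|\widetilde\bSigma-\bSigma^*\|_F+\|\bSigma^*-\bV_K\bV_K^T\|_F$. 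This one change eliminates the self-referential gap control you flagged as the delicate step, and it is what makes the stated theorem hold with no hidden smallness condition. Your proposal has the right pieces but chose the one decomposition for which the gap genuinely is a problem; reorder the Davis--Kahan step to target $\bV_K\bV_K^T$ rather than $\bSigma^*$ and the rest of your outline goes through.
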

The first two terms in the RHS of \eqref{eq:5.4} are similar to those in \eqref{eq:5.2}, while the third term characterizes the effect of heterogeneity in statistical efficiency of $\widetilde\bV_K$. When $\|\bSigma^{(\ell)}_u\|_{2}$ is small compared with $\lambda_K( \bLambda_K^{(\ell)} )$ as in spiky factor models, $\bSigma^{(\ell)}_u$ can hardly distort the eigenspace $\col(\bV_K)$ and thus has little influence on the final statistical error of $\widetilde\bV_K$.

\section{Simulation study}
\label{sec:6}

In this section, we conduct Monte Carlo simulations to validate the statistical error rate of $\widetilde\bV_K$ that is established in the previous section. We also compare the statistical accuracy of $\widetilde\bV_K$ and its full sample counterpart $\hat\bV_K$, that is, the empirical top $K$ eigenspace based on the full sample covariance. The main message is that our proposed distributed estimator performs equally well as the full sample estimator $\hat\bV_K$ when the subsample size $n$ is large enough.

\subsection{Verification of the statistical error rate}

Consider $\{\bx_i\}_{i=1}^N$ i.i.d. following $N(\bzero, \bSigma)$, where $\bSigma= \diag(\lambda, \lambda/2, \lambda/4, 1, \cdots, 1)$. Here the number of spiky eigenvalues $K=3$ and $\bV_K=(\be_1, \be_2, \be_3)$. We generate $m$ subsamples, each of which has $n$ samples, and run our proposed distributed PCA algorithm (Algorithm \ref{algo:1}) to calculate $\widetilde\bV_K$. Since the centered multivariate Gaussian distribution is symmetric, according to Theorem \ref{thm:4}, when $\lambda=O(d)$ we have
\beq
	\label{eq:6.1}
	\|\rho(\widetilde\bV_K, \bV_K)\|_{\psi_1} =O \Bigl(\frac{C_1\opnorm{\bSigma}}{\lambda_K-\lambda_{K+1}}\sqrt{\frac{Kr(\bSigma)}{N}}\Bigr)= O\Bigl( \sqrt{\frac{d}{mn\delta}}\Bigr),
\eeq
where $\delta:=\lambda_K-\lambda_{K+1}= \lambda/4-1$. Now we provide numerical verification of the order of the number of servers $m$, the eigengap $\delta$, the subsample size $n$ and dimension $d$ in the statistical error.

\begin{figure}[H]
\centerline{
 \includegraphics[width=\textwidth]{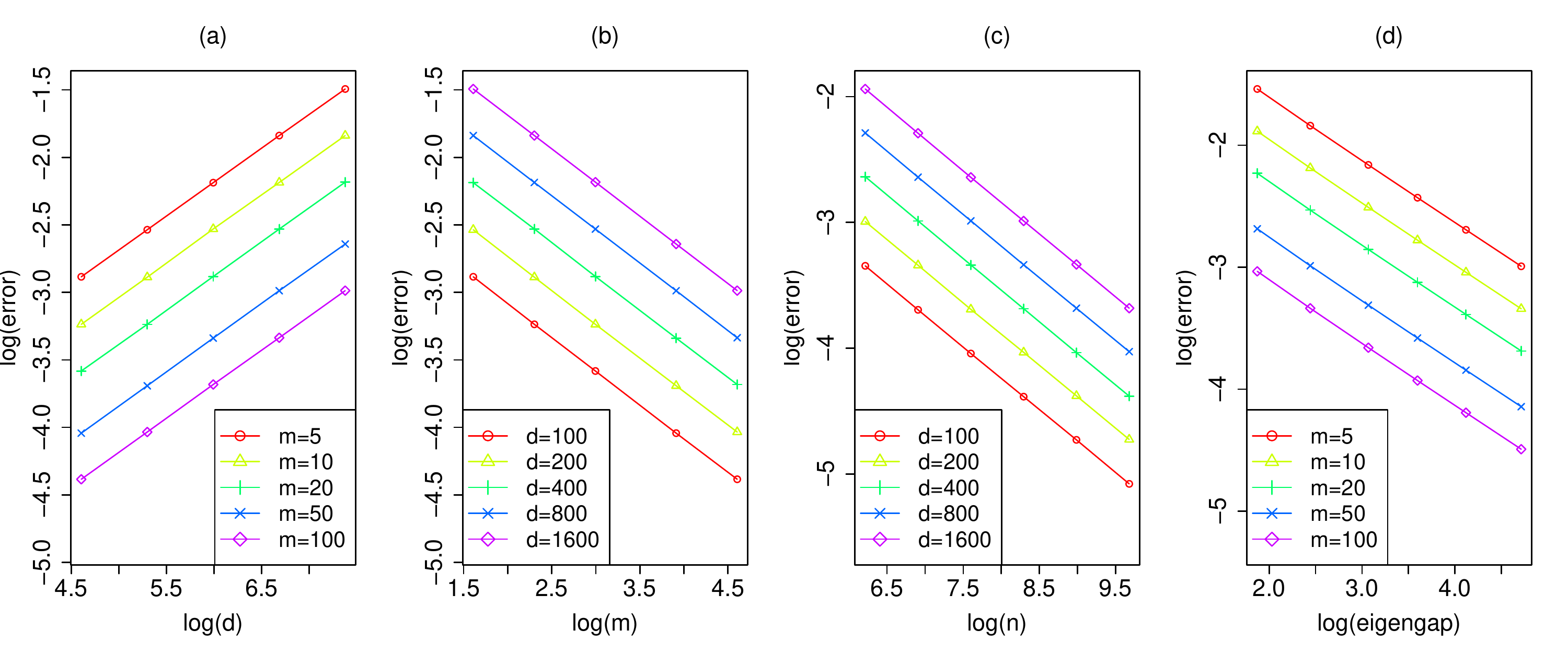}
}
\caption{Statistical error rate with respect to: (a) the dimension $d$ when $\lambda=50$ and $n=2000$; (b) the number of servers $m$ when $\lambda=50$ and $n=2000$; (c) the subsample size $n$ when $\lambda=50$ and $m=50$; (d) the eigengap $\delta$ when $d=800$ and $n=2000$.}
\label{fig:allrates}
\end{figure}

Figure \ref{fig:allrates} presents four plots that demonstrate how $\rho(\widetilde\bV_K, \bV_K)$ changes as $d$, $m$, $n$ and $\delta$ increases respectively. Each data point on the plots is based on $100$ independent Monte Carlo simulations. Figure \ref{fig:allrates}(a) demonstrates how $\rho(\widetilde\bV_K, \bV_K)$ increases with respect to the increasing dimension $d$ when $\lambda=50$ and $n=2000$. Each line on the plot represents a fixed number of machines $m$. Figure \ref{fig:allrates}(b) shows the decay rate of $\rho(\widetilde\bV_K, \bV_K)$ as the number of servers $m$ increases when $\lambda=50$ and $n=2000$. Different lines on the plot correspond to different dimensions $d$. Figure \ref{fig:allrates}(c) demonstrates how $\rho(\widetilde\bV_K, \bV_K)$ decays as the subsample size $n$ increases when $\lambda=50$ and $m=50$. Figure \ref{fig:allrates}(d) shows the relationship between $\rho(\widetilde\bV_K, \bV_K)$ and the eigengap $\delta$ when $d=800$ and $n=2000$. The results from Figures \ref{fig:allrates}(a)-\ref{fig:allrates}(d) show that $\rho(\widetilde\bV_K, \bV_K)$ is proportion to $d^{\frac{1}{2}}$, $m^{-\frac{1}{2}}$, $n^{-\frac{1}{2}}$ and $\delta^{-\frac{1}{2}}$ respectively when the other three parameters are fixed. These empirical results are all consistent with \eqref{eq:6.1}.

Figure \ref{fig:allrates} demonstrates the marginal relationship between $\rho(\widetilde\bV_K, \bV_K)$ and the four parameters $m$, $n$, $d$ and $\delta$. Now we study their joint relationship. Inspired by \eqref{eq:6.1}, we consider a multiple regression model as follows:
\begin{equation}
\label{eq:mregression}
\log(\rho(\widetilde\bV_K, \bV_K)) = \beta_0 + \beta_1 \log(d) + \beta_2 \log(m) +
\beta_3 \log(n) + \beta_4 \log(\delta) + \varepsilon ,
\end{equation}
where $\varepsilon$ is the error term. We collect all the data points $(d,m,n,\delta, \rho(\widetilde\bV_K, \bV_K))$ from four plots in Figure \ref{fig:allrates} to fit the regression model \eqref{eq:mregression}. The fitting result is that $\hat{\beta}_1 = 0.5043$, $\hat{\beta}_2 = -0.4995$, $\hat{\beta}_3 = -0.5011$ and $\hat{\beta}_4 = -0.5120$ with the multiple $R^2 = 0.99997$. These estimates are quite consistent with the theoretical results in \eqref{eq:6.1}. Moreover, Figure \ref{fig:fitted} plots all the observed values of $\log(\rho(\widetilde\bV_K, \bV_K))$ against its fitted values by the linear model \eqref{eq:mregression}. We can see that the observed and fitted values perfectly match. It indicates that the multiple regression model \eqref{eq:mregression} well explains the joint relationship between the statistical error and the four parameters $m$, $n$, $d$ and $\delta$.

\begin{figure}[]
\centerline{
 \includegraphics[width=.6\textwidth]{./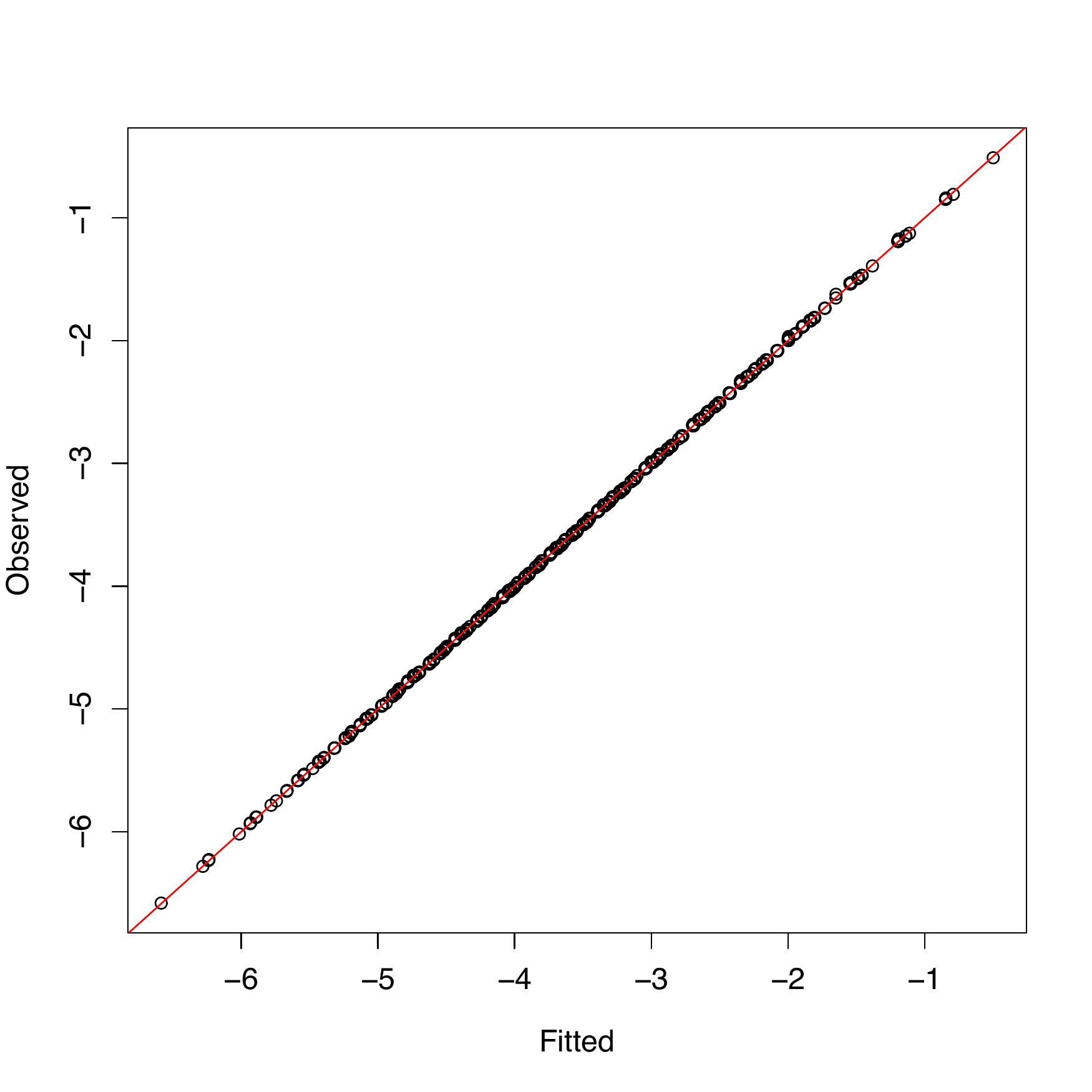}
}
\caption{Observed and fitted values of $\log(\rho(\widetilde\bV_K, \bV_K))$.}
\label{fig:fitted}
\end{figure}

\subsection{The effects of splitting}

In this section we investigate how the number of data splits $m$ affects the statistical performance of $\widetilde\bV_K$ when the total sample size $N$ is fixed. Since $N=mn$, it is easy to see that the larger $m$ is, the smaller $n$ will be, and hence the less computational load there will be on each individual server. In this way, to reduce the time consumption of the distributed algorithm, we prefer more splits of the data. However, per the assumptions of Theorem \ref{thm:4}, the subsample size $n$ should be large enough to achieve the optimal statistical performance of $\widetilde\bV_K$. This motivates us to numerically illustrate how $\rho(\widetilde\bV_K, \bV_K)$ changes as $m$ increases with $N$ fixed.

We adopt the same data generation process as described in the beginning of Section 6.1 with $\lambda=50$ and $N=6000$. We split the data into $m$ subsamples where $m$ is chosen to be all the factors of $N$ that are less than or equal to $300$. Figure \ref{fig:splitN} plots  $\rho(\widetilde\bV_K, \bV_K)$ with respect to the number of machines $m$. Each point on the plot is based on $100$ simulations. Each line corresponds to a different dimension $d$.

The results show that when the number of machines is not unreasonably large, or equivalently the number of subsample size $n$ is not small, the statistical error does not depend on the number of machines when $N$ is fixed. This is consistent with \eqref{eq:6.1} where the statistical error rate only depends on the total sample size $N=mn$.  When the number of machines $m$ is large ($\log m \ge 5$), or the subsample size $n$ is small, we observe slightly growing statistical error of the distributed PCA. This is aligned with the required lower bound of $n$ in Theorem \ref{thm:4} to achieve the optimal statistical performance of $\widetilde\bV_K$. Note that even when $m=300$ ($\log(m) \approx 5.7$) and $n=20$, our distributed PCA performs very well. This demonstrates that distributed PCA is statistically efficient as long as $m$ is within a reasonable range.

\begin{figure}[]
\centerline{
 \includegraphics[width=.8\textwidth]{./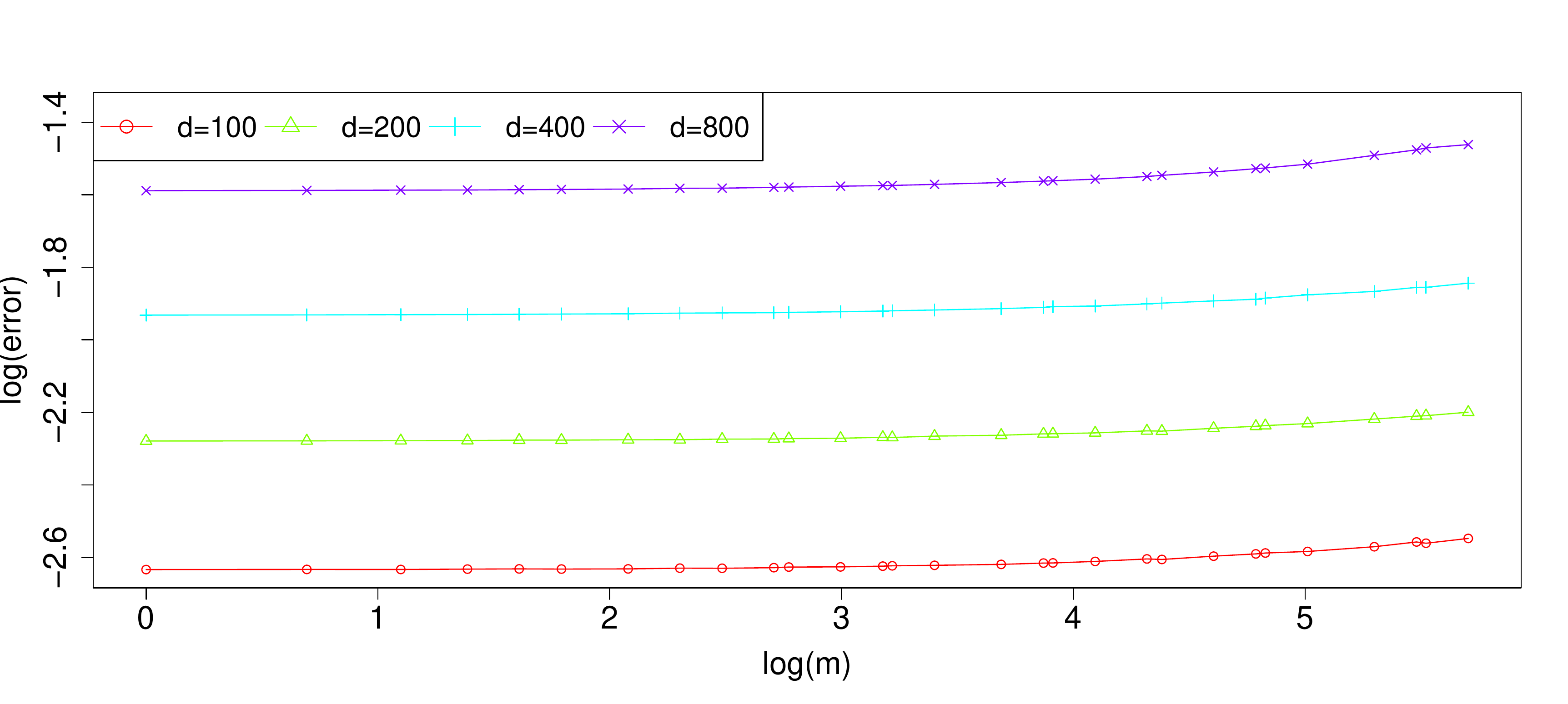}
}
\caption{Statistical error with respect to the number of machines when the total sample size $N=6000$ is fixed.}
\label{fig:splitN}
\end{figure}

\subsection{Comparison between distributed and full sample PCA}

In this subsection, we compare the statistical performance of the following three methods:
\begin{enumerate}
	\item Distributed PCA (DP)
	\item Full sample PCA (FP), i.e., the PCA based on the all the samples
	\item Distributed PCA with communication of five additional largest eigenvectors (DP5).
\end{enumerate}
Here we explain more on the third method DP5. The difference between DP5 and DP is that on each server, DP5 calculates $\hat\bV^{(\ell)}_{K+5}$, the top $K+5$ eigenvectors of $\bSigma^{(\ell)}$ and send them to the central server, and the central server computes the top $K$ eigenvectors of $(1/m)\sum_{\ell=1}^m \hat\bV^{(\ell)}_{K+5}\hat\bV_{K+5}^{(\ell)^T}$ as the final output. Intuitively, DP5 communicates more information of the covariance structure and is designed to guide the spill-over effects of the eigenspace spanned by the top $K$ eigenvalues. In Figure \ref{fig:comp}, we compare the performance of all the three methods under various scenarios.

\begin{figure}[H]
	\centering
	\includegraphics[width=.9\textwidth]{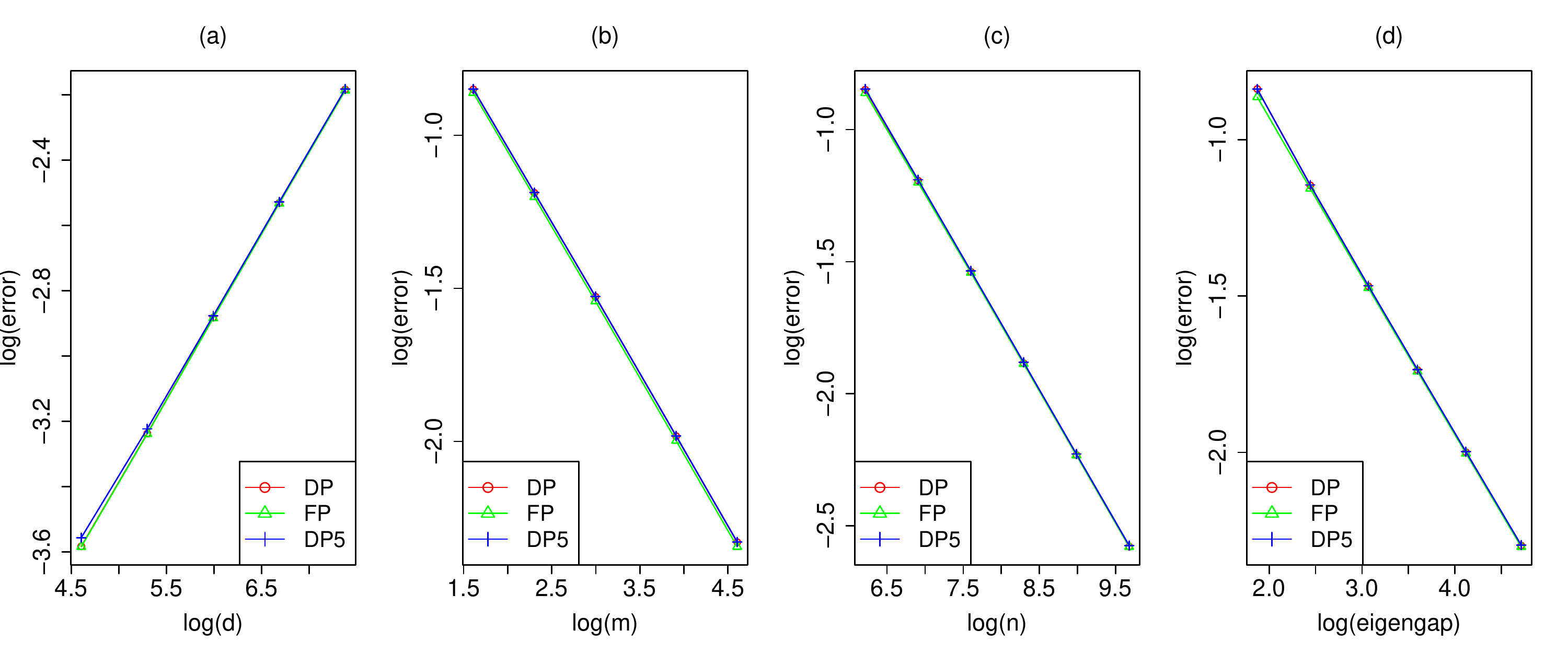}
	\caption{Comparison between DP, FP and DP5: (a) $m=20$, $n=2000$ and $\lambda=50$; (b) $d=1600$, $n=1000$ and $\lambda=30$; (c) $d=800$, $m=5$ and $\lambda=30$; (d) $d=1600$, $m=10$ and $n=500$.}
	\label{fig:comp}
\end{figure}

From Figures \ref{fig:comp}(a)-\ref{fig:comp}(d), we can see that all the three methods have similar finite sample performance. This means that it suffices to communicate $K$ eigenvectors to enjoy the same statistical accuracy as the full sample PCA. For more challenging situations with large $p/(mn\delta)$ ratios, small improvements using FP are visible.

\section{Discussion}

Our theoretical results are established under sub-Gaussian assumptions of the data.
We believe that similar results will hold under distributions with heavier tails than sub-Gaussian tails, or more specifically, with only bounded fourth moment. Typical examples are Student t-distributions with more than four degrees of freedom, Pareto distribution, etc. The only difference is that with heavy-tailed distribution, if the local estimators are still the top eigenspaces of the sample covariance matrix, we will not be able to derive exponential deviation bounds. To establish statistical rate with exponential deviation, special treatments of data, including shrinkage \citep{FWZ16,MSt16,WMi17},  are needed, and the bias induced by such treatments should be carefully controlled. This will be an interesting future problem to study.

\bibliographystyle{ims}
\bibliography{distributed}
	
\section{Proofs and technical lemmas}

\subsection{Proof of main results}

\subsubsection{Proof of Lemma \ref{lem:1}}
\begin{proof}
It follows from concentration of sample covariance matrix (Lemma \ref{lem-cov-concentration}) that	$\left\| \| \hat{\bSigma}^{(1)} - \bSigma \|_2 \right\|_{\psi_1} \lesssim \lambda_1 \sqrt{r/n}$. By the variant of Davis-Kahan theorem in \cite{YWS15},
	\[
	\rho( \hat\bV^{(1)}_K , \bV_K )=
	\fnorm{\hat\bV^{(1)}_K\hat\bV^{(1)T}_K- \bV_K \bV_K^T }
	=\sqrt{2} \sin \bTheta ( \hat\bV^{(1)}_K, \bV_K )
	\lesssim \sqrt{K}\| \hat{\bSigma}^{(1)} - \bSigma \|_2/ \Delta
	.
	\]
	Hence
	\[
	\left\| \rho( \hat\bV^{(1)}_K , \bV_K ) \right\|_{\psi_1}
	\lesssim \sqrt{K}\left\| \| \hat{\bSigma}^{(1)} - \bSigma \|_2 \right\|_{\psi_1} / \Delta
	\lesssim \kappa \sqrt{ K r/n }.
	\]
	By Jensen's inequality,
	\begin{align*}
	&\| \bSigma^* - \bV_K \bV_K^T \|_F = \fnorm{ \E ( \hat\bV^{(1)}_K\hat\bV^{(1)^T}_K ) - \bV_K \bV_K^T }
	\leq \E \fnorm{ \hat\bV^{(1)}_K\hat\bV^{(1)^T}_K  - \bV_K \bV_K^T }\\
	&=\E \rho ( \hat\bV^{(1)}_K,\bV_K )
	\leq \left\| \rho( \hat\bV^{(1)}_K , \bV_K ) \right\|_{\psi_1}.
	\end{align*}
	Therefore,
	\begin{align*}
	\left\|
	\fnorm{\hat\bV^{(1)}_K\hat\bV^{(1)^T}_K- \bSigma^*}
	\right\|_{\psi_1}
	&\leq \left\|
	\fnorm{\hat\bV^{(1)}_K\hat\bV^{(1)^T}_K- \bV_K \bV_K^T }
	\right\|_{\psi_1} + \| \bSigma^* - \bV_K \bV_K^T \|_F\\
	&\leq 2 \left\| \rho( \hat\bV^{(1)}_K , \bV_K ) \right\|_{\psi_1}
	\lesssim \kappa \sqrt{\frac{K r}{n}}.
	\end{align*}

\end{proof}

\subsubsection{Proof of Theorem \ref{thm:1}}
\begin{proof}
When $\|\bSigma^*- \bV_K\bV_K^T\|_2<1/4$, the Weyl's inequality forces $\lambda_K(\bSigma^*)>\frac{3}{4}$ and $\lambda_{K+1}(\bSigma^*)<\frac{1}{4}$. The Theorem 2 in \cite{YWS15} yields
\beq
\label{eq:6.3}
\rho(\widetilde\bV_K, \bV_K^*)
=\sqrt{2} \sin\bTheta (\widetilde\bV_K, \bV_K^*)
\lesssim  \frac{ \fnorm{\widetilde\bSigma- \bSigma^*} }{ \lambda_K(\bSigma^*)-\lambda_{K+1}(\bSigma^*)} \lesssim \fnorm{\widetilde\bSigma- \bSigma^*}.
\eeq
When $n\geq r$, Lemma \ref{lem-Bosq} and Lemma \ref{lem:1} imply that
\begin{align*}
&\left\| \|\widetilde{\bSigma}-\bSigma^*\|_F \right\|_{\psi_1}=
\left\| \left\|\frac{1}{m}\sum_{\ell=1}^m \hat{\bV}^{(\ell)}_K\hat{\bV}^{(\ell)T}_K- \bSigma^* \right\|_F \right\|_{\psi_1}
\lesssim \frac{1}{\sqrt{m}} \left\| \fnorm{\hat\bV^{(1)}_K\hat\bV^{(1)T}_K- \bSigma^*} \right\|_{\psi_1}
\lesssim \kappa \sqrt{\frac{Kr}{N}}.
\end{align*}
Combining the two inequalities above finishes the proof.

\end{proof}

\subsubsection{Proof of Theorem \ref{thm:2}}
\begin{proof}
	Choose $j \in[d]$ and let $\bD_j=\bI-2\be_j \be_j^T$.  Let $\bSigma=\bV\bLambda\bV^T$ be the spectral decomposition of $\bSigma$.  Assume that $\hat\lambda$ is an eigenvalue of the sample covariance $\hat\bSigma= (1/n)\sum\limits_{i=1}^n \bX_i\bX_i^T$ and $\hat\bv\in\S^{d-1}$ is the correspondent eigenvector that satisfies $\hat\bSigma\hat\bv=\hat \lambda\hat\bv$.
	
	Define $\bZ_i= \bLambda^{-\frac{1}{2}}\bV^T\bX_i$ and $\hat\bS= (1/n)\sum\limits_{i=1}^n \bZ_i\bZ_i^T$. Note that $\hat\bSigma= \bV\bLambda^{\frac{1}{2}}\hat\bS\bLambda^{\frac{1}{2}}\bV^T$. Consider the matrix $\check\bSigma= \bV\bLambda^{\frac{1}{2}}\bD_j\hat\bS\bD_j\bLambda^{\frac{1}{2}}\bV^T$. By the sign symmetry, $\hat\bSigma$ and $\check\bSigma$ are identically distributed. It is not hard to verify that $\check\bSigma$ also has an eigenvalue $\hat \lambda$ with the correspondent eigenvector being $\bV\bD_j \bV^T\hat\bv$. Denote the top $K$ eigenvectors of $\hat\bSigma$ by $\hat\bV_K= (\hat\bv_1, \cdots, \hat\bv_K)$ and the top $K$ eigenvectors of $\check\bSigma$ by $\check\bV_K$. Therefore we have
	\[
	\begin{aligned}
	\bV^T\E (\hat\bV_K\hat\bV_K^T)\bV & = \bV^T\E (\check\bV_K\check\bV_K^T)\bV= \bV^T\bV\bD_j\bV^T\E( \hat\bV_K\hat\bV_K^T) \bV\bD_j\bV^T\bV \\
	& = \bD_j\bV^T\E (\hat\bV_K\hat\bV_K^T) \bV\bD_j.
	\end{aligned}
	\]
	Since the equation above holds for all $j \in [d]$, we can reach the conclusion that $\bV^T \E (\hat\bV_K\hat\bV_K^T) \bV$ is diagonal, i.e, $ \bSigma^* \E (\hat\bV_K\hat\bV_K^T)$ and $\bSigma$ share the same set of eigenvectors.
	
	Suppose that $\| \bSigma^*- \bV_K\bV_K^T \|_2 <1/2$. As demonstrated above, for any $k \in [K]$, the $k$th column of $\bV_K$, which we denote by $\bv_k$, should be an eigenvector of $\bSigma^*$. Note that
	\[
	\begin{aligned}
	\ltwonorm{\bSigma^*\bv_k} & =\ltwonorm{(\bSigma^*- \bV_K\bV_K^T + \bV_K\bV_K^T)\bv_k} \ge 1 -\ltwonorm{
		\bSigma^* -  \bV_K\bV_K^T} > 1-\frac{1}{2}= \frac{1}{2}.
	\end{aligned}
	\]
	With regard to $\bSigma^*$, the correspondent eigenvalue of $\bv_k$ must be greater than $1/2$. Denote any eigenvector of $\bSigma$ that is not in $\{\bv_k\}_{k=1}^K$ by $\bu$, then analogously,
	\[
	\begin{aligned}
	\ltwonorm{ \bSigma^* \bu} & = \ltwonorm{(\bSigma^*- \bV_K\bV_K^T + \bV_K\bV_K^T)\bu} \le \ltwonorm{\bSigma^*- \bV_K\bV_K^T} < \frac{1}{2}.
	\end{aligned}
	\]
	For $\bSigma^*$, the correspondent eigenvalue of $\bu$ is smaller than $1/2$. Therefore, the top $K$ eigenspace of $\bSigma^*$ is exactly $\mathrm{Col}(\bV_K)$, and $\rho(\bV^*_K, \bV_K)=0 $.
\end{proof}

\subsubsection{Proof of Lemma \ref{lem:2}}
\begin{proof}
Note that $\|f(\cdot)\|_F \leq \Delta^{-1} \| \cdot \|_F$ and
\[
\| f(\bE\bU) \|_F \leq \Delta^{-1} \| \bE \bU\|_F \leq \Delta^{-1} \sqrt{K}\| \bE \bU\|_2 \leq
\Delta^{-1} \sqrt{K} \| \bE \|_2 = \sqrt{K} \varepsilon.
\]
Hence Lemma \ref{lem:2} is a direct corollary of Lemma \ref{lem:DK-strong}.
\end{proof}

\subsubsection{Proof of Theorem \ref{thm:3}}
\begin{proof}
	Define $\bE = \hat{\bSigma}^{(1)} - \bSigma$, $\bP=\bV_K \bV_K^T$, $\hat{\bP} = \hat{\bV}_K^{(1)} \hat{\bV}_K^{(1)T}$, $\bQ = f(\bE \bV_K) \bV_K^T + \bV_K f(\bE\bV_K)^T$, $\bW = \hat{\bP} - \bP - \bQ$ and $\varepsilon = \| \bE \|_2 / \Delta$. From $\E\bQ = \mathbf{0}$ and
	\begin{align*}
	& \hat{\bP} - \bP - \bQ = \bW = \bW 1_{ \{\varepsilon \leq 1/10\} } + ( \bW + \bQ ) 1_{ \{\varepsilon > 1/10\} } - \bQ 1_{ \{\varepsilon > 1/10\} }\notag\\
	&=\bW 1_{ \{\varepsilon \leq  1/10\} } + ( \hat{\bP} - \bP) 1_{ \{\varepsilon > 1/10\} } - \bQ 1_{ \{\varepsilon > 1/10\} },
	\end{align*}
	we derive that
	\begin{align}
	& \E \hat{\bP} - \bP = \E ( \bW 1_{ \{\varepsilon \leq 1/10\} } )
	+\E [( \hat{\bP} - \bP) 1_{ \{\varepsilon > 1/10\} }] - \E (\bQ 1_{ \{\varepsilon > 1/10\} }),\notag\\
	&\| \E \hat{\bP} - \bP \|_F \leq \E ( \| \bW \|_F 1_{ \{\varepsilon \leq 1/10\} } )
	+\E ( \| \hat{\bP} - \bP \|_F 1_{ \{\varepsilon > 1/10\} } ) + \E (\|\bQ\|_F 1_{ \{\varepsilon > 1/10\} }).\label{eqn-bias-1}
	\end{align}
	We are going to bound the three terms separately. On the one hand, Lemma \ref{lem:2} implies that $\| \bW\|_F \leq 24 \sqrt{K} \varepsilon^2$ when $\varepsilon\leq 1/10$. Hence
	\begin{align}
	\E ( \| \bW \|_F 1_{ \{\varepsilon \leq 1/10\} } )
	\leq   \E  ( 24\sqrt{K} \varepsilon^2 1_{ \{\varepsilon \leq 1/10\} } )
	\lesssim \sqrt{K} \E \varepsilon^2
	.\label{eqn-bias-2}
	\end{align}
	On the other hand, the Davis-Kahan theorem shows that $ \| \hat{\bP} - \bP \|_F\lesssim \sqrt{K}\varepsilon$. Besides, it is easily seen that $\| \bQ \|_F \lesssim \|f(\bE\bV_K) \|_F \leq \sqrt{K} \| \bE \|_2/\Delta =\sqrt{K} \varepsilon$. Hence
	\begin{align}
&	\E (\| \hat{\bP} - \bP \|_F 1_{ \{\varepsilon > 1/10\} } ) + \E (\|\bQ\|_F 1_{ \{\varepsilon > 1/10\} })
	 \lesssim \sqrt{K} \E ( \varepsilon 1_{ \{\varepsilon > 1/10\} } ) \notag\\
	& \leq 10 \sqrt{K} \E (\varepsilon^2 1_{ \{\varepsilon > 1/10\} } ) \lesssim \sqrt{K} \E \varepsilon^2
	.\label{eqn-bias-4}
	\end{align}
By (\ref{eqn-bias-1}),(\ref{eqn-bias-2}), (\ref{eqn-bias-4}) and Lemma \ref{lem-cov-concentration} we have
	\begin{align}
	&\| \E \hat{\bP} - \bP \|_F \lesssim \sqrt{K} \E \varepsilon^2
	= \sqrt{K}\Delta^{-2} \E \|\bE \|_2^2 \lesssim \sqrt{K}\Delta^{-2} \left\| \|\bE \|_2 \right\|_{\psi_1}^2 \lesssim \frac{\kappa^2 \sqrt{K} r}{n}.\label{eqn-bias-6}
	\end{align}
\end{proof}

\subsubsection{Proof of Theorem \ref{thm:4}}

\begin{proof}
According to Theorem \ref{thm:3}, there exists a constant $C$ such that $\| \bSigma^* - \bV_K \bV_K^T \|_2 \le 1/4$ as long as $n \geq C \kappa^2 \sqrt{K} r \geq r$. Then Theorem \ref{thm:1} implies that $\left\| \rho(\widetilde\bV_K, \bV_K^*) \right\|_{\psi_1}
\leq C_1 \kappa \sqrt{\frac{Kr}{N}}$ for some constant $C_1$.

When random samples have symmetric innovation, we have $\rho(\bV^*_K, \bV_K)=0$ and
\[
\left\| \rho(\widetilde\bV_K, \bV_K) \right\|_{\psi_1} =
\left\| \rho(\widetilde\bV_K, \bV_K^*) \right\|_{\psi_1} \leq C_1 \kappa \sqrt{\frac{Kr}{N}}.
\]
	
For general distribution, Theorem \ref{thm:3} implies that $\rho(\bV^*_K, \bV_K)
\leq C_2 \kappa^2 \sqrt{K} r / n$ for some constant $C_2$ and
\begin{align}
\left\| \rho(\widetilde\bV_K, \bV_K) \right\|_{\psi_1} \leq
\left\| \rho(\widetilde\bV_K, \bV_K^*) \right\|_{\psi_1} + \rho(\bV^*_K, \bV_K) \leq C_1 \kappa \sqrt{\frac{Kr}{N}} + C_2 \kappa^2 \frac{\sqrt{K} r}{n}.
\label{ineq-dpca-general-proof}
\end{align}
When $m\leq C_3 n/(\kappa^2 r)  $ for some constant $C_3$, we have
\[
\kappa \sqrt{\frac{Kr}{N}} =  \sqrt{\frac{\kappa^2 Kr}{nm}} \geq  \sqrt{\frac{\kappa^2 Kr}{n\cdot C_3 n/(\kappa^2 r)}} = \frac{1}{\sqrt{C_3}}\cdot \frac{\kappa^2  \sqrt{K} r}{n},
\]
and \eqref{ineq-dpca-general-proof} forces
\[
\left\| \rho(\widetilde\bV_K, \bV_K) \right\|_{\psi_1} \leq (C_1 + C_2 \sqrt{C_3}) \kappa \sqrt{\frac{Kr}{N}}.
\]
\end{proof}

\subsubsection{Proof of Theorem \ref{thm:5}}
\begin{proof}
We first focus on the first subsample $\{ \bX_i^{(1)} \}_{i=1}^n$ and the associated top eigenvector $\hat\bV^{(1)}_1$. For ease of notation, we temporarily drop the superscript. Let $S=\sum_{i=1}^{n}W_i$ and $\widehat{\bSigma}_{\bZ}=\frac{d-1}{n}\sum_{i=1}^{n}1_{\{W_i=1\}}\bZ_i\bZ_i^T$. From $\widehat{\bSigma}=
\begin{pmatrix}
\frac{2\lambda}{n}(n-S) & \mathbf{0}_{1\times(d-1)}\\
\mathbf{0}_{(d-1)\times 1} & 2\hat\bSigma_{\bZ}
\end{pmatrix}
$ we know that $\|\widehat{\bSigma}_{\bZ}\|_{2}>(\lambda/n)(n-S)$ and $\|\hat{\bSigma}_{\bZ}\|_{2}<(\lambda/n)(n-S)$ lead to $\hat{\bV}_1\perp \bV_1$ and $\hat{\bV}_1\sslash \bV_1$ (i.e. $\hat{\bV}_1 = \pm \bV_1$), respectively. Besides, $\|\widehat{\bSigma}_{\bZ}\|_{2}$ is a continuous random variable. Hence $\P(\hat{\bV}_1 \perp \bV_1)+\P(\hat{\bV}_1 \sslash \bV_1)=1$. Note that
\begin{equation*}
\begin{split}
&\Tr(\hat\bSigma_{\bZ})=\frac{d-1}{n}\sum_{i=1}^{n}1_{\{W_i=1\}}=\frac{(d-1)S}{n},\\
&\|\hat\bSigma_{\bZ}\|_{2}\geq\frac{\Tr(\hat\bSigma_{\bZ})}{\rank(\hat\bSigma_{\bZ})}
\geq\frac{\Tr(\hat\bSigma_{\bZ})}{\min\{n,d-1\}}
\geq\frac{(d-1)S}{n^2}.
\end{split}
\end{equation*}
Then
\begin{equation*}
\begin{split}
&\P(\hat{\bV}_1\sslash \bV_1)\leq
\P\Big(\|\hat\bSigma_{\bZ}\|_{2}\leq\frac{\lambda}{n}(n-S)\Big)
\leq \P\Big(\frac{(d-1)S}{n^2}\leq\frac{\lambda}{n}(n-S)\Big)\\
&=\P\Big(\frac{S}{n}\leq\frac{1}{1+\frac{d-1}{n\lambda}}\Big)
\leq \P\Big(\frac{S}{n}\leq\frac{1}{4}\Big)=\P\Big(\frac{S}{n}-\frac{1}{2}\leq -\frac{1}{4}\Big)
\leq e^{-n/8}.
\end{split}
\end{equation*}
Above we used the assumption $d\geq 3n\lambda+1$ and Hoeffding's inequality. Now we finish the analysis of $\hat\bV^{(1)}_1$ and collect back the superscript.

From now on we define $S=\sum_{\ell=1}^{m}1_{\{\hat{\bV}^{(\ell)}_1 \sslash \bV_1\}}$. For $\hat{\bV}^{(\ell)}_1$, let $a_{\ell}$ be its first entry and $\bb_{\ell}$ be the vector of its last $(d-1)$ entries. The dichotomy $\P(\hat{\bV}^{(\ell)}_1\sslash \bV_1)+\P(\hat{\bV}^{(\ell)}_1\perp \bV_1)=1$ mentioned above forces $|a_{\ell}|=1_{\{ \hat{\bV}^{(\ell)}_1 \sslash \bV_1\}}$, $\|\bb_{\ell}\|_2=1_{\{\hat{\bV}^{(\ell)}_1\perp \bV_1\}}$, $\hat{\bV}^{(\ell)}_1 \hat{\bV}^{(\ell)T}_1=\begin{pmatrix}
1_{\{ \hat{\bV}^{(\ell)}_1 \sslash \bV_1\}} & \mathbf{0}_{1\times(d-1)}\\
\mathbf{0}_{(d-1)\times 1} & \bb_{\ell} \bb_{\ell}^T
\end{pmatrix}$, and
\begin{equation*}
\widetilde{\bSigma}=\frac{1}{m}\sum_{\ell =1}^{m} \hat{\bV}^{(\ell)}_1 \hat{\bV}^{(\ell)T}_1=\begin{pmatrix}
\frac{1}{m}S& \mathbf{0}_{1\times(d-1)}\\
\mathbf{0}_{(d-1)\times 1} & \frac{1}{m}\sum_{\ell=1}^{m}\bb_{\ell}\bb_{\ell}^T
\end{pmatrix}
.
\end{equation*}
Note that $n\geq 32\log d$ forces $\P( \hat{\bV}_1^{(\ell)}\sslash \bV_1)\leq e^{-n/8}\leq d^{-4}$.

\textbf{Case 1: $m\leq d^3$}

In this case, $\P(S=0)=[1-\P(\hat{\bV}_1^{(1)}\sslash \bV_1)]^m\geq 1-m \P( \hat{\bV}_1^{(1)}\sslash \bV_1)\geq 1-d^{-1}$. When $S=0$, we have $\|\bb_{\ell}\|_2=1$ for all $\ell \in[m]$ and $\|(1/m)\sum_{\ell=1}^{m}\bb_{\ell}\bb_{\ell}^T\|_{2}>0$, leading to $\widetilde{\bV}_1\perp \bV_1$.

\textbf{Case 2: $m> d^3$}

On the one hand, by Hoeffding's inequality we obtain
\begin{equation*}
\P\left(\frac{S}{m}\geq \frac{1}{d}\right)\leq \P\left(\frac{1}{m}(S-\E S)\geq \frac{1}{2d}\right)\leq e^{-2m(\frac{1}{2d})^2}=e^{-\frac{m}{2d^2}}<e^{-d/2}.
\end{equation*}

On the other hand, note that
\begin{equation*}
\begin{split}
& \left\|\frac{1}{m}\sum_{k=1}^{m}\bb_{\ell}\bb_{\ell}^T
\right\|_{2} \geq \frac{ \Tr\left( \frac{1}{m}\sum_{k=1}^{m}\bb_{\ell}\bb_{\ell}^T \right) }{d-1}
=\frac{\frac{1}{m}\sum_{k=1}^{m}\|\bb_{\ell}\|_2^2}{d-1}
=\frac{1}{d-1}\Big(1-\frac{S}{m}\Big).
\end{split}
\end{equation*}
Hence
\begin{equation*}
\P(\widetilde{\bV}_1\perp \bV_1)\geq \P\Big(\Big\|\frac{1}{m}\sum_{k=1}^{m}\bb_{\ell}\bb_{\ell}^T\Big\|_{2}>\frac{S}{m}\Big)
\geq \P\Big[\frac{1}{d-1}\Big(1-\frac{S}{m}\Big)>\frac{S}{m}\Big]=\P\Big(\frac{S}{m}<\frac{1}{d}\Big)\geq 1-e^{-d/2}.
\end{equation*}

\end{proof}

	\subsubsection{Proof of Theorem \ref{thm:6}}
	\begin{proof}
		With slight abuse of notations, here we define $\bSigma^*_{\ell}=\E ( \hat\bV^{(\ell)}_K\hat\bV_K^{(\ell)T} )$,
		$\bSigma^*=\frac{1}{m}\sum\limits_{\ell=1}^m  \bSigma^*_{\ell}$, and $\bV_K^*\in \RR^{d\times K}$ to be the top $K$ eigenvectors of $\bSigma^*$.

		First we consider the general case. Note that $\lambda_{K}(\bV_K \bV_K^T)=1$ and $\lambda_{K}(\bV_K \bV_K^T)=0$. By the Davis-Kahan theorem, we have
		\begin{align}
		\rho( \widetilde{\bV}_K , \bV_K ) \lesssim \| \widetilde{\bSigma} - \bV_K \bV_K^T \|_F \leq
		\| \widetilde{\bSigma} - \bSigma^* \|_F + \| \bSigma^* - \bV_K \bV_K^T \|_F.
		\label{ineq-common-principal}
		\end{align}
		Note that $\bSigma^* = \frac{1}{m} \sum_{\ell=1}^{m} \bSigma^{*}_{\ell}$. The first term in \eqref{ineq-common-principal} is the norm of independent sums
		\begin{align*}
		\| \widetilde{\bSigma} - \bSigma^* \|_F = \left\| \frac{1}{m} \sum_{\ell=1}^{m} \left( \hat{\bV}_K^{(\ell)} \hat{\bV}_K^{(\ell)T} - \bSigma^{*}_{\ell} \right) \right\|_F
		\end{align*}
		It follows from Lemma \ref{lem:1} that $\left\| \| \hat{\bV}_K^{(\ell)} \hat{\bV}_K^{(\ell)T} - \bSigma^{*}_{\ell} \|_F \right\|_{\psi_1}
		\lesssim \kappa_{\ell} \sqrt{\frac{K r_{\ell}}{n}}=\sqrt{m}S_{\ell}$, from which Lemma \ref{lem-Bosq} leads to
		\begin{align}
		&\left\|  \| \widetilde{\bSigma} - \bSigma^* \|_F \right\|_{\psi_1} \lesssim \frac{1}{m}
		\sqrt{\sum_{\ell=1}^{m} \left( \sqrt{m}S_{\ell} \right)^2 }
		=\sqrt{\frac{1}{m} \sum_{\ell=1}^{m} S_{\ell}^2 }.
		\label{ineq-common-principal-1}
		\end{align}
		The second term in \eqref{ineq-common-principal} is bounded by
		\begin{align*}
		\| \bSigma^* - \bV_K \bV_K^T \|_F = \left\| \frac{1}{m} \sum_{\ell=1}^{m} \left( \bSigma^{*}_{\ell} - \bV_K \bV_K^T \right) \right\|_F \leq \frac{1}{m} \sum_{\ell=1}^{m} \left\| \bSigma^{*}_{\ell} - \bV_K \bV_K^T \right\|_F.
		\end{align*}
		Theorem \ref{thm:3} implies that when $n\geq r_{\ell}$,
		\begin{align}
		\left\| \bSigma^{*}_{\ell} - \bV_K \bV_K^T \right\|_F \lesssim \kappa^2_{\ell} \sqrt{K} r_{\ell}/n = B_{\ell}.
		\label{ineq-common-principal-1.5}
		\end{align}
		Hence
		\begin{align}
		\| \bSigma^* - \bV_K \bV_K^T \|_F \lesssim \frac{1}{m} \sum_{\ell=1}^{m} B_{\ell}.
		\label{ineq-common-principal-2}
		\end{align}
		The claim under general case follows from \eqref{ineq-common-principal}, \eqref{ineq-common-principal-1} and \eqref{ineq-common-principal-2}.
		
		Now we come to the symmetric case. If $\| \bSigma^*_{\ell} - \bV_K \bV_K^T \|_{2} < 1/2$ for all $\ell\in [m]$, then Theorem \ref{thm:2} implies that the top $K$ eigenspace of $\bSigma^*_{\ell}$ is $\col(\bV_K)$. Therefore, the top $K$ eigenspace of $\bSigma^*$ is still $\col(\bV_K)$ and  $\rho(\bV_K, \bV_K^*)=0$.
		
		When $n\geq C \sqrt{K} \max_{\ell\in[m]}( \kappa_{\ell}^2 r_{\ell} ) $ for large $C$, \eqref{ineq-common-principal-1.5} ensures $\max_{\ell\in[m]}\| \bSigma^*_{\ell} - \bV_K \bV_K^T \|_{2} \leq 1/4$, $\| \bSigma^* - \bV_K \bV_K^T \|_{2} \leq 1/4$ and $\rho(\bV_K, \bV_K^*)=0$. Weyl's inequality forces $\lambda_K( \bSigma^* ) \geq 3/4$ and $\lambda_{K+1}( \bSigma^* ) \leq 1/4$. By the Davis-Kahan theorem and \eqref{ineq-common-principal-1},
		\[
		\left\| \rho( \widetilde{\bV}_K , \bV_K ) \right\|_{\psi_1}
		=\left\| \rho( \widetilde{\bV}_K , \bV^*_K ) \right\|_{\psi_1}
		\lesssim \left\| \| \widetilde{\bSigma} - \bSigma^* \|_F \right\|_{\psi_1} \lesssim \sqrt{ \frac{1}{m}\sum_{\ell=1}^{m} S_{\ell}^2 }.
		\]
	\end{proof}

\subsubsection{Proof of Theorem \ref{thm:7}}
\begin{proof}
We define $\bSigma^*_{\ell}= \E ( \hat{\bV}_K^{(\ell)}\hat{\bV}_K^{(\ell)T})$ and $\bSigma^*= \frac{1}{m} \sum_{\ell=1}^{m} \bSigma^*_{\ell}$. Let $\bV_K^*,\bar{\bV}_K^{(\ell)}\in\O_{d\times K}$ be the top $K$ eigenvectors of $\bSigma^*$ and $\bSigma^{(\ell)}$, respectively.
By the Davis-Kahan theorem,
\begin{align}
&\rho( \widetilde{\bV}_K , \bV_K ) \lesssim \| \widetilde{\bSigma} - \bV_K \bV_K^T \|_F \leq
\| \widetilde{\bSigma} - \bSigma^* \|_F + \| \bSigma^* - \bV_K \bV_K^T \|_F
.\label{ineq-factor-1}
\end{align}
The first term in \eqref{ineq-factor-1} is controlled in exactly the same way as \eqref{ineq-common-principal-1}. The second term is further decomposed as
\begin{align}
& \| \bSigma^* - \bV_K \bV_K^T \|_F
= \left\| \frac{1}{m} \sum_{\ell=1}^{m} ( \bSigma^*_{\ell} - \bV_K \bV_K^T ) \right \|_F
\notag\\ &
\leq  \left\| \frac{1}{m} \sum_{\ell=1}^{m} ( \bSigma^*_{\ell} - \bar\bV_K^{(\ell)} \bar\bV_K^{(\ell)T} ) \right \|_F
+ \left\| \frac{1}{m} \sum_{\ell=1}^{m} ( \bar\bV_K^{(\ell)} \bar\bV_K^{(\ell)T} - \bV_K \bV_K^T ) \right \|_F
.\label{ineq-factor-2}
\end{align}
Similar to \eqref{ineq-common-principal-1.5} and \eqref{ineq-common-principal-2}, with $n\geq r_{\ell}$ we have $\| \bSigma^*_{\ell} - \bar\bV_K^{(\ell)} \bar\bV_K^{(\ell)T} \|_F \lesssim B_{\ell}$ and
\begin{align}
\left\| \frac{1}{m} \sum_{\ell=1}^{m} ( \bSigma^*_{\ell} - \bar\bV_K^{(\ell)} \bar\bV_K^{(\ell)T} ) \right \|_F
\leq  \frac{1}{m} \sum_{\ell=1}^{m} \left\|  \bSigma^*_{\ell} - \bar\bV_K^{(\ell)} \bar\bV_K^{(\ell)T}  \right \|_F \lesssim  \frac{1}{m} \sum_{\ell=1}^{m} B_{\ell}
.\label{ineq-factor-3}
\end{align}
For the last part in \eqref{ineq-factor-2}, note that
$\bar\bV_K^{(\ell)}$ and $\bV_K$ contain eigenvectors of $\bSigma^{(\ell)}$ and $\bB^{(\ell)} \bB^{(\ell)T}$. Hence the Davis-Kahan theorem forces
\begin{align*}
\| \bar\bV_K^{(\ell)} \bar\bV_K^{(\ell)T} - \bV_K \bV_K^T \|_F
\lesssim \frac{ \sqrt{K} \| \bSigma_u^{(\ell)} \|_2  }{ \lambda_K( \bLambda_K^{(\ell)} ) }
.
\end{align*}
and
\begin{align}
\left\| \frac{1}{m} \sum_{\ell=1}^{m} ( \bar\bV_K^{(\ell)} \bar\bV_K^{(\ell)T} - \bV_K \bV_K^T ) \right \|_F
\lesssim \frac{\sqrt{K}}{m} \sum_{\ell=1}^{m}\frac{  \| \bSigma_u^{(\ell)} \|_2  }{ \lambda_K( \bLambda_K^{(\ell)} ) }
.\label{ineq-factor-4}
\end{align}
The proof is completed by collecting \eqref{ineq-factor-1}, \eqref{ineq-factor-2}, \eqref{ineq-factor-3} and \eqref{ineq-factor-4}.
\end{proof}

\subsection{Technical lemmas}
\subsubsection{Tail bounds}
\begin{lem}\label{lem-cov-concentration}
	Suppose $\bX $ and $\{\bX_i\}_{i=1}^n$ are i.i.d. sub-Gaussian random vectors in $\R^d$ with zero mean and covariance matrix $\bSigma\succeq 0$. Let $\hat{\bSigma} = \frac{1}{n} \sum_{i=1}^{n} \bX_i \bX_i^T$ be the sample covariance matrix, $\{\lambda_j \}_{j=1}^d$ be the eigenvalues of $\bSigma$ sorted in descending order, and $r=\Tr(\bSigma)/\|\bSigma\|_2$. There exist constants $c\geq 1$ and $C\geq 0$ such that when $n \geq r$, we have
	\[
	\P\left(
	\|\hat{\bSigma}-\bSigma\|_{2} \geq s \right)\leq
	\exp\left( - \frac{ s }{c \lambda_1 \sqrt{r/n} } \right),\qquad \forall s \geq 0,
	\]
	and
	$\left\| \|\hat{\bSigma}-\bSigma\|_{2} \right\|_{\psi_1} \leq C \lambda_1 \sqrt{r/n}$.
\end{lem}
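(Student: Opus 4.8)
The claim is a standard effective-rank concentration bound for sample covariance operators, and the plan is to obtain it from the deviation inequality of \cite{KLo17} after two routine reductions. First, the sub-Gaussian condition of Definition~\ref{def:2.1} forces $\bu^T\bX=0$ almost surely whenever $\bu^T\bSigma\bu=0$, so $\bX$ lives in $\col(\bSigma)$ and both $\bSigma,\hat\bSigma$ vanish on $\col(\bSigma)^\perp$ a.s.; hence we may restrict to $\col(\bSigma)$ and assume $\bSigma\succ 0$ without changing $\|\hat\bSigma-\bSigma\|_2$ or $r$. On that subspace Definition~\ref{def:2.1} says precisely that $\bY:=\bSigma^{-1/2}\bX$ is isotropic with $\sup_{\|\bu\|=1}\|\bu^T\bY\|_{\psi_2}\le M$, which is exactly the setting of \cite{KLo17}.

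Next I would invoke their deviation bound: there is $c_1=c_1(M)$ such that for every $t\ge 1$, with probability at least $1-e^{-t}$,
\[
\|\hat\bSigma-\bSigma\|_2 \;\le\; c_1\lambda_1\Bigl(\sqrt{\tfrac{r\vee t}{n}}\;\vee\;\tfrac{r\vee t}{n}\Bigr)\;\le\; c_1\lambda_1\Bigl(\sqrt{\tfrac rn}+\sqrt{\tfrac tn}+\tfrac tn\Bigr),
\]
the last step using $n\ge r$; the case $t=1$ also gives $\E\|\hat\bSigma-\bSigma\|_2\lesssim\lambda_1\sqrt{r/n}$. To pass to the stated tail, fix $s\ge0$ and choose $t$ of order $\min\{s^2n/\lambda_1^2,\;sn/\lambda_1\}$: once $s$ exceeds a fixed multiple of $\lambda_1\sqrt{r/n}$ this makes the right side above at most $s$, so $\P(\|\hat\bSigma-\bSigma\|_2\ge s)\le e^{-t}\le \exp\!\big(-s/(c\lambda_1\sqrt{r/n})\big)$ for a suitable absolute $c\ge1$ (in fact the tail one gets is sub-Gaussian in $s$ in the moderate range and exponential only far out, so there is slack). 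For $s$ below that threshold the asserted inequality holds after enlarging $c$, the right side then being of constant order. Stitching the two ranges together yields the first claim.

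For the $\psi_1$ bound I would integrate the tail: with $Z:=\|\hat\bSigma-\bSigma\|_2$ and $\sigma:=c\lambda_1\sqrt{r/n}$,
\[
\E Z^p=\int_0^\infty p\,s^{p-1}\,\P(Z\ge s)\,ds \;\le\; \int_0^\infty p\,s^{p-1}e^{-s/\sigma}\,ds \;=\;\Gamma(p+1)\,\sigma^p\;\le\;(p\sigma)^p \qquad(p\ge1),
\]
whence $\|Z\|_{\psi_1}=\sup_{p\ge1}(\E Z^p)^{1/p}/p\le c\lambda_1\sqrt{r/n}$; alternatively one may cite the moment bounds in \cite{KLo17} directly.

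The only genuinely substantive ingredient is the effective-rank (rather than ambient-dimension) scaling: a crude $\varepsilon$-net argument over the unit sphere gives only $\sqrt{d/n}$, and truncated matrix Bernstein gives $\sqrt{r/n}$ with an extra $\log d$ factor, so the sharp rate requires the generic-chaining / Gaussian-comparison control in \cite{KLo17} of the empirical process $\bu\mapsto n^{-1}\sum_{i=1}^n\big((\bu^T\bX_i)^2-\E(\bu^T\bX_i)^2\big)$, which I would treat as a black box. The rest---the two reductions, the conversion of a ``probability $1-e^{-t}$'' statement into one clean exponential tail, and the moment computation---is routine, the only care needed being the matching of constants across the small-$s$ and large-$s$ regimes.
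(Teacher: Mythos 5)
Your proposal follows essentially the same route as the paper's proof: cite the effective-rank concentration bound of \cite{KLo17} (Theorem~9), use $n\ge r$ to collapse the $\max\{\sqrt{r/n},r/n,\sqrt{t/n},t/n\}$ structure, re-parametrize in $t$ to turn the ``probability $\ge 1-e^{-t}$'' statement into a single exponential tail in $s$, and then pass to the $\psi_1$ bound via moments/Definition~5.13 of \cite{Ver10}. Your extra remarks --- the harmless reduction to $\bSigma\succ 0$ by restricting to $\col(\bSigma)$, the observation that the tail is actually sub-Gaussian in the moderate range and exponential only far out --- are not in the paper but are correct and only add slack. One small point worth flagging, common to both your write-up and the paper's as stated: the clean inequality $\P(\|\hat\bSigma-\bSigma\|_2\ge s)\le\exp(-s/(c\lambda_1\sqrt{r/n}))$ cannot hold verbatim for all $s\ge0$ by enlarging $c$ alone (for $s$ just above $0$ the left side is near $1$ while the right side is strictly below $1$); the paper's own derivation yields $\exp(1-s/(c\lambda_1\sqrt{r/n}))$, i.e.\ an extra multiplicative $e$, which is exactly what is needed for the $\psi_1$ conclusion via Vershynin's equivalent characterizations. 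Your tail-integration of $\E Z^p\le\Gamma(p+1)\sigma^p$ accomplishes the same thing; just carry the constant factor through rather than absorbing it into the exponent.
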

\begin{proof}
	By the Theorem 9 in \cite{KLo17} and the simple fact
	\[
	(\E \ltwonorm{\bX})^2/\|\bSigma\|_{2} \leq \E \ltwonorm{\bX}^2 / \|\bSigma\|_{2} = \Tr(\bSigma)/\|\bSigma\|_2 = r(\bSigma),
	\]
	we know the existence of a constant $c\geq 1$ such that
	\begin{equation}\label{ineq-cov-concentration}
	\P\left(
	\|\hat{\bSigma}-\bSigma\|_{2} \geq c\lambda_1\max\left\{\sqrt{\frac{r}{n}},\frac{r}{n},\sqrt{\frac{t}{n}},\frac{t}{n}\right\} \right) \leq e^{-t},\qquad \forall t\geq 1.
	\end{equation}
	
Since $1 \leq r \leq n$, (\ref{ineq-cov-concentration}) yields
	\begin{align}
	&\P\left(
	\|\hat{\bSigma}-\bSigma\|_{2} \geq c\lambda_1 \sqrt{\frac{t}{n}} \right) \leq e^{-t},\qquad r \leq t \leq n,\label{ineq-cov-concentration-1}\\
	&\P\left(
	\|\hat{\bSigma}-\bSigma\|_{2} \geq c\lambda_1 \frac{t}{n} \right) \leq e^{-t},\qquad  t \geq n.
	\label{ineq-cov-concentration-2}
	\end{align}
	When $r \leq t \leq n$, we have $\sqrt{\frac{t}{n}} \leq \frac{t}{n}\sqrt{\frac{n}{r}}$. By letting $s=c\lambda_{1} \frac{t}{n} \sqrt{\frac{n}{r}}$ we derive from (\ref{ineq-cov-concentration-1}) that for $c \lambda_{1} \sqrt{\frac{r}{n}} \leq s \leq c\lambda_{1} \sqrt{\frac{n}{r}}$,
	\begin{align}
	&\P\left(
	\|\hat{\bSigma}-\bSigma\|_{2} \geq s \right)
	\leq
	\P\left(
	\|\hat{\bSigma}-\bSigma\|_{2} \geq  c\lambda_{1} \sqrt{\frac{t}{n}} \right) \leq e^{-t}
	=\exp\left( - \frac{ s \sqrt{nr} }{c \lambda_{1} } \right)
	.\label{ineq-cov-concentration-3}
	\end{align}
	When $t\geq n$, we let $s=c\lambda_{1} \frac{t}{n}$ and derive from (\ref{ineq-cov-concentration-2}) that for $s \geq c \lambda_{1}$,
	\begin{align}
	&\P\left(
	\|\hat{\bSigma}-\bSigma\|_{2} \geq s \right)
	=
	\P\left(
	\|\hat{\bSigma}-\bSigma\|_{2} \geq  c\lambda_{1} \frac{t}{n} \right) \leq e^{-t}
	=\exp\left( - \frac{ ns }{c \lambda_{1} } \right)
	.\label{ineq-cov-concentration-4}
	\end{align}
	(\ref{ineq-cov-concentration-3}), (\ref{ineq-cov-concentration-4}) and $n\geq r$ lead to
	\[
	\P\left(
	\|\hat{\bSigma}-\bSigma\|_{2} \geq s \right)\leq
	\exp\left( - \frac{ s \sqrt{nr} }{c \lambda_1 } \right),\qquad \forall s\geq c \lambda_1 \sqrt{r/n}.
	\]
	and thus
	\[
	\P\left(
	\|\hat{\bSigma}-\bSigma\|_{2} \geq s \right)\leq
	\exp\left( 1- \frac{ s }{c \lambda_1 \sqrt{r/n} } \right),\qquad \forall s\geq 0.
	\]
	According to the Definition 5.13 in \cite{Ver10}, we get $\left\| \|\hat{\bSigma}-\bSigma\|_{2} \right\|_{\psi_1} \leq C \lambda_1 \sqrt{r/n}$ for some constant $C$.
\end{proof}
The next lemma investigates the sum of independent random vectors in a Hilbert space whose norms are sub-exponential, which directly follows from Theorem 2.5 in \cite{Bos00}.
\begin{lem}\label{lem-Bosq}
If $\{ \bX_i \}_{i=1}^n$ are independent random vectors in a separable Hilbert space (where the norm is denoted by $\|\cdot\|$) with $\E \bX_i = \bzero$ and $\left\| \| \bX_i \| \right\|_{\psi_1} \leq L_i<\infty$. We have
\[
\left\| \bigg \| 
\sum_{i=1}^{n} \bX_i
\bigg\| \right\|_{\psi_1} \lesssim
\sqrt{\sum_{i=1}^{n} L_i^2 }
.
\]
\end{lem}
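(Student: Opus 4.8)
The plan is to reduce the statement to a Bernstein-type concentration inequality for sums of independent Hilbert-space-valued random variables and then read the claimed $\psi_1$-bound off the resulting tail estimate. Write $V:=\sum_{i=1}^n L_i^2$. The one point worth flagging in advance is why it is $\sqrt V$, rather than a larger quantity, that appears on the right: a Bernstein inequality produces a sub-Gaussian regime governed by a ``variance proxy'' of order $V$ together with a sub-exponential tail governed by a scale of order $\max_i L_i$, and the elementary bound $\max_i L_i\le\sqrt V$ forces the second scale to be absorbed into the first.

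First I would convert the hypothesis $\big\|\,\|\bX_i\|\,\big\|_{\psi_1}\le L_i$ into a Bernstein moment condition. By the definition of the $\psi_1$-norm, $(\E\|\bX_i\|^m)^{1/m}\le m L_i$ for every integer $m\ge1$, hence $\E\|\bX_i\|^m\le m^m L_i^m\le m!\,(eL_i)^m$ using $m^m\le e^m m!$. Taking $\sigma_i^2=2e^2L_i^2$ and $b=e\max_i L_i$ gives $\E\|\bX_i\|^m\le\frac{m!}{2}\,\sigma_i^2\,b^{m-2}$ for all $m\ge2$ and all $i$, and $\E\bX_i=\bzero$ by hypothesis; these are precisely the assumptions of the Bernstein inequality for sums in a separable Hilbert space (Theorem 2.5 in \cite{Bos00}). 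That theorem then yields an absolute constant $c_0>0$ with
\[
\P\Big(\big\|\,\textstyle\sum_{i=1}^n\bX_i\,\big\|\ge t\Big)\le 2\exp\!\Big(-\frac{t^2}{c_0\big(\sum_{i=1}^n\sigma_i^2+bt\big)}\Big),\qquad\forall\,t\ge0,
\]
where $\sum_i\sigma_i^2=2e^2V$ and $b=e\max_i L_i\le e\sqrt V$.

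It then remains to pass from this tail bound back to the $\psi_1$-norm. For $t\ge\sqrt V$ one has $\sum_i\sigma_i^2+bt\le(2e^2+e)\sqrt V\,t$, so the right-hand side above is at most $2\exp(-t/(c_1\sqrt V))$ for a suitable absolute $c_1$, while for $t<\sqrt V$ the probability is at most $1$. Inserting this into $\E\big\|\sum_i\bX_i\big\|^p=\int_0^\infty pt^{p-1}\P\big(\|\sum_i\bX_i\|\ge t\big)\,dt$ and splitting the integral at $t=\sqrt V$ gives $\E\big\|\sum_i\bX_i\big\|^p\lesssim(p\sqrt V)^p$ for every $p\ge1$, i.e.\ $\big\|\,\|\sum_i\bX_i\|\,\big\|_{\psi_1}\lesssim\sqrt V=\sqrt{\sum_i L_i^2}$; equivalently, one may invoke the characterisation of sub-exponential norms via tails in Definition 5.13 of \cite{Ver10}. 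This is the assertion.

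I do not expect a genuine obstacle here: the argument is a routine reduction to a known Bernstein inequality, and the only substantive observation is the absorption $\max_i L_i\le\sqrt V$ noted above, plus the bookkeeping of constants in the cited inequality and in the $\psi_1$-norm $\leftrightarrow$ tail translation. If one prefers a self-contained route avoiding \cite{Bos00}, the same conclusion follows from the Hilbert-space Rosenthal inequality $\E\|\sum_i\bX_i\|^p\lesssim C^p\big[(\sum_i\E\|\bX_i\|^2)^{p/2}+\sum_i\E\|\bX_i\|^p\big]$ for $p\ge2$, together with $\sum_i\E\|\bX_i\|^2\le 4V$ and $\sum_i\E\|\bX_i\|^p\le p^p\sum_i L_i^p\le p^p(\max_i L_i)^{p-2}V\le p^p V^{p/2}$, which again yields $(\E\|\sum_i\bX_i\|^p)^{1/p}\lesssim p\sqrt V$ and hence the stated $\psi_1$-bound.
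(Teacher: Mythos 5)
Your proof is correct and follows essentially the same route as the paper's: convert the $\psi_1$-hypothesis into a Bernstein moment condition via $(\E\|\bX_i\|^m)^{1/m}\le mL_i$ and $m^m\le e^m m!$, invoke Theorem~2.5 of \cite{Bos00} for Hilbert-space-valued sums, and absorb the sub-exponential scale into the sub-Gaussian one using $\max_i L_i\le\sqrt{\sum_i L_i^2}$. The paper reaches the same conclusion via an explicit two-regime case split at $t=\ell^2/b$, but the mechanism and the key absorption observation are identical.
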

\begin{proof}
We are going to apply Theorem 2.5 in \cite{Bos00}. By definition $k^{-1} \E^{1/k} \| \bX_i \|^k \leq  \left\| \| \bX_i \| \right\|_{\psi_1} \leq L_i$ for all $k\geq 1$, and
\[
\E \| \bX_i \|^k \leq \left(k L_i \right)^k \leq \sqrt{2\pi k} (k/e)^k \left(e L_i \right)^k \lesssim k!\left(e L_i \right)^k.
\]
Hence there exists some constant $c$ such that $\E \| \bX_i \|^k \leq \frac{k!}{2} \left(c L_i \right)^k$ for $k\geq 2$. Let $\ell = \sqrt{ c^2 \sum_{i=1}^{n} L_i^2 }$ and $b=c\cdot \max_{i\in[n]} L_i$. We have
\begin{align*}
\sum_{i=1}^{n} \E \| \bX\|^k & \leq \frac{k!}{2} \sum_{i=1}^{n} \left( c L_i \right)^k
\leq \frac{k!}{2} \left( \sum_{i=1}^{n} c^2 L_i^2
\right) \left( c \cdot \max_{i\in[n]} L_i \right)^{k-2}
= \frac{k!}{2} \ell^2 b^{k-2},\qquad \forall k\geq 2.
\end{align*}
Let $\bS_n = \sum_{i=1}^{n} \bX_i$. Theorem 2.5 in \cite{Bos00} implies that
\[
\P \left(  \| \bS_n \| \geq t \right) \leq 2 \exp \left( - \frac{t^2}{2\ell^2 + 2 bt} \right)
,\qquad \forall t >0.
\]
When $4 \ell \leq t \leq \ell^2/b$ (this cannot happen if $4b > \ell$), we have $2\ell^2 \geq 2bt$ and
\begin{align*}
\P \left(  \| \bS_n \| \geq t \right)
&\leq 2 \exp \left( - \frac{t^2}{2\ell^2+2\ell^2} \right)
\leq  2 \exp \left( - \frac{4\ell\cdot t}{4\ell^2} \right)
= 2 \exp \left( -  \frac{t}{\ell} \right)
\leq \exp\left( 1 - \frac{t}{4\ell  } \right).
\end{align*}
When $ t\geq \ell^2/b$, we have $2bt \geq 2\ell^2$ and
\begin{align*}
\P \left(  \| \bS_n \| \geq t \right)
&\leq 2 \exp \left( - \frac{t^2}{2bt+2bt} \right)
= 2 \exp \left( - \frac{t}{4b} \right)
\leq \exp\left( 1 - \frac{t}{4\ell } \right),
\end{align*}
where the last inequality follows from $2\leq e$ and $b\leq \ell$.
It is then easily seen that
\begin{align*}
\P \left(  \| \bS_n \| \geq t \right) \leq \exp\left( 1 - \frac{t}{4 \ell } \right),\qquad\forall t\geq 0.
\end{align*}
With the help of Definition 5.13 in \cite{Ver10}, we can conclude that
\[
\left\| \| \bS_n \| \right\|_{\psi_1} \lesssim \ell
\lesssim
\sqrt{ \sum_{i=1}^{n} L_i^2 }.
\]
\end{proof}

\subsubsection{Matrix analysis}
\begin{lem}\label{lem-SDP}
Suppose that $\bA\in\R^{d\times d}$ is a symmetric matrix with eigenvalues $\{ \lambda_j\}_{j=1}^d$ (in descending order) and corresponding eigenvectors $\{ \bu_j \}_{j=1}^d$. When $K\in[d]$, $\bP_K=\sum_{j=1}^{K} \bu_j \bu_j^T$ is an optimal solution to the SDP:
\begin{equation}\label{SDP-lem}
\begin{split}
&\min_{\bP \in S^{d\times d}} -\Tr( \bP^T \bA )\\
&\text{s.t. } \Tr(\bP)\leq K, \| \bP \|_2 \leq 1, \bP \succeq 0.
\end{split}
\end{equation}
\end{lem}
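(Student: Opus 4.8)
The plan is to diagonalize $\bA$ and reduce the matrix SDP to a scalar linear program whose solution is transparent. Write $\bA=\sum_{j=1}^d \lambda_j \bu_j\bu_j^T$ with $\{\bu_j\}_{j=1}^d$ an orthonormal basis. For an arbitrary feasible $\bP$, set $p_j=\bu_j^T\bP\bu_j$ for $j\in[d]$. Since $\bP\succeq 0$ we get $p_j\ge 0$; since $\|\bP\|_2\le 1$ we get $p_j\le\|\bP\|_2\,\|\bu_j\|_2^2\le 1$; and since $\{\bu_j\}$ is an orthonormal basis, $\sum_{j=1}^d p_j=\Tr(\bP)\le K$. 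Moreover $\Tr(\bP^T\bA)=\Tr(\bP\bA)=\sum_{j=1}^d\lambda_j\,\bu_j^T\bP\bu_j=\sum_{j=1}^d\lambda_j p_j$. Hence the objective at $\bP$ equals $-\sum_{j=1}^d\lambda_j p_j$, with $\bp=(p_1,\dots,p_d)$ ranging over the polytope $\Pi=\{\bp\in[0,1]^d:\mathbf{1}^T\bp\le K\}$.

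Next I would establish $\max_{\bp\in\Pi}\sum_{j=1}^d\lambda_j p_j=\sum_{j=1}^K\lambda_j$. The objective is linear and $\Pi$ is a polytope, so the maximum is attained at a vertex; for integer $K\in[d]$ the vertices of $\Pi$ are exactly the vectors with entries in $\{0,1\}$ having at most $K$ ones. Among such vectors, $\sum_{j=1}^d\lambda_j p_j$ is maximized by the indicator of the $K$ indices with the largest $\lambda_j$, i.e. $p_1=\dots=p_K=1$ and $p_{K+1}=\dots=p_d=0$, giving the value $\sum_{j=1}^K\lambda_j$; here one uses $\lambda_K\ge 0$, which holds in all our applications since $\bA$ is a sample covariance matrix. (An explicit exchange argument also works: if $p_j<1$ for some $j\le K$ and $p_{j'}>0$ for some $j'>K$, shifting mass from $j'$ to $j$ does not decrease $\sum_j\lambda_j p_j$ because $\lambda_j\ge\lambda_{j'}$.) Therefore $-\Tr(\bP^T\bA)\ge-\sum_{j=1}^K\lambda_j$ for every feasible $\bP$.

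It remains to verify that $\bP_K=\sum_{j=1}^K\bu_j\bu_j^T$ is feasible and attains this bound. Feasibility is immediate: $\bP_K$ is the orthogonal projector onto a $K$-dimensional subspace, so $\bP_K\succeq 0$, $\|\bP_K\|_2=1$, and $\Tr(\bP_K)=K\le K$. Also $\bu_j^T\bP_K\bu_j=\mathbf{1}\{j\le K\}$, hence $-\Tr(\bP_K^T\bA)=-\sum_{j=1}^K\lambda_j$, matching the lower bound; therefore $\bP_K$ is optimal.

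I do not expect a genuine obstacle here: this is the Ky Fan variational principle for partial sums of eigenvalues. The only points requiring care are (i) the sign condition $\lambda_K\ge 0$, which makes it optimal to use the full trace budget $K$ rather than merely the nonnegative eigenvalues (and which is automatic in the PSD setting in which the lemma is invoked), and (ii) phrasing the linear-programming step rigorously, for which either the vertex characterization of $\Pi$ or the finite exchange argument suffices; one could instead certify optimality by exhibiting dual multipliers, but the combinatorial route is cleaner.
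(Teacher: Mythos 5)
Your proof follows the same route as the paper's: diagonalize $\bA$, read off the diagonal entries $p_j=\bu_j^T\bP\bu_j$ of a feasible $\bP$, observe that the constraints force $0\le p_j\le 1$ and $\sum_j p_j\le K$, and reduce to the scalar linear program $\max\sum_j\lambda_j p_j$. You fill in the final LP step more carefully (vertex/exchange argument) and, importantly, you explicitly flag the hypothesis $\lambda_K\ge 0$: without it the claim actually fails (take $\bA=\diag(1,-1)$, $K=2$; then $\bP=\diag(1,0)$ beats $\bP_K=\bI_2$). The paper's proof states the bound $-\Tr(\bP^T\bA)\ge -\sum_{j=1}^K\lambda_j$ without comment, so it tacitly relies on the same sign condition. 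As you note, this is harmless in context because the lemma is only ever invoked with $\bA$ equal to a sample covariance or an average of projection matrices, both PSD, but your version is the more rigorous statement of what is actually being proved.
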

\begin{proof}
By orthonormal invariance of the problem formulation, we assume without loss of generality that $\{ \bu_j \}_{j=1}^d$ are the canonical bases $\{ \be_j \}_{j=1}^d$. Then $\bA=\diag(\lambda_1,\cdots\lambda_d)$ and $\Tr( \bP^T \bA ) = \sum_{j=1}^{d} \lambda_j P_{jj}$. The constraints on $\bP$ force $0\leq P_{jj} \leq 1$ and $\sum_{j=1}^{d} P_{jj}\leq K$. Hence $-\Tr( \bP^T \bA ) \geq -\sum_{j=1}^{K} \lambda_j$ always holds, and $\bP_K=\sum_{j=1}^{K} \be_j \be_j^T$ is a feasible solution that attains this minimum.
\end{proof}

\begin{lem}\label{lem-loss}
Suppose $\widehat{\bV}_K^{(\ell)}\in\cO_{d\times K}$, $\forall \ell\in[m]$, and define $\widetilde{\bSigma}=\frac{1}{m}\sum_{\ell=1}^{m}\widehat{\bV}_K^{(\ell)} \widehat{\bV}_K^{(\ell)T}$. Let $\widetilde{\bSigma}=\sum_{j=1}^{d}\tilde{\lambda}_j\tilde{\bv}_j\tilde{\bv}_j^T$ be its eigen-deconposition, where $\tilde{\lambda}_1\geq\cdots\geq\tilde{\lambda}_d$. Then $\widetilde{\bV}_K=(\tilde{\bv}_1,\cdots,\tilde{\bv}_K)\in\argmin_{\bU\in\cO_{d\times K}}\sum_{\ell=1}^{m}\rho^2(\bU,\widehat{\bV}_K^{(\ell)})$.
\end{lem}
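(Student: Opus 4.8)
The plan is to reduce this subspace-fitting problem to a trace maximization that is already resolved by Lemma~\ref{lem-SDP} (equivalently, by Ky Fan's maximum principle).

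First I would rewrite the objective using the computation behind~\eqref{eq:2.1}. For any $\bU\in\cO_{d\times K}$ and any $\ell\in[m]$, the matrices $\bU\bU^T$ and $\widehat{\bV}_K^{(\ell)}\widehat{\bV}_K^{(\ell)T}$ are symmetric with $\fnorm{\bU\bU^T}^2=\Tr(\bU^T\bU\bU^T\bU)=\Tr(\bI_K)=K$, and likewise for the other, so
\[
\rho^2\!\big(\bU,\widehat{\bV}_K^{(\ell)}\big)
=\fnorm{\bU\bU^T-\widehat{\bV}_K^{(\ell)}\widehat{\bV}_K^{(\ell)T}}^2
=2K-2\,\Tr\!\big(\bU^T\widehat{\bV}_K^{(\ell)}\widehat{\bV}_K^{(\ell)T}\bU\big).
\]
Summing over $\ell$ and recalling $\widetilde{\bSigma}=\frac1m\sum_{\ell=1}^m\widehat{\bV}_K^{(\ell)}\widehat{\bV}_K^{(\ell)T}$ would give
\[
\sum_{\ell=1}^m\rho^2\!\big(\bU,\widehat{\bV}_K^{(\ell)}\big)=2mK-2m\,\Tr\!\big(\bU^T\widetilde{\bSigma}\,\bU\big),
\]
so that $\argmin_{\bU\in\cO_{d\times K}}\sum_\ell\rho^2(\bU,\widehat{\bV}_K^{(\ell)})=\argmax_{\bU\in\cO_{d\times K}}\Tr(\bU^T\widetilde{\bSigma}\,\bU)$, and it remains only to identify the maximizer of the latter.

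Next I would apply Lemma~\ref{lem-SDP} with $\bA=\widetilde{\bSigma}$. For $\bU\in\cO_{d\times K}$ the matrix $\bP=\bU\bU^T$ is feasible for the SDP~\eqref{SDP-lem}, since $\Tr(\bP)=K$, $\opnorm{\bP}=1$ and $\bP\succeq 0$; hence $-\Tr(\bP^T\widetilde{\bSigma})\ge-\sum_{j=1}^K\tilde{\lambda}_j$, i.e. $\Tr(\bU^T\widetilde{\bSigma}\,\bU)\le\sum_{j=1}^K\tilde{\lambda}_j$ for every such $\bU$. Taking $\bU=\widetilde{\bV}_K=(\tilde{\bv}_1,\dots,\tilde{\bv}_K)$ attains this value, so $\widetilde{\bV}_K$ maximizes $\Tr(\bU^T\widetilde{\bSigma}\,\bU)$ and therefore minimizes $\sum_\ell\rho^2(\bU,\widehat{\bV}_K^{(\ell)})$, which is the claim.

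There is essentially no obstacle: the only things to verify are the elementary identity $\fnorm{\bU\bU^T}^2=K$ and the feasibility of $\bU\bU^T$ for~\eqref{SDP-lem}, both immediate from $\bU^T\bU=\bI_K$ and $\bzero\preceq\bU\bU^T\preceq\bI_d$. If a self-contained argument is preferred to invoking Lemma~\ref{lem-SDP}, one can instead write $\Tr(\bU^T\widetilde{\bSigma}\,\bU)=\sum_{j=1}^d\tilde{\lambda}_j c_j$ with $c_j:=\ltwonorm{\tilde{\bv}_j^T\bU}^2=\tilde{\bv}_j^T\bU\bU^T\tilde{\bv}_j$, observe that $c_j\in[0,1]$ (as $\bU\bU^T\preceq\bI_d$) and $\sum_{j=1}^d c_j=\Tr(\bU\bU^T)=K$, and maximize the linear functional $\sum_j\tilde{\lambda}_j c_j$ over this polytope to obtain the bound $\sum_{j=1}^K\tilde{\lambda}_j$, with equality exactly when $c_1=\dots=c_K=1$, as holds for $\bU=\widetilde{\bV}_K$.
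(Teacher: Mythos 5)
Your proof is correct and follows essentially the same route as the paper: expand $\rho^2(\bU,\widehat{\bV}_K^{(\ell)})=2K-2\Tr(\bU^T\widehat{\bV}_K^{(\ell)}\widehat{\bV}_K^{(\ell)T}\bU)$, sum to reduce the problem to maximizing $\Tr(\bU^T\widetilde{\bSigma}\bU)$ over $\cO_{d\times K}$, and conclude via Lemma~\ref{lem-SDP} (Ky Fan). The only cosmetic difference is that the paper inserts an intermediate bias--variance decomposition $\frac{1}{m}\sum_{\ell}\|\bU\bU^T-\hat{\bP}^{(\ell)}\|_F^2=\|\bU\bU^T-\widetilde{\bSigma}\|_F^2+\frac{1}{m}\sum_{\ell}\|\widetilde{\bSigma}-\hat{\bP}^{(\ell)}\|_F^2$ before expanding, whereas you go directly to the trace form; both yield the same reduction and the same appeal to the SDP lemma.
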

\begin{proof}
Let $\widehat{\bP}^{(\ell)}=\widehat{\bV}_K^{(\ell)} \widehat{\bV}_K^{(\ell)T}$ and $R(\bU)=\frac{1}{m}\sum_{\ell=1}^{m}\rho^2(\bU,\widehat{\bV}_K^{(\ell)})$. Then $\widetilde{\bSigma}=\frac{1}{m}\sum_{\ell=1}^{m}\widehat{\bP}^{(\ell)}$ and
	\begin{equation*}
	\begin{split}
	& R(\bU)=\frac{1}{m}\sum_{\ell=1}^{m}\rho^2(\bU,\widehat{\bV}_K^{(\ell)})=\frac{1}{m}\sum_{\ell=1}^{m}\|\bU\bU^T-\hat{\bP}^{(\ell)}\|_F^2\\
	&=\|\bU\bU^T-\widetilde{\bSigma}\|_F^2+\frac{1}{m}\sum_{\ell=1}^{m}\|\widetilde{\bSigma}-\hat{\bP}^{(\ell)}\|_F^2\\
	&=\|\bU\bU^T\|_F^2+\|\widetilde{\bSigma}\|_F^2-2\Tr(\bU \bU^T\widetilde{\bSigma})+\frac{1}{m}\sum_{\ell=1}^{m}\|\widetilde{\bSigma}-\hat{\bP}^{(\ell)}\|_F^2.
	\end{split}
	\end{equation*}
	The fact $\bU\in\cO_{d\times K}$ forces $\|\bU\bU^T\|_F^2=K$. Hence
	\begin{equation*}
	\argmin_{\bU\in\cO_{d\times K}}R(\bU)=\argmax_{\bU\in\cO_{d\times K}}\Tr(\bU \bU^T\widetilde{\bSigma}).
	\end{equation*}
By slightly modifying the proof for Lemma \ref{lem-SDP} we get the desired result.
\end{proof}

Suppose that $\bU,\bV\in\O_{d\times K}$. Let $\bP_{\bU} = \bU \bU^T$, $\bP_{\bV} = \bV \bV^T$, $\bH=\bV^T\bU$, and $\{ \sigma_j \}_{j=1}^K$ be the singular values (sorted in descending order) of $\bH$. By the Corollary 5.4 in Chapter I, \cite{Ste90}, $\{ \sigma_j \}_{j=1}^K$ are cosines of the canonical angles $\{ \theta_j \}_{j=1}^K \subseteq [0,\pi/2)$ between $\mathrm{Col}(\bU)$ and $\mathrm{Col}(
\bV)$. Let $\sin\bTheta( \bU , \bV)=\diag(\sin\theta_1, \cdots, \sin\theta_K)$.

Define $\hat{\bH}=\sgn(\bH)$. Here $\sgn(\cdot)$ is the matrix sign function (see \cite{Gro11}) defined as follows: let $\bH=\sum_{j=1}^{K} \sigma_{j} \tilde{\bu}_j \tilde{\bv}_j^T$ be the singular value decomposition, where $\{ \tilde{\bu}_j \}_{j=1}^K$, $\{ \tilde{\bv}_j \}_{j=1}^K$ are two orthonormal bases in $\R^K$ and $\{ \sigma_j \}_{j=1}^K \subseteq [0,+\infty)$, then $\hat{\bH} = \sum_{j=1}^{K} \sgn(\sigma_j)  \tilde{\bu}_j \tilde{\bv}_j^T = \sum_{\sigma_j >0} \tilde{\bu}_j \tilde{\bv}_j^T$.

\begin{lem}\label{2ndDK-lem-projection}
	We have
	$\| \bP_{\bU} - \bP_{\bV} \|_2 = \| \sin \bTheta ( \bU,\bV ) \|_2$ and $ \| \bP_{\bU} - \bP_{\bV} \|_F=\sqrt{2} \| \sin \bTheta ( \bU,\bV ) \|_F $.
	If $\| \bP_{\bU} - \bP_{\bV} \|_2 <1$, then $\hat{\bH}$ is orthonormal, $\| \bH - \hat{\bH } \|_2
	\leq \frac{\| \bP_{\bU} - \bP_{\bV} \|_2^2}{2-\| \bP_{\bU} - \bP_{\bV} \|_2^2}$,
	\begin{align*}
	& \| \bV \bH - \bU\|_F\leq \| \bV\hat{\bH} - \bU\|_F = \sqrt{2} \| \bH - \hat{\bH} \|_{*}^{1/2},\\
	&\frac{1}{\sqrt{2}} \| \bP_{\bU} - \bP_{\bV} \|_F \leq \| \bV\hat{\bH} - \bU\|_F
	\leq \frac{ \| \bP_{\bU} - \bP_{\bV} \|_F }{\sqrt{2 - \| \sin \bTheta ( \bU,\bV ) \|_2^2 }} .
	\end{align*}
\end{lem}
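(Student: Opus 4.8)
The plan is to diagonalize $\bH=\bV^T\bU$ and reduce every claim to a scalar computation in its singular values. Let $\bH=\bR\bSigma\bT^T$ be the SVD with $\bR,\bT\in\O_{K\times K}$ and $\bSigma=\diag(\sigma_1,\dots,\sigma_K)$; since $\|\bH\|_2\le\|\bU\|_2\|\bV\|_2=1$ we have $\sigma_j=\cos\theta_j\in[0,1]$, and by the definition of the matrix sign function $\hat\bH=\bR\bar\bSigma\bT^T$, where $\bar\bSigma$ replaces each positive $\sigma_j$ by $1$ and keeps the zero ones at $0$. For the two unconditional identities: expanding exactly as in \eqref{eq:2.1}, $\|\bP_\bU-\bP_\bV\|_F^2=2K-2\|\bH\|_F^2=2\sum_j(1-\sigma_j^2)=2\|\sin\bTheta(\bU,\bV)\|_F^2$, while $\|\bP_\bU-\bP_\bV\|_2=\max_j\sin\theta_j=\|\sin\bTheta(\bU,\bV)\|_2$ is the classical CS-decomposition identity for the gap between two equal-rank orthogonal projections, for which I would cite \cite{Ste90}.

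Next, assume $s:=\|\bP_\bU-\bP_\bV\|_2<1$, i.e.\ $\max_j\sin\theta_j<1$, so every $\sigma_j=\cos\theta_j$ is strictly positive and $\bar\bSigma=\bI_K$. Then $\hat\bH=\bR\bT^T$ is a product of two $K\times K$ orthogonal matrices, hence orthonormal. Moreover $\bH-\hat\bH=\bR(\bSigma-\bI_K)\bT^T$, so $\|\bH-\hat\bH\|_2=1-\min_j\sigma_j=1-\sqrt{1-s^2}$ (using $\min_j\sigma_j^2=1-\max_j\sin^2\theta_j=1-s^2$), and the stated bound reduces to the elementary inequality $1-\sqrt{1-s^2}=\frac{s^2}{1+\sqrt{1-s^2}}\le\frac{s^2}{2-s^2}$, valid because $1-s^2\le\sqrt{1-s^2}$ on $[0,1)$.

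For the assertions about $\|\bV\bH-\bU\|_F$ and $\|\bV\hat\bH-\bU\|_F$, use that $\bV$ has orthonormal columns: for any $\bA\in\R^{K\times K}$ the decomposition $\bV\bA-\bU=\bV(\bA-\bH)-(\bI_d-\bP_\bV)\bU$ is orthogonal (since $\bV^T(\bI_d-\bP_\bV)=\mathbf 0$), so $\|\bV\bA-\bU\|_F^2=\|\bA-\bH\|_F^2+\|(\bI_d-\bP_\bV)\bU\|_F^2$; comparing $\bA=\bH$ with $\bA=\hat\bH$ gives $\|\bV\bH-\bU\|_F\le\|\bV\hat\bH-\bU\|_F$. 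Plugging in $\|(\bI_d-\bP_\bV)\bU\|_F^2=K-\|\bH\|_F^2=\sum_j(1-\sigma_j^2)$ and $\|\hat\bH-\bH\|_F^2=\sum_j(1-\sigma_j)^2$, the two sums telescope via $(1-\sigma_j)^2+(1-\sigma_j^2)=2(1-\sigma_j)$, whence $\|\bV\hat\bH-\bU\|_F^2=2\sum_j(1-\sigma_j)=2\|\bH-\hat\bH\|_*$, i.e.\ $\|\bV\hat\bH-\bU\|_F=\sqrt{2}\,\|\bH-\hat\bH\|_*^{1/2}$. The remaining two-sided bound is then a scalar comparison among $\sum_j(1-\sigma_j)$, $\sum_j(1-\sigma_j^2)$ and $\sigma_{\min}:=\min_j\sigma_j$: the lower bound $\frac{1}{2}\|\bP_\bU-\bP_\bV\|_F^2=\sum_j(1-\sigma_j^2)\le 2\sum_j(1-\sigma_j)=\|\bV\hat\bH-\bU\|_F^2$ is just $\sum_j(1-\sigma_j)^2\ge 0$, and the upper bound follows from $2-\|\sin\bTheta(\bU,\bV)\|_2^2=1+\sigma_{\min}^2$ together with the term-by-term inequality $(1+\sigma_{\min}^2)(1-\sigma_j)\le(1+\sigma_j)(1-\sigma_j)$, which holds since $\sigma_{\min}^2\le\sigma_{\min}\le\sigma_j$.

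All the computations are elementary once the SVD reduction is in place; the only genuine care needed is bookkeeping with conventions --- confirming that $\sgn(\bH)=\bR\bar\bSigma\bT^T$ is genuinely orthonormal under $s<1$, and that the principal-angle/projection dictionary (in particular the spectral-norm identity $\|\bP_\bU-\bP_\bV\|_2=\|\sin\bTheta(\bU,\bV)\|_2$) is invoked correctly. That dictionary is the only place an external citation is needed; everything else is self-contained.
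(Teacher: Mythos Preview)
Your proposal is correct and follows essentially the same route as the paper: both proofs reduce everything to scalar computations in the singular values $\sigma_j=\cos\theta_j$ of $\bH$, cite \cite{Ste90} for the projection--principal-angle dictionary, and verify the remaining bounds via the elementary inequalities $\tfrac12\sin^2\theta\le 1-\cos\theta\le \sin^2\theta/(2-\sin^2\theta)$ (equivalently, your $(1+\sigma_{\min}^2)(1-\sigma_j)\le 1-\sigma_j^2$). Your orthogonal decomposition $\bV\bA-\bU=\bV(\bA-\bH)-(\bI_d-\bP_\bV)\bU$ is a slightly cleaner way to obtain $\|\bV\bH-\bU\|_F\le\|\bV\hat\bH-\bU\|_F$ and $\|\bV\hat\bH-\bU\|_F^2=2\sum_j(1-\sigma_j)$ than the paper's direct expansion, but the argument is otherwise the same.
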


\begin{proof}
	By the Theorem 5.5 in Chapter I, \cite{Ste90}, the singular values of $ \bP_{\bU} - \bP_{\bV}$ are $\sin \theta_K, \sin \theta_K, \sin \theta_{K-1}, \sin \theta_{K-1}, \cdots, \sin \theta_1, \sin \theta_1, 0,\cdots,0$. This immediately leads to $\| \bP_{\bU} - \bP_{\bV} \|_2 = \| \sin \bTheta ( \bU,\bV ) \|_2$ and $\| \bP_{\bU} - \bP_{\bV} \|_F = \sqrt{2}\| \sin \bTheta ( \bU,\bV ) \|_F$.
	
	When $\| \sin \bTheta ( \bU,\bV ) \|_2 = \| \bP_{\bU} - \bP_{\bV}\|_2<1$, we have $\theta_K<\pi/2$. Thus the smallest singular value of $\bH$ is $\sigma_K = 1-\cos \theta_K >0$, and $\hat{\bH}$ is orthonormal. Observe that
	\begin{align}\label{2ndDK-eqn-1213}
	\| \bV \hat{\bH} -\bU\|_F^2 & =\|\bV \hat{\bH}\|_F^2+\|\bU\|_F^2-2\Tr(\hat{\bH}^T \bV^T \bU)=2K-2\Tr(\hat{\bH}^T\bH) \notag \\
	&= 2 \sum_{j=1}^K ( 1 - \sigma_j) = 2 \| \bH- \hat{\bH} \|_{*}.
	\end{align}
	Hence $\| \bV \bH -\bU\|_F \leq \| \bV \hat{\bH} -\bU\|_F$ follows from
	\begin{align*}
	\| \bV \bH -\bU\|_F^2 &=\|\bV \bH\|_F^2+\|\bU\|_F^2-2\Tr(\bH^T \bV^T \bU)=K-\| \bH\|_F^2 \\
	&= \sum_{j=1}^K ( 1 - \sigma_j^2) =
	\sum_{j=1}^K ( 1 - \sigma_j)(1+\sigma_j) \leq 2 \sum_{j=1}^K ( 1 - \sigma_j)=\| \bV \hat{\bH} -\bU\|_F^2.
	\end{align*}
	
	For any $\theta\in[0,\pi/2)$, we have $1- \cos \theta = \frac{1 - \cos^2 \theta}{1+\cos \theta} = \frac{\sin^2 \theta}{2-(1-\cos\theta)}$, which leads to $\frac{1}{2} \sin^2 \theta \leq 1-\cos\theta \leq \sin^2 \theta$ and furthermore, $1 - \cos\theta = \frac{\sin^2 \theta}{2-(1-\cos\theta)} \leq  \frac{\sin^2 \theta}{2-\sin^2 \theta}$. Hence
	\begin{align*}
	&\| \bV \hat{\bH} -\bU\|_F^2 = 2 \sum_{j=1}^K ( 1 - \sigma_j) = 2 \sum_{j=1}^K ( 1 - \cos\theta_j) \geq \sum_{j=1}^K \sin^2\theta_j = \frac{1}{2} \| \bP_{\bU} - \bP_{\bV} \|_F^2,\\
	&\| \bV \hat{\bH} -\bU\|_F^2 \leq  2 \sum_{j=1}^K \frac{ \sin^2\theta_j }{2-\sin^2\theta_j}
	\leq \frac{2 \sum_{j=1}^K \sin^2\theta_j }{2-\sin^2\theta_K}
	= \frac{ \| \bP_{\bU} - \bP_{\bV} \|_F^2 }{ 2-\| \sin \bTheta ( \bU,\bV ) \|_2^2 } ,\\
	&\| \bH - \hat{\bH} \|_2 = 1-\sigma_K =1- \cos \theta_K \leq \frac{\sin^2 \theta_K}{2-\sin^2\theta_K}
	=\frac{\| \sin \bTheta ( \bU,\bV ) \|_2^2}{2-\| \sin \bTheta ( \bU,\bV ) \|_2^2}.
	\end{align*}
\end{proof}

\begin{lem}
	\label{lem:DK-strong}
Consider the settings for Lemma \ref{lem:2} and define $\bH = \hat{\bU}^T \bU$ and $\hat{\bH} = \sgn(\bH)$.	When $\varepsilon=\|\bE\|_{2}/\Delta \leq 1/10$, we have $\| \hat{\bU} \hat{\bU}^T - \bU \bU^T \|_2 \leq \varepsilon / (1-\varepsilon)$,
	\begin{align}
	&\frac{\| f(\bE \bU) \|_F}{1 + 5 \varepsilon} \leq
	\| \hat{\bU}\hat{\bH}-\bU \|_F \leq  \frac{ \| f(\bE \bU) \|_F }{1- 5 \varepsilon},
	\label{lem:DK-strong-1}\\
	&\|\hat{\bU}\hat{\bH}-\bU-f(\bE\bU)\|_F\leq 9 \varepsilon \| f(\bE \bU) \|_F,
	\label{lem:DK-strong-2}\\
	&\frac{ \sqrt{2} \| f(\bE \bU ) \|_F }{1 + 7\varepsilon} \leq \| \hat{\bU} \hat\bU^T -  \bU \bU^T \|_F \leq  \frac{\sqrt{2} \| f(\bE \bU ) \|_F}{1-7\varepsilon},
	\label{lem:DK-strong-3}\\
	&\|  \hat{\bU} \hat{\bU}^T  - \bU \bU^T - [ f(\bE\bU)\bU^T+\bU f(\bE \bU)^T]\|_F
	\leq 24 \varepsilon \| f(\bE\bU)\|_F.\label{lem:DK-strong-4}
	\end{align}
	Besides, $\| f(\bE\bU)\bU^T+\bU f(\bE \bU)^T \|_F =\sqrt{2} \| f(\bE \bU ) \|_F$.
\end{lem}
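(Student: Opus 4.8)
\emph{Overall strategy.} I would reduce all five inequalities to one quantitative comparison between the matrix $\hat\bU\hat\bH$ and $\bU$. First, since right‑multiplying $\hat\bU$ by an orthogonal $\bOmega$ changes neither $\hat\bU\hat\bU^T$ nor $\hat\bU\hat\bH$ (because $\hat\bH=\sgn(\hat\bU^T\bU)$ transforms to $\bOmega^T\hat\bH$), I may assume $\bM:=\bU^T\hat\bU\succeq\bzero$, so $\hat\bH=\bI_K$ and $\hat\bU\hat\bH=\hat\bU$. Put $\bP=\bU\bU^T$, $\bP_\perp=\bI-\bP=\sum_{i\notin S}\bu_i\bu_i^T$, and $\bR=\bP_\perp\hat\bU$. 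From $\hat\bU^T\hat\bU=\bI$ we get $\bM^2+\bR^T\bR=\bI$, hence $\bM=(\bI-\bR^T\bR)^{1/2}$ and $\bI-\bM=\bR^T\bR(\bI+\bM)^{-1}$, so $\hat\bU\hat\bH-\bU=\bR-\bU(\bI-\bM)$ with $\|\bI-\bM\|_F\le\tan(\theta_1/2)\,\|\bR\|_F$, where $\theta_1$ is the largest principal angle between $\col(\bU)$ and $\col(\hat\bU)$; also $\|\bR\|_2=\sin\theta_1$ and $\|\bR\|_F=\|\sin\bTheta(\bU,\hat\bU)\|_F$. The projector‑level facts I will need — $\|\hat\bU\hat\bU^T-\bU\bU^T\|_2=\sin\theta_1$, $\|\hat\bU\hat\bU^T-\bU\bU^T\|_F=\sqrt2\,\|\bR\|_F$, and the two‑sided comparison of $\|\hat\bU\hat\bH-\bU\|_F$ with $\|\hat\bU\hat\bU^T-\bU\bU^T\|_F$ — are all supplied by Lemma \ref{2ndDK-lem-projection}. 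The last identity $\|f(\bE\bU)\bU^T+\bU f(\bE\bU)^T\|_F=\sqrt2\,\|f(\bE\bU)\|_F$ is immediate from $f(\bE\bU)^T\bU=\bzero$ (each column of $f(\bE\bU)$ lies in $\col(\bP_\perp)$).

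\emph{Sylvester relation and linear term.} Since $\hat\bA\hat\bU=\hat\bU\,\diag(\hat\lambda_{s+1},\cdots,\hat\lambda_{s+K})$ and $\bP_\perp\bA=\bA\bP_\perp$, projecting onto $\col(\bP_\perp)$ gives $\bA\bR-\bR\,\diag(\hat\lambda_{s+j})=-\bP_\perp\bE\hat\bU$. By Weyl's inequality each $\hat\lambda_{s+j}$ lies within $\|\bE\|_2$ of $[\lambda_{s+K},\lambda_{s+1}]$, hence at distance $\ge\Delta(1-\varepsilon)$ from $\{\lambda_i:i\notin S\}$; so this Sylvester equation has the unique solution with columns $\br_j=-\sum_{i\notin S}(\lambda_i-\hat\lambda_{s+j})^{-1}(\bu_i^T\bE\hat\bu_{s+j})\bu_i$, and $\|\bR\|_2\le\|\bE\|_2/(\Delta(1-\varepsilon))=\varepsilon/(1-\varepsilon)$, which is the first assertion. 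I then substitute $\hat\bu_{s+j}=\bu_{s+j}+(\bU(\bM-\bI))_{\cdot j}+\br_j$ into the formula for $\br_j$ and compare with $f(\bE\bU)_j=-\bG_j\bE\bu_{s+j}=-\sum_{i\notin S}(\lambda_i-\lambda_{s+j})^{-1}(\bu_i^T\bE\bu_{s+j})\bu_i$, splitting $\br_j-f(\bE\bU)_j$ into (i) a denominator‑perturbation part and (ii) a numerator‑perturbation part. For (i), after factoring out $\hat\lambda_{s+j}-\lambda_{s+j}$ the remaining factor is $\sum_{i\notin S}\frac{\bu_i^T\bE\bu_{s+j}}{(\lambda_i-\hat\lambda_{s+j})(\lambda_i-\lambda_{s+j})}\bu_i$, whose norm is $\le(\Delta(1-\varepsilon))^{-1}\|\bG_j\bE\bu_{s+j}\|=(\Delta(1-\varepsilon))^{-1}\|f(\bE\bU)_j\|$; for (ii), the correction has norm $\le(\Delta(1-\varepsilon))^{-1}\|\bE\|_2(\|\br_j\|+\|(\bI-\bM)_{\cdot j}\|)$. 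Summing over $j$ in quadrature yields
\[
\|\bR-f(\bE\bU)\|_F\le\tfrac{\varepsilon}{1-\varepsilon}\|f(\bE\bU)\|_F+\tfrac{\varepsilon}{1-\varepsilon}\bigl(\|\bR\|_F+\|\bI-\bM\|_F\bigr).
\]

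\emph{Bootstrap and assembly.} Feeding in $\|\bI-\bM\|_F\le\tan(\theta_1/2)\|\bR\|_F$ together with $\tan(\theta_1/2)\le\frac{\varepsilon/(1-\varepsilon)}{1+\sqrt{1-(\varepsilon/(1-\varepsilon))^2}}$ turns the last display into a scalar inequality for $\|\bR\|_F$ in terms of $\|f(\bE\bU)\|_F$ and $\varepsilon$, whose solution is $\|\bR\|_F=(1+O(\varepsilon))\|f(\bE\bU)\|_F$ with explicit constants. Since $\|\hat\bU\hat\bH-\bU\|_F^2=\|\bR\|_F^2+\|\bI-\bM\|_F^2$ and $\|\hat\bU\hat\bH-\bU-f(\bE\bU)\|_F\le\|\bR-f(\bE\bU)\|_F+\|\bI-\bM\|_F$, this gives \eqref{lem:DK-strong-1} and \eqref{lem:DK-strong-2} once the resulting numerical inequalities are checked on $(0,1/10]$. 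Then \eqref{lem:DK-strong-3} follows from \eqref{lem:DK-strong-1}, the identity $\|\hat\bU\hat\bU^T-\bU\bU^T\|_F=\sqrt2\|\bR\|_F$, and the comparison in Lemma \ref{2ndDK-lem-projection}. For \eqref{lem:DK-strong-4}, set $\bW:=\hat\bU\hat\bH-\bU$; expanding $\hat\bU\hat\bU^T-\bU\bU^T=\bU\bW^T+\bW\bU^T+\bW\bW^T$ and subtracting $\bU f(\bE\bU)^T+f(\bE\bU)\bU^T$ leaves $2\|\bW-f(\bE\bU)\|_F+\|\bW\bW^T\|_F$, bounded via \eqref{lem:DK-strong-2} and $\|\bW\bW^T\|_F\le\|\bW\|_2\|\bW\|_F$ with $\|\bW\|_2\le\|\bR\|_2+\|\bI-\bM\|_2\le\varepsilon/(1-\varepsilon)+(\varepsilon/(1-\varepsilon))^2$.

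\emph{Main obstacle.} The conceptual point is step (i): a naive triangle‑inequality bound (as in Lemma \ref{lem:2}) only controls the residual by $\sqrt K\,\varepsilon^2$, whereas obtaining $\varepsilon\|f(\bE\bU)\|_F$ forces one to keep the resolvent weights $(\lambda_i-\lambda_{s+j})^{-1}$ intact so that $\bG_j$ is reconstructed inside the error term. The remaining obstacle is purely quantitative: the constants $5,7,9,24$ leave little room at $\varepsilon=1/10$, so the bootstrap must use the sharp estimate $\|\bI-\bM\|_F\le\tan(\theta_1/2)\|\bR\|_F$ (the cruder $\sin\theta_1\,\|\bR\|_F$ is not quite enough for \eqref{lem:DK-strong-1}), and the scalar inequalities must be verified uniformly for $\varepsilon\in(0,1/10]$, not just at the endpoint.
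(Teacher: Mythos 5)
There is a genuine gap. Your opening ``without loss of generality'' step is incompatible with the Sylvester step that your whole argument rests on.

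Right-multiplying $\hat\bU$ by an orthogonal $\bOmega$ to force $\bM=\bU^T\hat\bU\succeq\bzero$ (hence $\hat\bH=\bI$, $\hat\bU\hat\bH=\hat\bU$) does preserve $\hat\bU\hat\bU^T$ and $\hat\bU\hat\bH$, but it destroys the relation $\hat\bA\hat\bU=\hat\bU\,\diag(\hat\lambda_{s+1},\cdots,\hat\lambda_{s+K})$: after replacing $\hat\bU$ by $\hat\bU\bOmega$ one only has $\hat\bA(\hat\bU\bOmega)=(\hat\bU\bOmega)(\bOmega^T\hat\bLambda\bOmega)$, and $\bOmega^T\hat\bLambda\bOmega$ is no longer diagonal, so the Sylvester equation $\bA\bR-\bR\,\diag(\hat\lambda_{s+j})=-\bP_\perp\bE\hat\bU$ — and with it the explicit columnwise resolvent formula $\br_j=-\sum_{i\notin S}(\lambda_i-\hat\lambda_{s+j})^{-1}(\bu_i^T\bE\hat\bu_{s+j})\bu_i$ on which your (i)/(ii) split and the entire bootstrap depend — fails. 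If instead you keep the eigenvector basis (no reduction), then the Sylvester algebra is fine, but the two facts you feed into the bootstrap are false: $\hat\bU\hat\bH-\bU$ no longer equals $\bR-\bU(\bI-\bM)$, and the sharp estimate $\|\bI-\bM\|_F\le\tan(\theta_1/2)\|\bR\|_F$ is simply wrong for the non-symmetrized $\bM=\bU^T\hat\bU$. Indeed with repeated eigenvalues inside the block (which the lemma's $\Delta$ permits), one can have $\bE=\bzero$, $\bR=\bzero$, yet $\bM=\begin{pmatrix}0&1\\1&0\end{pmatrix}$, so $\|\bI-\bM\|_F=2$ while $\|\bR\|_F=0$. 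One also cannot achieve the reduction by the allowed degrees of freedom in choosing $\hat\bU$: when the $\hat\lambda_{s+j}$ are distinct, the only $\bOmega$ that preserve the eigenvector structure are diagonal $\pm1$ matrices, which do not symmetrize a general $\bM$. So as written the bootstrap inequality mixes quantities from two incompatible choices of $\hat\bU$, and the key estimates \eqref{lem:DK-strong-1}--\eqref{lem:DK-strong-2} do not follow.

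For comparison, the paper avoids this trap by never presuming $\hat\bH=\bI$. It works throughout with the raw $\bH=\hat\bU^T\bU$ in the resolvent-type step: it defines $L(\bV)=\bA\bV-\bV\bLambda$ (with $\bLambda=\diag(\lambda_{s+j})$, the \emph{unperturbed} eigenvalues, so $L$ is block-diagonal and $f(L(\bV))=-\bP_\perp\bV$ holds identically), expands $L(\hat\bU\bH)=-\bE\hat\bU\bH+\hat\bU(\hat\bLambda-\bLambda)\bH+\hat\bU(\bLambda\bH-\bH\bLambda)$, and maps each term through $f$. This gives control of $\|\bP_\perp\hat\bU\bH-f(\bE\bU)\|_F$ without any alignment assumption, and the conversion from $\bH$ to $\hat\bH$ is handled separately via $\|\bH-\hat\bH\|_2\le\frac{\|\hat\bP-\bP\|_2^2}{2-\|\hat\bP-\bP\|_2^2}$ from Lemma \ref{2ndDK-lem-projection}. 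The extra commutator term $\hat\bU(\bLambda\bH-\bH\bLambda)$ is exactly the price paid for not being able to align $\hat\bU$ with $\bU$; your proposal silently drops it by assuming an alignment that cannot be made. A correct repair of your scheme would have to keep $\bR$ and $\bM$ in the eigenvector basis, run the Sylvester comparison against $\hat\bU\bH$ rather than $\hat\bU$, and only at the end translate to $\hat\bU\hat\bH$ — which is essentially what the paper's $L$-based decomposition does.
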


\begin{proof}
	Define $\hat{\bP} = \hat{\bU} \hat{\bU}^T$, $\bP = \bU\bU^T$ and $\bP_{\perp} = \bI- \bP$. The Davis-Kahan $\sin\bTheta$ theorem \citep{DKa70} and Lemma \ref{2ndDK-lem-projection} force that $\delta  \| \hat{\bP} - \bP \|_2 \leq \| \bE \bP \|_2 \leq \| \bE \|_2$, where $\delta = \min\{  ( \hat{\lambda}_s - \lambda_{s+1} )_{+}, ( \lambda_{s+r} - \hat{\lambda}_{s+r+1} )_{+} \}$ and we define $x_+ = \max\{ x,0 \}$ for $x\in\R$.
	Since the Weyl's inequality \cite[Corollary IV.4.9]{Ste90} leads to $\delta \geq \Delta - \| \bE\|_2 = (1-\varepsilon) \Delta$, we get $ \| \hat{\bP} - \bP \|_2 \leq \varepsilon / (1-\varepsilon)$.
	
	To attack (\ref{lem:DK-strong-1}) and (\ref{lem:DK-strong-2}), we divide the difference
	\begin{equation}\label{2ndDK-bound-1}
	\hat{\bU}\hat{\bH}-\bU-f(\bE \bU)=[\bP_{\perp}\hat{\bU}\bH - f(\bE \bU) ]+\bP_{\perp}\hat{\bU}(\hat{\bH}-\bH)+(\bP\hat{\bU}\hat{\bH}-\bU)
	\end{equation}
	and conquer the terms separately. Since $\varepsilon<1/2$, the first claim in Lemma \ref{lem:DK-strong} yields $\| \hat{\bU} \hat{\bU}^T - \bU \bU^T \|_2 <1$. Then according to Lemma \ref{2ndDK-lem-projection}, $\hat{\bH}$ is orthonormal,
	\begin{align*}
	&\|\bP_{\perp}\hat{\bU}(\hat{\bH}-\bH)\|_F
	=\| \bP_{\perp} ( \hat{\bU} \hat{\bH} - \bU ) \hat{\bH}^T (\hat{\bH}-\bH) \|_F
	\leq \|  \hat{\bU} \hat{\bH} - \bU \|_F \|\hat{\bH}-\bH\|_2,
	\end{align*}
	and
	\begin{align}
	\|\bP\hat{\bU}\hat{\bH}-\bU\|_F
	&=\|\bU( \bH^T \hat{\bH} - \bI ) \|_F \leq \| \bH^T \hat{\bH} - \bI  \|_F
	=\| (\bH - \hat{\bH})^T \hat{\bH} \|_F
	=\| \bH - \hat{\bH} \|_F \notag \\
	&\leq \| \bH - \hat{\bH} \|_2^{1/2} \| \bH - \hat{\bH} \|_{*}^{1/2}=
	\| \bH - \hat{\bH} \|_2^{1/2} \| \hat{\bU} \hat{\bH} -\bU \|_F /\sqrt{2}.
	\label{2ndDK-bound-1.5}
	\end{align}
	Observe that when $\varepsilon \leq 1/10$, Lemma \ref{2ndDK-lem-projection} forces that
	\begin{align}
	\| \bH - \hat{\bH} \|_2
	\leq \frac{\| \hat{\bP} - \bP  \|_2^2}{2-\| \hat{\bP} - \bP  \|_2^2}
	\leq \frac{\left( \frac{\varepsilon}{1-\varepsilon} \right)^2 }{ 2 -\left( \frac{\varepsilon}{1-\varepsilon} \right)^2 }
	= \frac{ \varepsilon^2 }{ 2 (1-\varepsilon)^2 - \varepsilon^2 }
	\leq  \frac{5}{8} \varepsilon^2 \leq \frac{1}{16} \varepsilon.
	\label{2ndDK-bound-H}
	\end{align}
	Combining the estimates above yields
	\begin{align}\label{2ndDK-bound-2}
	\|\bP_{\perp}\hat{\bU}(\hat{\bH}-\bH) +  ( \bP \hat{\bU}\hat{\bH}-\bU )\|_F
	\leq \left( \frac{1}{16} +\sqrt{\frac{5}{16}} \right) \varepsilon \|  \hat{\bU} \hat{\bH} - \bU \|_F
	\leq \frac{16}{25} \varepsilon \|  \hat{\bU} \hat{\bH} - \bU \|_F.
	\end{align}

	We start to work on $\bP_{\perp}\hat{\bU}\bH-f(\bE \bU)$. Define $\bLambda=\diag(\lambda_{s+1},\cdots,\lambda_{s+K})$, and $L(\bV)=\bA \bV-\bV\bLambda$ for $\bV\in\R^{d\times K}$. Note that $L(\bv_1,\cdots,\bv_K)=((\bA-\lambda_{s+1}\bI)\bv_1,\cdots,(\bA-\lambda_{s+K}\bI)\bv_K)$, and $\bG_j(\bA-\lambda_{s+j}\bI)=\bP_{\perp}$ holds for all $j\in [K]$. As a result, $f(L(\bV))=-\bP_{\perp}\bV$ for any $\bV\in\R^{d\times K}$. This motivates us to work on $L(\hat{\bU} \bH)$ in order to study $\bP_{\perp} \hat{\bU} \bH$.

	Let $\hat{\bLambda}=\diag(\hat{\lambda}_{s+1},\cdots,\hat{\lambda}_{s+K})$. By definition, $\hat{\bA}\hat{\bU}=\hat{\bU}\hat{\bLambda}$ and
	\begin{align}
	L(\hat{\bU}\bH) & =\bA \hat{\bU}\bH-\hat{\bU}\bH\bLambda\notag \\
	&=(\bA-\hat{\bA}) \hat{\bU}\bH +(\hat{\bA}\hat{\bU}-\hat{\bU}\hat{\bLambda})\bH + \hat{\bU}(\hat{\bLambda}-\bLambda)\bH+ \hat{\bU}(\bLambda\bH-\bH \bLambda) \notag \\
	&=-\bE \hat{\bU}\bH  + \hat{\bU}(\hat{\bLambda}-\bLambda)\bH + \hat{\bU}(\bLambda\bH-\bH\bLambda).
	\label{2ndDK-decomposition-1212}
	\end{align}
	
	Now we study the images of these three terms under the linear mapping $f$. First, the facts $\|f(\cdot)\|_F\leq \Delta^{-1}\|\cdot\|_F$ and $\|\hat{\bU} \bH-\bU\|_F \leq \|\hat{\bU} \hat{\bH}-\bU\|_F$ (by Lemma \ref{2ndDK-lem-projection}) imply that
	\begin{align}\label{2ndDK-term1}
	& \|f(\bE\hat{\bU}\bH)-f(\bE\bU)\|_F = \|f [ \bE ( \hat{\bU}\bH - \bU) ]\|_F \leq \Delta^{-1}\|\bE(\hat{\bU}\bH-\bU)\|_F \notag\\
	&\leq
	\Delta^{-1}\|\bE\|_2 \|\hat{\bU}\bH-\bU\|_F
	\leq \varepsilon \|\hat{\bU} \hat{\bH}-\bU\|_F.
	\end{align}
	Second, the definition of $f$ forces $f(\bU \bM)=\mathbf{0}$ for all $\bM\in\R^{K\times K}$.
	\begin{align}
	&\|f[\hat{\bU}(\hat{\bLambda}-\bLambda)\bH]\|_F=\|f[(\hat{\bU}\hat{\bH}-\bU)\hat{\bH}^{T} (\hat{\bLambda}-\bLambda)\bH]\|_F \notag \\
	& \leq \Delta^{-1} \|(\hat{\bU} \hat{\bH}-\bU)\hat{\bH}^{T} (\hat{\bLambda}-\bLambda)\bH\|_F
	\leq \Delta^{-1} \|\hat{\bU}\hat{\bH}-\bU\|_F\|  \hat{\bH}^T (\hat{\bLambda}-\bLambda)\bH\|_{2}\notag \\
	&\leq \Delta^{-1} \|\hat{\bU}\hat{\bH}-\bU\|_F  \|\bE \|_{2}\|\bH\|_{2}
	\leq \varepsilon \|\hat{\bU} \hat{\bH}-\bU\|_F.\label{2ndDK-term2}
	\end{align}
	Here we applied Weyl's inequality $\|\hat{\bLambda}-\bLambda\|_{2}\leq\|\bE\|_{2}$ and used the fact that $\|\bH\|_{2}=\|\hat{\bU}^T\bU\|_{2}\leq 1$. Third, by similar tricks we work on the third term
	\begin{align}
	&\|f[\hat{\bU} (\bLambda\bH-\bH\bLambda)]\|_F=\|f[(\hat{\bU} \hat{\bH}-\bU) \hat{\bH}^T (\bLambda\bH-\bH \bLambda)]\|_F\notag \\
	&\leq \Delta^{-1} \|(\hat{\bU} \hat{\bH} -\bU) \hat{\bH}^T (\bLambda\bH-\bH \bLambda)\|_F
	\leq \Delta^{-1} \|\hat{\bU} \hat{\bH} -\bU\|_F \|\bLambda\bH-\bH \bLambda\|_{2}.
	\label{2ndDK-term3}
	\end{align}
	As an intermediate step, we are going to show that $\|\bLambda\bH-\bH \bLambda\|_{2}\leq 2\|\bE\|_{2}$. On the one hand, $\hat{\bA}\hat{\bU}=\hat{\bU}\hat{\bLambda}$ yields
	\begin{equation}\label{2ndDK-intermediate-L-1212}
	\begin{split}
	&L(\hat{\bU})
	=(\bA-\hat{\bA}) \hat{\bU} +(\hat{\bA}\hat{\bU}-\hat{\bU}\hat{\bLambda}) + \hat{\bU}(\hat{\bLambda}-\bLambda)
	=-\bE \hat{\bU}  + \hat{\bU}(\hat{\bLambda}-\bLambda).\\
	\end{split}
	\end{equation}
	On the other hand, let $\bU_1=(\bu_1,\cdots,\bu_s,\bu_{s+K+1},\cdots,\bu_d)$, $ \hat{\bU}_1=(\hat{\bu}_1,\cdots,\hat{\bu}_s,\hat{\bu}_{s+K+1},\cdots,\hat{\bu}_d)$, and $\bLambda_1=\diag(\lambda_1,\cdots,\lambda_s,\lambda_{s+K+1},\cdots,\lambda_d)$.
	We have
	\begin{equation*}
	\begin{split}
	&\bA \hat{\bU}=
	\begin{pmatrix}
	\bU & \bU_1
	\end{pmatrix}
	\begin{pmatrix}
	\bLambda & \mathbf{0}\\
	\mathbf{0} & \bLambda_1
	\end{pmatrix}
	\begin{pmatrix}
	\bU^T\\
	\bU_1^T
	\end{pmatrix}
	\hat{\bU}
	=\begin{pmatrix}
	\bU & \bU_1
	\end{pmatrix}
	\begin{pmatrix}
	\bLambda \bH^T\\
	\bLambda_1 \bU_1^T \hat{\bU}
	\end{pmatrix}
	,\\
	&\hat{\bU} \bLambda= \begin{pmatrix}
	\bU & \bU_1
	\end{pmatrix}
	\begin{pmatrix}
	\bU^T \\
	\bU_1^T
	\end{pmatrix}
	\hat{\bU} \bLambda=
	\begin{pmatrix}
	\bU & \bU_1
	\end{pmatrix} \begin{pmatrix}
	\bH^T \bLambda\\
	\bU_1^T \hat{\bU} \bLambda
	\end{pmatrix}
	.
	\end{split}
	\end{equation*}
	As a result, (\ref{2ndDK-intermediate-L-1212}) yields that
	\begin{equation}\label{2ndDK-term0}
	\begin{split}
	&\|\bLambda \bH-\bH \bLambda\|_{2}
	=\|\bH^T \bLambda -\bLambda \bH^T \|_{2}
	\leq \|L(\hat{\bU})\|_{2}= \|-\bE \hat{\bU}+\hat{\bU}(\hat{\bLambda}-\bLambda)\|_{2}\leq 2\|\bE\|_{2}.
	\end{split}
	\end{equation}
	By combining (\ref{2ndDK-decomposition-1212}), (\ref{2ndDK-term1}), (\ref{2ndDK-term2}), (\ref{2ndDK-term3}) and (\ref{2ndDK-term0}), we obtain that
	\begin{equation}\label{2ndDK-bound-3}
	\begin{split}
	&\|\bP_{\perp}\hat{\bU}\bH-f(\bE\bU)\|_F=\|-f[L(\hat{\bU}\bH)]-f(\bE\bU)\|_F
	\leq 4 \varepsilon \|\hat{\bU} \hat{\bH}-\bU\|_F.
	\end{split}
	\end{equation}
	Based on (\ref{2ndDK-bound-1}), (\ref{2ndDK-bound-2}) and (\ref{2ndDK-bound-3}), we obtain that
	\begin{equation}\label{2ndDK-bound-4}
	\| \hat{\bU}\hat{\bH}-\bU-f(\bE \bU) \|_F \leq \frac{116}{25}\varepsilon \| \hat{\bU} \hat{\bH} - \bU \|_F.
	\end{equation}
	It follows from the triangle's inequality that
	\begin{align*}
	&\frac{\| f(\bE \bU) \|_F}{1 + 5 \varepsilon } \leq
	\frac{\| f(\bE \bU) \|_F}{1 + 116 \varepsilon /25 } \leq
	\| \hat{\bU}\hat{\bH}-\bU \|_F \leq  \frac{ \| f(\bE \bU) \|_F }{1- 116 \varepsilon /25}
	\leq  \frac{ \| f(\bE \bU) \|_F }{1- 5 \varepsilon},\\
	&\| \hat{\bU}\hat{\bH}-\bU-f(\bE \bU) \|_F \leq \frac{116}{25}\varepsilon
	\frac{ \| f(\bE \bU) \|_F }{1-\frac{116}{25}\cdot\frac{1}{10}}
	\leq 8.66 \varepsilon \| f(\bE \bU) \|_F
	\leq 9 \varepsilon \| f(\bE \bU) \|_F.
	\end{align*}
	
	Hence we have proved (\ref{lem:DK-strong-1}) and (\ref{lem:DK-strong-2}). Now we move on to (\ref{lem:DK-strong-3}) and (\ref{lem:DK-strong-4}). Note that
	\begin{align}
	\hat{\bP} - \bP &= \hat{\bU} \hat{\bH} (\hat{\bU} \hat{\bH})^T - \bU \bU^T
	= (\hat{\bU} \hat{\bH} - \bU) (\hat{\bU} \hat{\bH})^T + \bU (\hat{\bU} \hat{\bH} - \bU)^T\notag\\
	&=(\hat{\bU} \hat{\bH} - \bU) ( \hat{\bU} \hat{\bH} - \bU )^T + (\hat{\bU} \hat{\bH} - \bU) \bU^T + \bU (\hat{\bU} \hat{\bH} - \bU)^T.\label{2ndDK-bound-5}
	\end{align}
	The first term is controlled by
	\begin{align}
	&\| (\hat{\bU} \hat{\bH} - \bU) ( \hat{\bU} \hat{\bH} - \bU )^T \|_F\leq \| \hat{\bU} \hat{\bH} - \bU \|_2 \| \hat{\bU} \hat{\bH} - \bU \|_F \notag \\
	& \leq ( \| \hat{\bU} \bH - \bU \|_2 + \| \hat{\bU} (\hat{\bH} - \bH) \|_2 ) \| \hat{\bU} \hat{\bH} - \bU \|_F \notag \\
	& = ( \| ( \hat{\bP} - \bP ) \bU \|_2 + \| \hat{\bU} (\hat{\bH} - \bH) \|_2 ) \| \hat{\bU} \hat{\bH} - \bU \|_F \notag\\
	&\leq  ( \|  \hat{\bP} - \bP \|_2 + \| \hat{\bH} - \bH \|_2 ) \| \hat{\bU} \hat{\bH} - \bU \|_F
	\notag \\&
	\leq \left( \frac{1}{1-\varepsilon} + \frac{1}{16}  \right) \varepsilon \| \hat{\bU} \hat{\bH} - \bU \|_F
	\leq 1.18  \varepsilon \| \hat{\bU} \hat{\bH} - \bU \|_F
	,\label{2ndDK-bound-6}
	\end{align}
	where the penultimate inequality uses $\|  \hat{\bP} - \bP \|_2 \leq \varepsilon/(1-\varepsilon)$ and (\ref{2ndDK-bound-H}).
	By defining $\bW = \hat{\bU} \hat{\bH} - \bU- f(\bE\bU)$ we can write
	\begin{align}
	(\hat{\bU} \hat{\bH} - \bU) \bU^T + \bU (\hat{\bU} \hat{\bH} - \bU)^T
	=[ f(\bE\bU) \bU^T + \bU f(\bE \bU)^T ] + ( \bW \bU^T + \bU \bW^T )  .
	\label{2ndDK-bound-7}
	\end{align}
	It is easily seen that
	\begin{align*}
	&\| \bW \bU^T + \bU \bW^T \|_F =  \| ( \bP_{\perp} \bW \bU^T  + \bP \bW \bU^T  ) +
	(   \bU \bW^T \bP_{\perp} +  \bU^T \bW \bP )\|_F\notag\\
	&\leq  \|  \bP_{\perp} \bW \bU^T   +    \bU \bW^T \bP_{\perp} \|_F
	+ \|  \bP \bW \bU^T  +  \bU^T \bW \bP \|_F\notag\\
	& = \left(  \| \bP_{\perp} \bW \bU^T \|_F^2 + \|  \bU \bW^T \bP_{\perp} \|_F^2  \right)^{1/2}
	+\|  \bP \bW \bU^T  +  \bU^T \bW \bP \|_F\notag\\
	& \leq \sqrt{2} \| \bP_{\perp} \bW \|_F + 2 \| \bP \bW \|_F.
	\end{align*}
	On the one hand, (\ref{2ndDK-bound-4}) forces that $\| \bP_{\perp} \bW \|_F  \leq \| \bW\|_F \leq \frac{116}{25}\varepsilon \| \hat{\bU} \hat{\bH} - \bU \|_F$. On the other hand, the fact $\bP f(\bE \bU) =\mathbf{0}$, (\ref{2ndDK-bound-1.5}) and (\ref{2ndDK-bound-H}) yield
	\[
	\| \bP\bW \|_F = \| \bP [ \hat{\bU} \hat{\bH} - \bU - f(\bE\bU) ] \|_F = \| \bP \hat{\bU} \hat{\bH} - \bU \|_F \leq
	\sqrt{\frac{5}{16}} \varepsilon \| \hat{\bU} \hat{\bH} - \bU\|_F.
	\]
	Hence
	\begin{align}
	&\| \bW \bU^T + \bU \bW^T \|_F  \leq \left( \frac{116}{25}\sqrt{2} +2\sqrt{\frac{5}{16}} \right) \varepsilon \| \hat{\bU} \hat{\bH} - \bU\|_F \leq 7.68 \varepsilon \| \hat{\bU} \hat{\bH} - \bU\|_F.
	\label{2ndDK-bound-8}
	\end{align}
	By collecting (\ref{2ndDK-bound-5}), (\ref{2ndDK-bound-6}), (\ref{2ndDK-bound-7}) and (\ref{2ndDK-bound-8}) we derive that
	\begin{align*}
	&\| \hat{\bP} -  \bP  - [ f(\bE \bU) \bU^T + \bU f (\bE\bU)^T ] \|_F  \leq 8.86 \varepsilon \| \hat{\bU} \hat{\bH} - \bU\|_F \notag\\
	& \leq 8.86 \varepsilon \frac{\| \hat{\bP} - \bP \|_F }{ \sqrt{2 - \| \hat{\bP} - \bP \|_2^2 } }
	\leq 8.86 \varepsilon \frac{\| \hat{\bP} - \bP \|_F }{ \sqrt{2 -1/9^2 } }
	\leq 6.29 \varepsilon \| \hat{\bP} - \bP \|_F,
	\end{align*}
	where we also used Lemma \ref{2ndDK-lem-projection} and $\| \hat{\bP} - \bP \|_2 \leq \frac{\varepsilon}{1-\varepsilon} \leq 1/9$. Therefore,
	\begin{align*}
	&\frac{ \| f(\bE \bU) \bU^T + \bU f (\bE\bU)^T  \|_F }{1 + 6.29\varepsilon} \leq \| \hat{\bP} -  \bP \|_F \leq  \frac{\| f(\bE \bU) \bU^T + \bU f (\bE\bU)^T  \|_F}{1-6.29\varepsilon},\notag\\
	&\| \hat{\bP} -  \bP  - [ f(\bE \bU) \bU^T + \bU f (\bE\bU)^T ] \|_F  \leq \frac{6.29\varepsilon}{1-6.29\varepsilon}\| f(\bE \bU) \bU^T + \bU f (\bE\bU)^T  \|_F \notag\\
	& \leq 16.96 \varepsilon
	\| f(\bE \bU) \bU^T + \bU f (\bE\bU)^T  \|_F.
	\end{align*}
	We finish the proof by
	\begin{align*}
	&\| f(\bE \bU) \bU^T + \bU f (\bE\bU)^T  \|_F^2\\
	&=\| f(\bE \bU) \bU^T \|_F^2 + \| \bU f (\bE\bU)^T  \|_F^2 + 2 \Tr \left(
	[ f(\bE \bU) \bU^T ]^T \bU f (\bE\bU)^T
	\right)\\
	&= \| f(\bE \bU) \bU^T \|_F^2 + \| \bU f (\bE\bU)^T  \|_F^2 + 0 = 2 \| f(\bE\bU)\|_F^2.
	\end{align*}
\end{proof}

\end{document}